\definecolor{commentgreen}{rgb}{0, 0.5, 0}
\newcommand{\algcom}[2][0.7]{%
  \hfill\makebox[#1\linewidth][l]{\color{commentgreen}\textit{// #2}}%
}
\newcommand{\algcomline}[1]{%
  \hspace{0pt}%
  \begingroup%
  \color{commentgreen}%
  \it%
  // #1%
  \endgroup%
}
\def\<#1|#2>{{\color{red}\sout{#1}} {\color{blue}\textbf{#2}}}
\newcommand{\NP}{\ensuremath{\mathcal{NP}}}
\def\O(#1){\ensuremath{\mathcal{O}(#1)}}
\title{Recognizing Optimal 1-Planar Graphs in Linear Time%
  \thanks{Supported by the Deutsche Forschungsgemeinschaft (DFG), grant
    Br835/18-1.}}
\titlerunning{Recognizing Optimal 1-Planar Graphs in Linear Time}
\author{Franz J. Brandenburg}
\institute{University of Passau \\
           94030 Passau \\
           Tel.: +49/851/509-3031 \\
           Fax: +49/851/509-3032\\
           \email{brandenb@informatik.uni-passau.de}
}
\begin{document}
\maketitle
\begin{abstract}
  A graph with $n$ vertices is 1-planar if it can be drawn in the plane such that each edge is
  crossed at most once, and is optimal if it has the maximum of $4n-8$
  edges.

  We show that optimal 1-planar graphs can be recognized in linear time. Our
  algorithm implements a graph reduction system with two rules,
  which can be used to reduce
  every optimal 1-planar graph  to an irreducible extended wheel graph. The
graph reduction system is non-deterministic, constraint, and
non-confluent.

\end{abstract}

\section{Introduction}

There has been recent interest in beyond planar graphs  that extend
planar graphs by restrictions on crossings. A particular example is
1-planar graphs, which were introduced by Ringel \cite{ringel-65}
and appear when a planar graph and its dual are drawn
simultaneously. A graph is 1-planar if it can be drawn in the plane
with at most one crossing per edge. In his introductory paper on
1-planar graphs, Ringel studied the coloring problem and observed
that a pair of crossing edges can be completed to $K_4$ by adding
planar edges. 1-planar graphs generalize 4-map graphs, which are the
graphs of adjacencies of nations of a map \cite{cgp-mg-02,
cgp-rh4mg-06}. Two nations are adjacent if they share a common
border or if there is a quadripoint where four countries meet, which
results in a $K_4$ in the 4-map graph.

The first study of structural properties of 1-planar graphs is by
Bodendiek, Schumacher, and Wagner \cite{bsw-bs-83,bsw-1og-84}. They
showed that 1-planar graphs with $n$ vertices have at most $4n-8$
edges and that there are such graphs for $n = 8$ and for all $n \geq
10$, and not for $n \leq 7$ and $n = 9$. They called 1-planar graphs
with $4n-8$ edges \emph{optimal} and observed that optimal 1-planar
graphs can be obtained from planar 3-connected quadrangulations by
adding a pair of crossing edges in each quadrangular face. In fact,
this is a characterization and a basis of our recognition algorithm.

As usual,   graphs are simple without self-loops and multiple edges,
and paths and cycles are simple, too. The degree of a vertex is the
number of incident edges or neighbors, and the \emph{local degree}
is the number of incident edges or neighbors when restricted to a
particular induced subgraph. 1-planar graphs are special concerning
their density, which is taken as an upper bound on the number of
edges in relation to the number of vertices. There are three
versions. A 1-planar graph $G$  is \emph{maximally dense} or
\emph{maximum} \cite{s-rm1pg-10} if there is no 1-planar graph  of
the same size with more edges. It is \emph{maximal
  1-planar} if the addition of any edge destroys 1-planarity and
  \emph{planar maximal} or triangulated \cite{cgp-rh4mg-06} if no
  further edge can be added without introducing a crossing.
  Clearly, maximally dense 1-planar graphs are maximal, which in turn
are optimal, but not conversely. Suzuki \cite{s-rm1pg-10} gave all
maximally dense graphs that are not optimal, namely, the complete
graphs for $n \leq 6$, $K_7-2e$ and six graphs with $9$ vertices and
$27$ edges.
 Brandenburg et al.
\cite{begghr-odm1p-13} showed that there are sparse maximal 1-planar
graphs with only $\frac{45}{17}n - \frac{84}{17}$ edges, which is
less than the $3n-6$ bound for maximal planar graphs. Such sparse
maximal 1-planar graphs have many vertices of degree two, whereas
optimal 1-planar graphs have degree at least six \cite{bsw-1og-84}.
Clearly, every maximal planar graph is planar maximal 1-planar,
however, a planar edge can be added to $K_5-e$ if $K_5-e$ is drawn
with a pair of crossing edges. Note that the terms planar maximal,
maximal, maximally dense, and optimal coincide for planar graphs.

An \emph{embedding} (drawing) $\mathcal{E}(G)$ of a graph is a
mapping of $G$ into the plane such that the vertices are mapped to
distinct points and the edges to simple Jordan curves between the
endpoints. It is \emph{planar} if (the Jordan curves of the) edges
do not cross and \emph{1-planar} if each edge is crossed at most
once. An embedding is a witness for planarity and 1-planarity,
respectively. For an algorithmic treatment, a planar embedding is
given by a rotation system, which describes the cyclic ordering of
the edges incident to each vertex, or by the sets of vertices,
edges, and faces. A 1-planar embedding $\mathcal{E}(G)$ is given by
an embedding of the planarization of $G$, which is obtained by
taking the crossing points of edges as virtual vertices
\cite{ehklss-tm1pg-13b}.

A planar embedding of a planar graph can be computed in linear time
as part (or extension) of a planarity test algorithm, see
\cite{p-pte-13}. Accordingly, 1-planarity of an embedding can be
tested in linear time via the planarization. However, computing a
1-planar embedding of a 1-planar graph is \NP-hard. The relationship
between planar graphs and their embeddings is well-understood.
Every 3-connected planar graph has a unique embedding on the sphere
and in the plane if the outer face is fixed \cite{w-pg-33}. The set
of all embeddings of a planar graph can be computed in linear time
and is stored in a  $SPQR$-tree \cite{dt-olpt-96, gm-ltist-01}.
Accordingly, one often uses a planar graph and one of its embeddings
interchangeably.

A 1-planar embedding partitions the edges into \emph{planar} and
\emph{crossing} edges. We color the planar edges black and the crossing ones
red. Other color schemes were used in \cite{ehklss-tm1pg-13b, el-racg1p-13,
  help-ft1pg-12, d-ds1pgd-13}. The \emph{black} or \emph{planar skeleton}
$P(\mathcal{E}(G))$ consists of the black edges and inherits its embedding
from the given 1-planar embedding. Vertex $u$ is called a \emph{black (red)
  neighbor} of vertex $v$ if the edge $(u,v)$ is black (red) in a 1-planar
embedding. A \emph{kite} is a 1-planar embedding of $K_4$ with a pair of
crossing edges and no other vertices in the inner (or outer) face defined by
the black edges. A $K_4$ has one planar and four non-planar embeddings which
differ by the edge coloring and the rotation system \cite{Kyncl-09}, see
Fig.\ \ref{K4-drawings}.

 1-planar embeddings are quite flexible, as the five
embeddings of $K_4$   \cite{Kyncl-09} and the $\NP$-hardness proof
of \cite{abgr-1pgrs-15} show. There is an extension of Whitney's
theorem by Schumacher \cite{s-s1pg-86} who proved that every
5-connected optimal 1-planar graph has a unique 1-planar embedding
with the exception of the extended wheel graphs, which have two
embeddings for graphs of size at least ten and six for the minimum
optimal 1-planar graph   with eight vertices. The extended wheel
graphs $XW_{2k}$   will be described in Section \ref{sect:prelim}.
Suzuki \cite{s-rm1pg-10} improved this result and dropped the
$5$-connectivity precondition.

\begin{figure}
   \centering
   \subfigure[planar]{
     \includegraphics[scale=0.3]{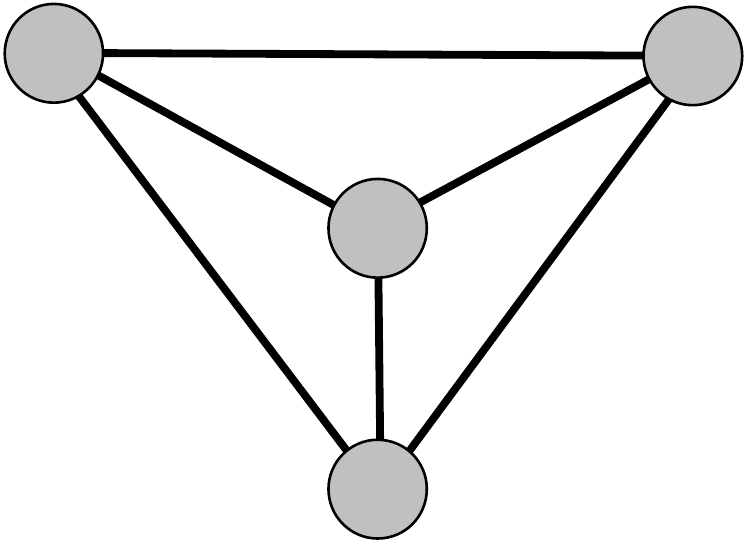}
   }
   \quad
   \subfigure[crossed as a kite]{
   \hspace{0.75cm}
     \includegraphics[scale=0.3]{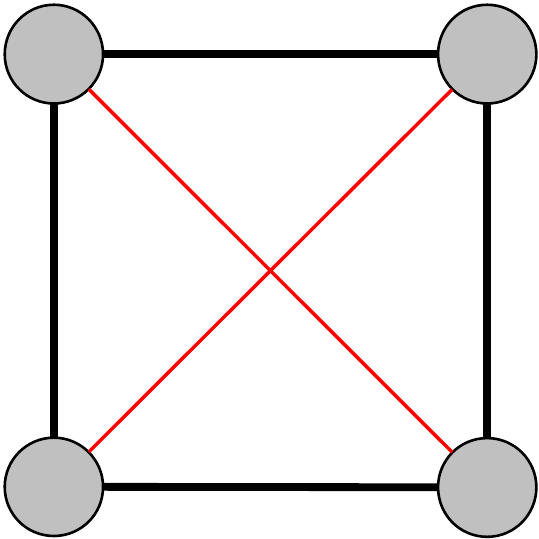}
     \hspace{0.75cm}
   }
   \caption{1-planar embeddings of $K_4$ }
   \label{K4-drawings}
\end{figure}

 A serious drawback of (most classes of) beyond planar graphs is
the general \NP-hardness of their recognition. For 1-planarity this
was proved by Grigoriev and Bodlaender \cite{GB-AGEFCE-07} and by
Korzhik and Mohar \cite{km-mo1ih-13}, and improved to hold for
graphs of bounded bandwidth, pathwidth, or treewidth
\cite{bce-pc1p-13}, for near planar graphs \cite{cm-aoepl-13}, and
for 3-connected 1-planar graphs with a given rotation system
\cite{abgr-1pgrs-15}. Moreover, the recognition of right angle
crossing graphs (RAC) \cite{abs-RACNP-12} and of fan-planar graphs
\cite{BGDMPT-fan-14, bcghk-rfpg-14} is \NP-hard.
On the other hand, Eades et al. \cite{ehklss-tm1pg-13b} introduced a
linear time testing algorithm for (planar) maximal 1-planar graphs
that are given with a rotation system. As aforesaid, 1-planarity of
an embedding can be tested in linear time. In addition, there are
linear time recognition algorithms if all vertices are in the outer
face. The resulting graphs are called  outer 1-planar  and were
first studied by Eggleton \cite{e-rdg-86}. It is not obvious that
outer 1-planar graphs are planar \cite{abbghnr-o1p-15}.
Independently, Auer at al.\ \cite{abbghnr-o1p-15}
  and Hong et al.\   \cite{heklss-ltao1p-15}
developed linear time recognition algorithms for outer 1-planar
graphs. Also, maximal outer-fan-planar graphs can be recognized in
linear time \cite{bcghk-rfpg-14}.
Chen et al. \cite{cgp-rh4mg-06} developed a cubic-time recognition
algorithm for hole-free 4-map graphs and observed  that the
3-connected planar maximal 1-planar graphs are exactly the
3-connected hole-free 4-map graphs \cite{cgp-mg-02}. The optimal
1-planar graphs are exactly the hole-free 4-map graphs with $4n-8$
edges  and thus recognizable in cubic time. Recently, Brandenburg
\cite{b-4m1pg-15} showed that maximal and planar maximal 1-planar
graphs can be recognized in $O(n^5)$ time.

Schumacher \cite{s-s1pg-86} defined a single-rule graph
transformation system on 1-planar embeddings and proved that every
$5$-connected optimal 1-planar graph is reducible to an extended
wheel graph  which is irreducible. His result was generalized by
Suzuki \cite{s-rm1pg-10} who added a second rule and thereby removed
the $5$-connectivity restriction. The reduction rules are defined on
an embedding and   extend the reduction rules for planar
$3$-connected   quadrangulations of Brinkmann et al.\
\cite{bggmtw-gsqs-05}.

 In this paper  we translate the reduction rules of Schumacher and Suzuki
 from 1-planar embeddings to 1-planar graphs and  show how to
 implement them efficiently. In consequence, the proof of existence
 for a reduction of an optimal 1-planar graph to an irreducible extended wheel graph
by Schumacher \cite{s-s1pg-86} and Suzuki \cite{s-rm1pg-10} is
transformed into an efficient algorithm. These proofs say  that a
($5$-connected) graph $G$ is optimal 1-planar if and only if there
exists a natural number $k$ and a computation by a sequence of
applications of reductions such that an extended wheel graph
$XW_{2k}$ is obtained from $G$, in symbols, $G \rightarrow^*
XW_{2k}$. Suzuki reverses direction and expands $XW_{2k}$ into $G$.
Again, one must guess the start $k$ or $XW_{2k}$ and the expansion
process.

We show that the usability of a reduction rule can be checked in
$O(1)$ time on graphs.  According to Brinkmann et. al.\
\cite{bggmtw-gsqs-05}, a feasible use of a reduction must preserve
  the given class, i.e., the optimal 1-planar
graphs. Thereby, we obtain a simple quadratic-time recognition
algorithm of optimal 1-planar graphs which is improved to a linear
time algorithm by a bookkeeping technique. It can be extended to
maximally dense 1-planar graphs and specialized to $5$-connected
optimal 1-planar graphs. Our algorithm improves upon the cubic
running time algorithm of Chen et al. \cite{cgp-rh4mg-06}, which
solves a more general problem and searches $4$-cycles and other
types of separators. Combinatorial properties of the  reductions
  are explored in \cite{b-rso1p-16}.

The paper is organized as follows: In the next Section we recall
some basic properties of optimal 1-planar graphs. In Section
\ref{sect:rules} we introduce  the reductions rules and show how to
apply them to graphs.   The  linear recognition algorithm for
optimal 1-planar graphs is established in Section
\ref{sect:lineartime}, and we conclude with some open problems on
1-planar graphs.

\section{Preliminaries} \label{sect:prelim}

Optimal 1-planar graphs have   special properties. Schumacher
\cite{s-s1pg-86} observed that there is a one-to-one correspondence
between optimal 1-planar graphs and their planar skeletons  which
are 3-connected quadrangulations. An optimal 1-planar graph is
obtained from a 3-connected quadrangulation by adding a pair of
crossing edges in each quadrilateral face to form a kite. Thus the
red edges are added to the black ones. A formal proof was given by
Suzuki \cite{s-rm1pg-10}. All vertices of an optimal 1-planar graph
have an even degree of at least six and there are at least eight
vertices of degree six, since in total there are $4n-8$ edges if the
given graph has $n$ vertices. The planar and the crossing edges
alternate in the rotation system of a 1-planar embedding of an
optimal 1-planar graph. Consider, for example, graph $B_{17}$ in
Fig.\ \ref{fig:B17} which has $17$ vertices, $60$ edges and an even
degree of at least six at each vertex. Is $B_{17}$ optimal 1-planar?

\begin{figure}
   \begin{center}
     \includegraphics[scale=0.5]{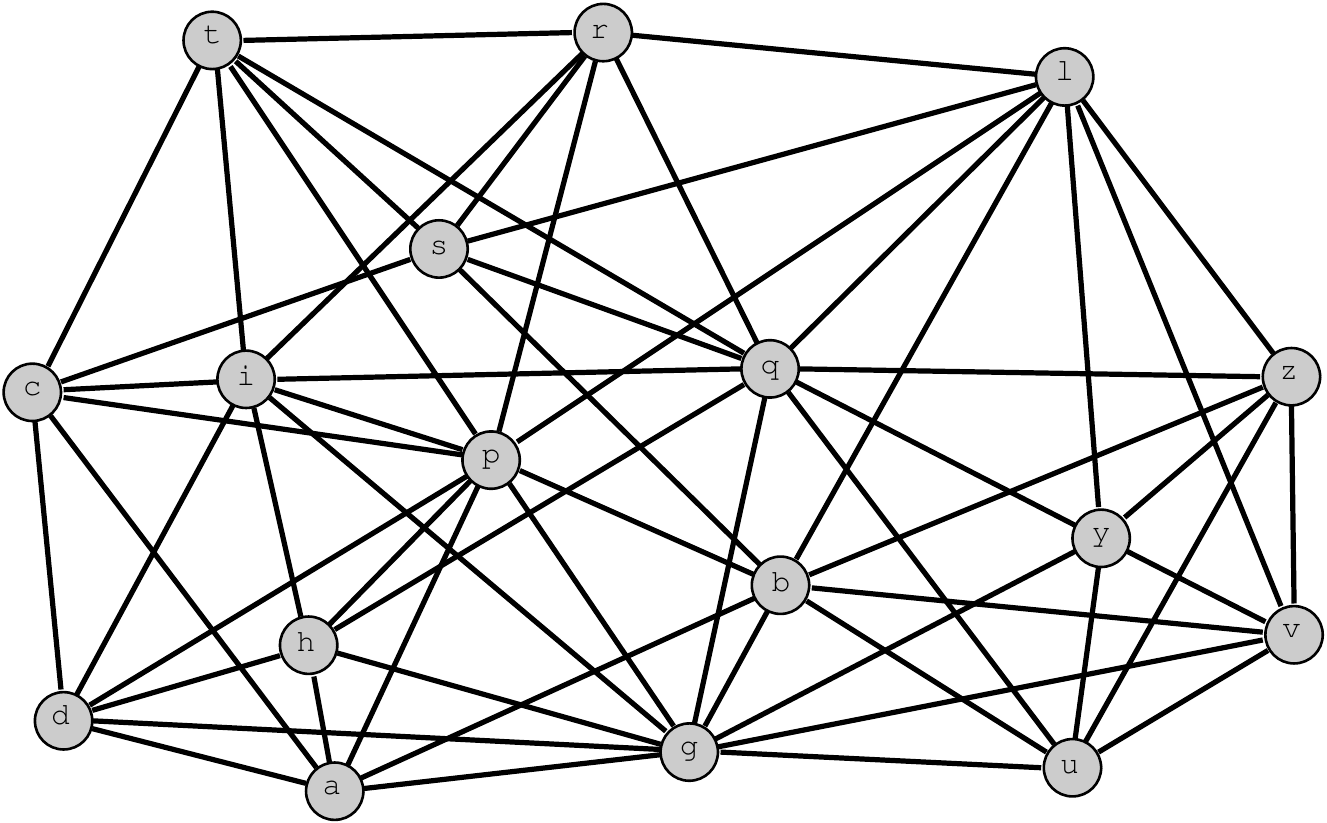}
     \caption{A candidate graph $B_{17}$ with $17$ vertices and $60$ edges
     \label{fig:B17}}
   \end{center}
\end{figure}

The exact number of  optimal 1-planar graphs is known for graphs of
size up to $36$. Bodendiek et al. \cite{bsw-1og-84} showed that
$K_6$ is 1-planar but is not optimal and that there are no optimal
1-planar graphs with seven and nine vertices.  There is a unique
optimal 1-planar graph for $n=8,10,11$, and there are three optimal
1-planar graphs for $n = 12, 13$. For $n=14$, they found $11$
optimal 1-planar graphs, but one is missing.
  Brinkmann et al.
\cite{bggmtw-gsqs-05} developed recurrence relations for the
enumeration of quadrangulations and computed the number of
3-connected quadrangulations up to size $36$. For example, there are
12 for $n=14$ and 3000183106119 quadrangulations and  optimal
1-planar graphs of size 36.

The \emph{pseudo-double wheels} \cite{bggmtw-gsqs-05} and the
\emph{extended wheel graphs} $XW_{2k}$ play a particular role for
quadrangulations and   optimal 1-planar graphs, respectively, since
thes are the irreducible or minimum graphs under two graph reduction
rules.
For $k\geq 3$, a pseudo-double wheel $W_{2k}$ is a quadrangulation
with two distinguished vertices $p$ and $q$, called \emph{poles},  a
cycle of even length with vertices $v_1, \ldots, v_{2k}$   and edges
$(v_i, v_{i+1})$ in circular order
 and further edges $(p, v_{2i})$
and $(q, v_{2i-1})$ for $i=1,\ldots, k$. Thus $p$ is connected with
the vertices at even and $q$ with the vertices
 at odd positions on the cycle. $W_{2k}$ has $n=2k+2$ vertices,
 $2n-4$ edges and $n-2$ faces.
The extended wheel graph $XW_{2k}$ additionally contains all
possible pairs of 1-planar crossing edges $(p, v_{2i-1}), (v_{2i},
v_{2i+2})$ and $(q, v_{2i}), (v_{2i-1}, v_{2i+1})$ in circular
order. This is the augmentation of $W_{2k}$ by kites,   see Fig.\
\ref{XWgraphs}. The two poles of $XW_{2k}$ have degree  $2k$ and
each of the $2k$ vertices on the cycle has degree six.  If $k \geq
4$, then the edges $(v_i, v_{i+1})$ on the cycle are black and the
edges $(v_{2i}, v_{2i+2})$ and $(v_{2i-1}, v_{2i+1})$ are red.
In addition, a graph is an extended wheel graph if it is optimal
1-planar and has a vertex of degree $n-2$ \cite{bsw-1og-84}. The
second degree $n-2$ vertex is implied. Moreover, an optimal 1-planar
graph is an extended wheel graph if the vertices of degree six form
a cycle  \cite{bsw-1og-84}.

The notation $XW_{2k}$ for graphs of size $2k+2$ is taken from
Suzuki \cite{s-rm1pg-10} and is related to Schumacher's $2 *
\hat{C}_{2k}$ notation.

\begin{figure}
   \centering
   \subfigure[The minimum extended wheel graph $XW_6$ drawn as a crossed
       cube]{
     \parbox[b]{4.5cm}{%
       \centering
       \includegraphics[scale=0.45]{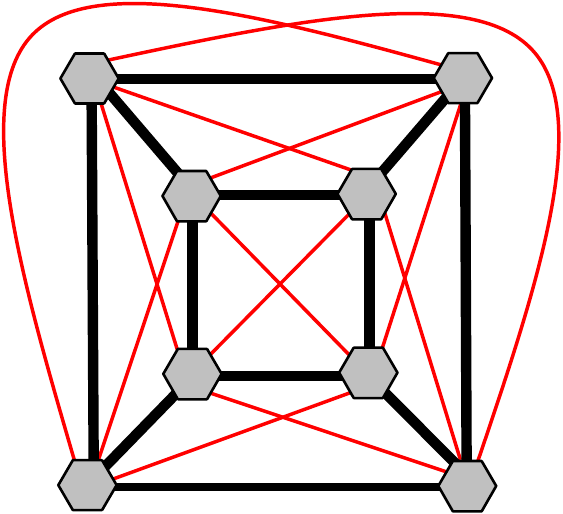}
     }
   }
   \hfil
   \subfigure[The extended wheel graph $XW_{10}$ with poles $p$ and $q$ and
       with hexagons for vertices of degree $6$]{
     \parbox[b]{6.1cm}{%
       \centering
       \includegraphics[scale=0.33]{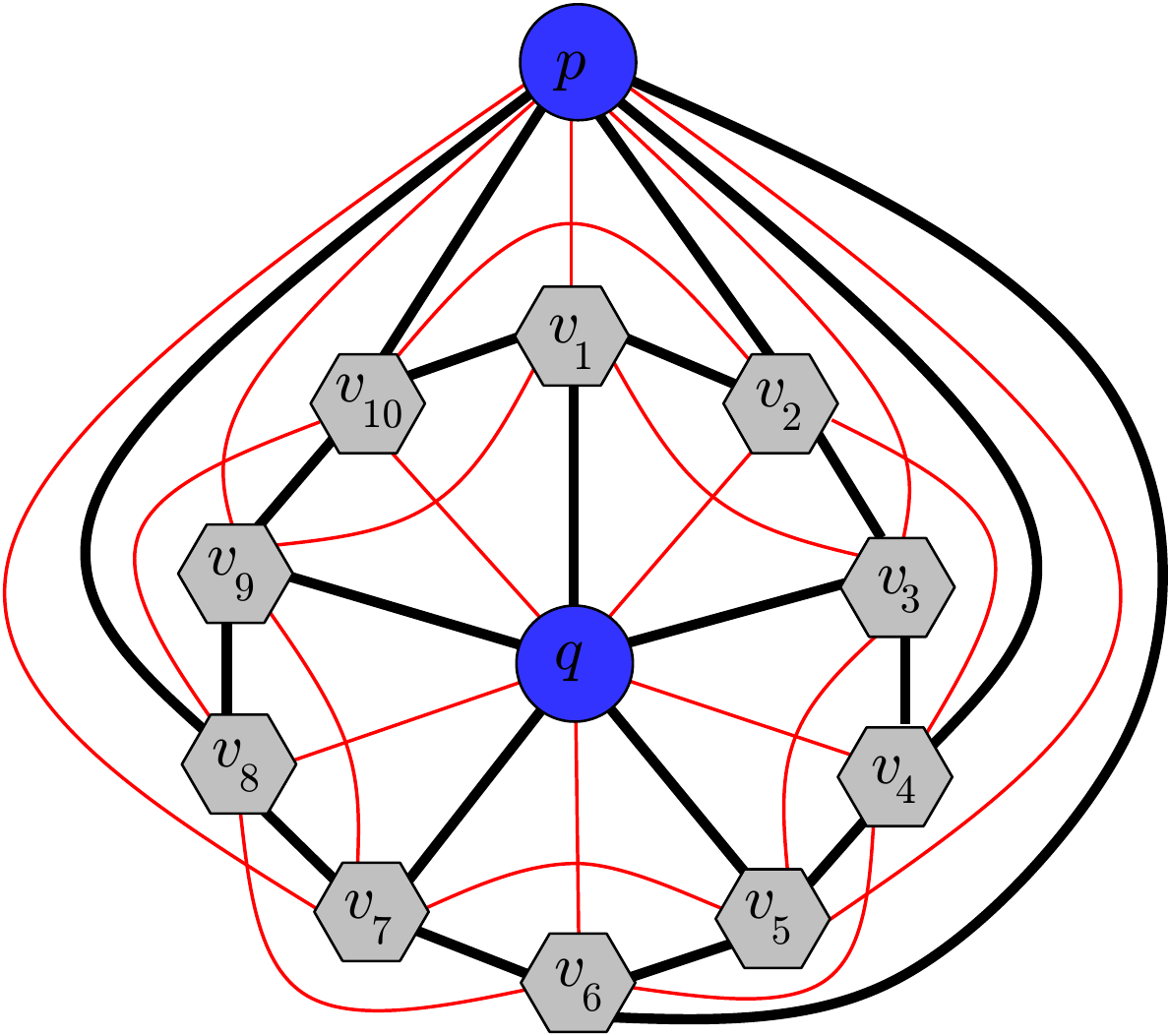}
     }
   }
   \caption{Extended wheel graphs $XW_6$ and $XW_{10}$. Any two non-adjacent
     vertices $p$ and $q$ of $XW_6$ can be taken as poles.
     In larger extended wheel graphs, if poles $p$ and $q$ change places this
     swaps the coloring of the incident edges. Edges between consecutive
     vertices on the cycle are always planar and are colored black.}
   \label{XWgraphs}
\end{figure}

We summarize some basic properties of optimal 1-planar graphs from
\cite{bsw-bs-83,bsw-1og-84, s-rm1pg-10}.

\begin{proposition} \label{prop:characterization} Every optimal 1-planar
  graph $G = (V,E)$ consists of a planar quadrangulation $G_P = (V,E_P)$ and
  a pair of crossing edges   $e,f$ in each face of $G_P$
  forming a kite such  that $E = E_P \cup E_C$, where $E_C$ is the set of crossing edges.
$G_P$ is  3-connected and bipartite. $G$ has a
  unique embedding, except if $G$ is an extended wheel graph $XW_{2k}$,
  which has two inequivalent embeddings for $k \geq 4$ in which the planar
  and crossing edges incident to a pole are interchanged and their
  colors swap. The minimum extended wheel graph $XW_6$ has six inequivalent
  1-planar embeddings.
\end{proposition}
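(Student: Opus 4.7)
The plan is to decompose the proposition into four claims and prove each in turn, relying on a counting argument via the planarization and Euler's formula, then on Whitney's theorem and a rigidity argument for the uniqueness statements. Fix any 1-planar embedding $\mathcal{E}(G)$ and color planar edges black and crossing edges red as in the introduction.

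For the structural claim, I would count. Let $G$ have $n$ vertices, $m = 4n - 8$ edges, and let $c$ be the number of crossings in $\mathcal{E}(G)$. The planarization is a simple planar graph on $n + c$ vertices with $m + 2c$ edges, so by Euler's formula it has $m + c - n + 2 = 3n - 6 + c$ faces. Since every face of the planarization has boundary length at least $3$, the inequality $3(3n - 6 + c) \leq 2(m + 2c)$ yields $c \geq n - 2$. Consequently the number of black edges is $m - 2c \leq 2n - 4$, which is the quadrangulation bound. Since $G$ is optimal, equality holds throughout, so every face of the planarization is a triangle and $c = n - 2$ exactly. Using Ringel's observation that every pair of crossing edges can be completed to a $K_4$ by planar edges, one shows that every face of $G_P$ is a $4$-cycle hosting exactly one pair of crossing edges with its four endpoints on the boundary; that is, a kite. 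So $G_P$ is a quadrangulation with $n-2$ faces and $2n - 4$ edges, and $G$ is obtained from $G_P$ by inserting one kite into each face.

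For the planar skeleton being bipartite, I would use that a planar graph whose every face has even length is bipartite, since every cycle is a symmetric difference of face cycles. For $3$-connectivity, I would assume a cut of size at most $2$ and derive a contradiction from the minimum degree $6$ condition of \cite{bsw-1og-84} combined with the kite structure: a small separator would either force a multi-edge when kites are restored, or violate the edge count $4n - 8$. For the uniqueness statement, I would apply Whitney's theorem \cite{w-pg-33} to the $3$-connected planar skeleton, giving a unique spherical embedding of $G_P$. It then remains only to choose, in each $4$-face, which diagonal lies \emph{above} the other in the kite. Adjacent $4$-faces share a boundary edge, and I would argue that the shared edge's local rotation forces the two kite choices to agree; propagating this around $G_P$ typically pins all kites simultaneously, yielding a unique $1$-planar embedding.

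The main obstacle will be characterizing exactly when this rigidity fails, which must give precisely the extended wheel graphs. The idea is that freedom arises only when the graph admits a global involution that swaps black and red incidences at some vertices—so only when $G$ has two vertices of degree $n - 2$, i.e., $G = XW_{2k}$, where swapping the poles interchanges the colors of edges incident to them and produces a second inequivalent embedding. For $k \geq 4$ this gives exactly two embeddings. For $XW_6$ on eight vertices, the graph is isomorphic to $K_{2,2,2,2}$ and has four antipodal non-adjacent pairs, each of which can serve as the pair of poles, with further symmetry under interchanging them; a direct case count gives the six inequivalent $1$-planar embeddings claimed in the statement.
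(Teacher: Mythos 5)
This proposition is stated in the paper as a summary of known results with citations to \cite{bsw-bs-83,bsw-1og-84,s-rm1pg-10}; the paper gives no proof of its own (it explicitly defers, e.g., ``A formal proof was given by Suzuki''). So your proposal is not an alternative to the paper's argument but an attempt to reconstruct the cited proofs, and as such it has two genuine gaps.

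First, the counting step does not close. From the planarization you correctly derive $c \geq n-2$ and hence that the number of black edges is \emph{at most} $2n-4$. You then write ``since $G$ is optimal, equality holds throughout,'' but optimality ($m=4n-8$) was already used to derive the inequality, and the inequality points the wrong way: nothing you have written excludes a drawing with $c > n-2$ crossings and fewer than $2n-4$ planar edges. The missing ingredient is the matching upper bound $c \leq n-2$, which is exactly where Ringel's kite augmentation does real work: one first augments the drawing so that every crossing pair is enclosed by a $4$-cycle of planar edges bounding a face of the black skeleton $B$; then each crossing occupies a distinct quadrilateral face of $B$, and the face-length count $2|E(B)| \geq 3f_B + c$ together with Euler's formula for $B$ gives $m \leq 3n-6+c$ and $c \leq n-2$, whence equality forces $B$ to be a quadrangulation with a kite in every face. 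Your planarization count alone cannot substitute for this.

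Second, the uniqueness argument conflates two different sources of freedom. Once the black/red partition is fixed and $G_P$ is $3$-connected, Whitney indeed pins the embedding of $G_P$, and the placement of the two crossing edges inside each quadrilateral face is then determined up to homeomorphism (``which diagonal lies above'' is not an invariant of the embedding). The actual content of the non-uniqueness claim is that a \emph{different} $1$-planar embedding of the abstract graph $G$ may induce a \emph{different} partition into planar and crossing edges, and the theorem of Schumacher and Suzuki is precisely that this happens only for the extended wheel graphs (with the pole swap, and with six embeddings for $XW_6\cong K_{2,2,2,2}$). Your ``global involution'' paragraph asserts this classification rather than proving it; it is the hardest part of the proposition and is why the paper cites \cite{s-s1pg-86,s-rm1pg-10} instead of arguing it. If you intend a self-contained proof, this step needs the full case analysis of those papers; otherwise you should cite it as the paper does.
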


\noindent From the fact that  $G_P$ is bipartite, we can conclude:

\begin{lemma} \label{circles}
Every cycle of odd length in an optimal
  1-planar graph contains at least one red edge. If $C$ is a cycle of
  length four and three of its edges are black, then all edges of $C$ are
  black.
\end{lemma}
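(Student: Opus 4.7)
The plan is to reduce everything to the bipartiteness of the planar skeleton $G_P$, which is guaranteed by Proposition~\ref{prop:characterization}.

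For the first statement, I would observe that the black (planar) edges of $G$ are exactly the edges of $G_P$. If an odd cycle $C$ in $G$ used only black edges, it would be an odd cycle in $G_P$, contradicting the bipartiteness of $G_P$. Hence $C$ must contain at least one red edge.

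For the second statement, let $C = v_1 v_2 v_3 v_4$ be a $4$-cycle whose edges $(v_1,v_2), (v_2,v_3), (v_3,v_4)$ are black, and suppose for contradiction that $(v_4,v_1)$ is red. The bipartition of $G_P$ places $v_1$ and $v_4$ in different classes, because they are joined in $G_P$ by the black path $v_1, v_2, v_3, v_4$ of odd length~$3$. On the other hand, by Proposition~\ref{prop:characterization}, the red edge $(v_4,v_1)$ is one diagonal of a kite, i.e., it lies in a quadrilateral face $f$ of $G_P$ whose vertices are $v_1, a, v_4, b$ in rotational order around $f$, for some vertices $a,b$. The boundary of $f$ provides a black path $v_1, a, v_4$ of length $2$ in $G_P$, so $v_1$ and $v_4$ must lie in the same bipartition class of $G_P$. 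This contradiction forces $(v_4,v_1)$ to be black.

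No step here looks like a real obstacle; the only thing to be careful about is the precise structure of a kite as described in Proposition~\ref{prop:characterization}, namely that the two red diagonals of a kite are drawn inside a single quadrilateral face of $G_P$, so their endpoints are diagonally opposite on a black $4$-cycle bounding that face. Once this is invoked, both conclusions follow immediately from the bipartiteness of $G_P$.
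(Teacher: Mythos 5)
Your proof is correct and follows exactly the route the paper intends: the paper states the lemma as an immediate consequence of the bipartiteness of $G_P$ (Proposition~\ref{prop:characterization}) without writing out details, and you have simply made explicit the two facts needed, namely that black odd cycles would violate bipartiteness and that every red edge joins diagonally opposite (hence same-class) corners of a quadrilateral face of $G_P$.
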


Schumacher \cite{s-s1pg-86} defined a relation  on 1-planar
embeddings and used it to characterize $5$-connected optimal
1-planar graphs.

\begin{definition}
Two 1-planar embeddings $\mathcal{E}(G)$ and $\mathcal{E}(G')$ are
\emph{related}, $\mathcal{E}(G) \hookrightarrow \mathcal{E}(G')$, if
there is a planar quadrangle $(v_1, v_2, v_3, v_4)$ in the planar
skeleton
$P(\mathcal{E}(G))$ such that\\
 \centerline{ (*)\, \,\,  all paths from $v_1$ to
$v_3$ of length four in $P(\mathcal{E}(G))$  pass through $v_2$ or
$v_4$.}
\noindent Then $\mathcal{E}(G')$ is obtained from
$\mathcal{E}(G)$ by merging $v_1$ and $v_3$ and removing parallel
edges. For graphs $G$ and $G'$,  let $G \hookrightarrow G'$ if there
exist embeddings such that $\mathcal{E}(G) \hookrightarrow
\mathcal{E}(G')$ and denote the transitive closure by
``$\hookrightarrow^*$''.
\end{definition}

 The paths of (*) from
$v_1$ to $v_3$ are simple and use only planar (black) edges. The
embedding $\mathcal{E}(G)$ must satisfy special properties such that
the planar quadrangle $(v_1, v_2, v_3, v_4)$ coincides with $(x,
x_3, x_4, x_5)$ in Fig.\ \ref{SR}. Note that each quadrangle in an
extended wheel graph has a path of length four between opposite
vertices $(v_{i-1}, v_{i+1})$ of a planar quadrangle through one of
the poles, such that (*) is violated. In consequence, the
``$\hookrightarrow$''-relation is not applicable.

\begin{proposition} \cite{s-s1pg-86}  \label{prop:properties}
  Every 5-connected optimal 1-planar graph $G$ can be reduced to an extended
  wheel graph $XW_{2k}$ for some $k \geq 3$, i.e., $G \hookrightarrow^*
  XW_{2k}$. The extended wheel graphs are irreducible (or minimum) elements
  under the ``$\hookrightarrow$''-relation.
\end{proposition}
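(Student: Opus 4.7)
The proof splits into (a) irreducibility of $XW_{2k}$ and (b) reducibility of every other 5-connected optimal 1-planar graph.

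For (a), I would verify that condition (*) fails at every planar face of the skeleton $W_{2k}$. Using the description in Section \ref{sect:prelim}, the faces come in two families: inside faces $(p, v_{2i}, v_{2i+1}, v_{2i+2})$ around pole $p$ and outside faces $(q, v_{2i-1}, v_{2i}, v_{2i+1})$ around pole $q$. For each opposite pair, I exhibit an explicit length-four path in the planar skeleton that detours around the face. For the pair $\{p, v_{2i+1}\}$ one may take
\[
p \to v_{2j} \to v_{2j+1} \to q \to v_{2i+1}
\]
for any $j \notin \{i,i+1\}$, and such $j$ exists because $k \geq 3$. The pair $\{v_{2i}, v_{2i+2}\}$ is handled by $v_{2i} \to v_{2i-1} \to q \to v_{2i+3} \to v_{2i+2}$, and the outside faces are symmetric. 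Hence no reduction is applicable and $XW_{2k}$ is irreducible.

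For (b), I would proceed by induction on $n$. Let $G$ be a 5-connected optimal 1-planar graph that is not an extended wheel graph. By Proposition~\ref{prop:characterization}, its planar skeleton $G_P$ is a 3-connected bipartite quadrangulation. I would invoke the reduction theorem for quadrangulations of Brinkmann et al.~\cite{bggmtw-gsqs-05}: every 3-connected quadrangulation other than a pseudo-double wheel admits a face with a contractible diagonal, i.e., an opposite pair of vertices whose identification yields another 3-connected bipartite quadrangulation. Translating contractibility into condition (*) amounts exactly to the absence of a length-four path between the two opposite vertices that bypasses the face; this is the criterion that prevents multi-edges from persisting after the merge. Once (*) is secured at some face of $G_P$, I would lift the contraction back to $G$: identifying $v_1$ and $v_3$ collapses the face, removes the red diagonal $(v_1,v_3)$ as a self-loop, and deletes one copy of each of the edges to $v_2$ and $v_4$. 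A face count gives $4(n-1)-8$ edges, and since the reduced skeleton is again a 3-connected bipartite quadrangulation, the reduct is again optimal 1-planar by Proposition~\ref{prop:characterization}, so the induction may continue. The 5-connectivity hypothesis is used to ensure that the merge does not create a separating set of size less than four and that the quadrangulation hypothesis of Brinkmann et al.\ applies to the contracted graph.

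The main obstacle is the existence step for (*) at some face of the skeleton. If (*) failed at every opposite pair of every face, I would need to argue that the resulting uniform pattern of length-four ``shortcuts'' forces $G_P$ to be a pseudo-double wheel, and hence $G$ to be an extended wheel graph, contradicting the hypothesis; this is where the hard combinatorial work lies. The small cases—in particular $XW_6$ with its six inequivalent embeddings and graphs near the minimum size—also require some bookkeeping to check that the contraction produces no parallel edges beyond the two explicit duplicates already accounted for.
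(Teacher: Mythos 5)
This proposition is imported from Schumacher \cite{s-s1pg-86}; the paper gives no proof of it, so your argument has to stand on its own. Your part (a) does: it is essentially the observation the paper itself makes in Section~\ref{sect:prelim}, namely that every opposite pair in every quadrangle of $W_{2k}$ admits a length-four detour through a pole or around the rim, so (*) fails at every face and no ``$\hookrightarrow$''-step is applicable. The explicit paths you exhibit are valid in $W_{2k}$ for $k \geq 3$ and avoid the other two face vertices as required.

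Part (b) has a genuine gap, and it is the one you concede yourself. The generation theorem of Brinkmann et al.\ (Proposition~\ref{prop3}) guarantees that a 3-connected quadrangulation other than a pseudo-double wheel admits a $P_1$- \emph{or} a $P_3$-reduction; only $P_1$ corresponds to ``$\hookrightarrow$''. So the statement you invoke --- that every such quadrangulation has a face with a contractible diagonal --- is not what \cite{bggmtw-gsqs-05} provides, and the entire content of Schumacher's theorem is that under 5-connectivity the $P_3$ case can be avoided. The paper's own discussion ties the $P_3$/$CR$ situation to separating 4-cycles, which 5-connectivity excludes, but you neither make that argument nor show that whatever hypothesis you induct on survives a contraction: your induction needs the reduct to be 5-connected again (or at least to again admit a (*)-face), and neither is established. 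Your closing admission that you would still ``need to argue that the resulting uniform pattern of length-four shortcuts forces $G_P$ to be a pseudo-double wheel'' is precisely the theorem, left unproven. A smaller but real error: parallel edges after identifying $v_1$ with $v_3$ arise from common neighbors, i.e.\ length-\emph{two} paths, whereas (*) constrains length-\emph{four} paths, whose role is to protect 3-connectivity of the contracted skeleton (separation pairs arising from 4-cycles); your gloss that (*) ``prevents multi-edges'' mischaracterizes the condition you are trying to locate.
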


By the restriction to 5-connected graphs, Schumacher excluded
graphs with separating 4-cycles.  Separating 4-cycles play a similar
role in optimal 1-planar graphs as separating triangles do in
triangulated planar graphs. In fact,  every non-irreducible
$5$-connected optimal 1-planar graph can be
reduced to $XW_8$ \cite{b-rso1p-16}.\\

Brinkmann et al. \cite{bggmtw-gsqs-05} introduced two graph
transformations, called $P_1$-~  and $P_3$-expansions, for the
generation and characterization of  (planar) 3-connected
quadrangulations. We consider their inverse as reductions.

\begin{definition}
The $P_1$-\emph{reduction} on a quadrangulation consists of a
contraction of a face $f = (u,x,v,z)$ at $x,z$, where $x$ has degree
$3$ and $u,v,z$ have degree at least $3$. It is shown in Fig.\
\ref{fig:splitting} and in an augmented version in Fig.\ \ref{SR}
with the restriction to planar (black) edges. The
$P_3$-\emph{reduction} removes the vertices of the inner cycle of a
planar cube, where the inner cycle is empty and the vertices of the
outer cycle have degree at least $4$, see Fig.\ \ref{CR} restriced
to planar (black) edges.

The reductions must be applied such that they preserve the class of
 3-connected   quadrangulations.
\end{definition}

By the one-to-one correspondence between 3-connected
quadrangulations and optimal 1-planar graphs, the $P_1$-~  and
$P_3$-reductions are extended straightforwardly to embedded 1-planar
graphs, called vertex  and face contraction  by Suzuki
\cite{s-rm1pg-10}. Their inverse is called $Q_v$-splitting and
$Q_4$-cycle addition, respectively, and are used from right to left.
The illustration in Fig.\ \ref{fig:splitting} is taken from
\cite{s-rm1pg-10}. A $Q_4$-cycle addition removes the pair of
crossing edges of a kite and inserts five  new kites as illustrated
in Fig.\ \ref{CR}.   Suzuki \cite{s-rm1pg-10} observed that
Schumacher's ``$\hookrightarrow$''-relation  coincides with his face
contraction
 and defines the $P_1$-reduction on the planar
skeleton of an embedded 1-planar graph.

\begin{figure}
   \begin{center}
   \rotatebox{270}{%
     \includegraphics[scale=0.35]{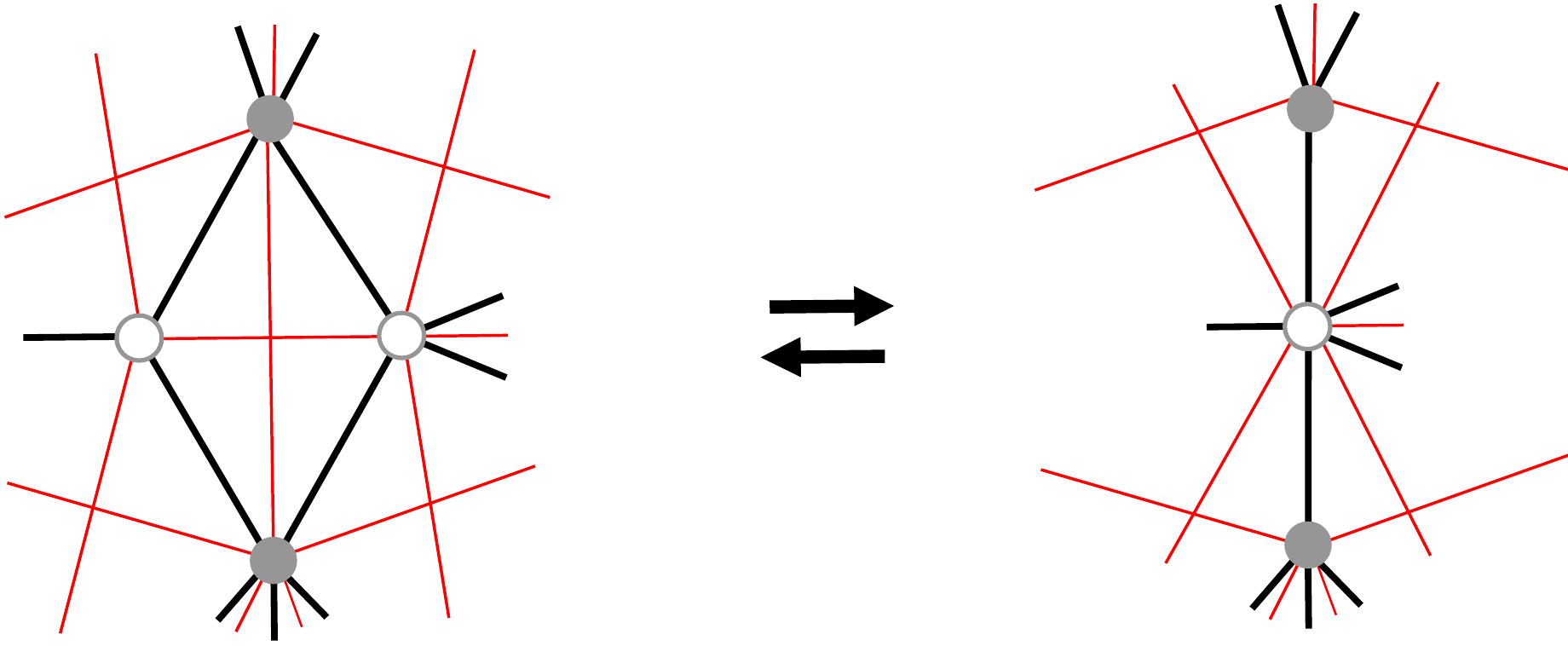}
     }
     \caption{Face contraction or  vertex splitting on 1-planar embeddings
     with planar (black and thick) and crossing (red and thin) edges
     \label{fig:splitting}}
   \end{center}
\end{figure}

 The
distinction between graphs and embeddings is not  important for the
$P_1$-~ and $P_3$-reductions of Brinkmann et. al., since there is a
one-to-one correspondence on $3$-connected planar graphs. They point
out that the reductions must be used with care such that the given
class of graphs is preserved. It is not specified, however, how this
is achieved. On the other hand, the ``$\hookrightarrow$''-relation
of Schumacher and the $Q_v$-splitting and $Q_4$-cycle addition and
the inverse $Q_f$-contraction and $Q_4$-removal of Suzuki need a
1-planar embedding and the distinction between planar (black) and
crossing (red) edges. It is not immediately clear how to apply these
rules to graphs that are given without an embedding or an edge
coloring. Nevertheless they characterize the respective graphs, as
stated in Propositions \ref{prop3} and \ref{prop4}.

\section{Reduction Rules and Their Application} \label{sect:rules}

For the translation of the reduction rules from embeddings to graphs
and an efficient check of their usability, we use the uniqueness of
1-planar embeddings of reducible optimal 1-planar graphs and the
local environment of a reduction. In consequence, a reduction is
applied to a subgraph  which has (almost) a fixed embedding. A
primary goal is to compute the embedding  and  to check  the
feasibility of the application of a reduction. The correctness
follows from the works of Brinkmann et al. \cite{bggmtw-gsqs-05},
Schumacher \cite{s-s1pg-86}, and Suzuki \cite{s-rm1pg-10}.

Transformations on graphs and graph replacement systems have been
studied in the theory of graph grammars \cite{r-hgg-97}. In general,
a graph transformation is a pair of left-hand and right-hand side
graphs $\alpha = (L, R)$. An application of $\alpha$ to a graph $G$
replaces an occurrence of $L$ in $G$ by an occurrence of $R$ while
the remainder $G-L$ is preserved. It results in a graph $G' =
G-L+R$. A graph $L$ occurs in $G$ and $L$ is said to \emph{match} a
subgraph $H$ of $G$ if there is a graph homomorphism between $L$ and
$H$, which is one-to-one and onto on the vertices and one-to-one but
not necessarily onto for the edges, and similarly for $R$ and $G'$.
Unmatched edges of $H$ remain in $G-L$ and are kept for $G-L+R$.
This is elaborated in the algebraic approach to graph
transformations \cite{cmrehl-97}. In this particular case, the
general approach does not really help, since the complexity of the
element problem of graph grammars is PSPACE hard \cite{b-ccgg-83}.

 We reverse the expansions of Brinkmann et al. and Suzuki and
call them \emph{SR-reduction} (Schumacher reduction) and
\emph{CR-reduction} (crossed cube reduction), and the graphs of the
left-hand sides $CS$ (crossed star) and $CC$ (crossed cube),
respectively. The $SR$-reduction augments the vertex splitting of
Suzuki and includes the subgraph induced by the center $x$. The
reductions are shown in Figs. \ref{SR} and \ref{CR} including a
1-planar embedding and an edge coloring. The tiny strokes at the
outer vertices indicate further edges, which are necessary. These
vertices may have even more edges to outer vertices.

Following Brinkmann et al. \cite{bggmtw-gsqs-05}, the given class,
here the optimal 1-planar graphs, must be preserved and therefore an
application of a reduction is \emph{constrained}. An infeasible
application may destroy the 3-connectivity of the underlying planar
skeleton or introduce multiple edges, which ultimately leads to a
violation of 3-connectivity.

\begin{figure}
   \begin{center}
     \includegraphics[scale=0.40]{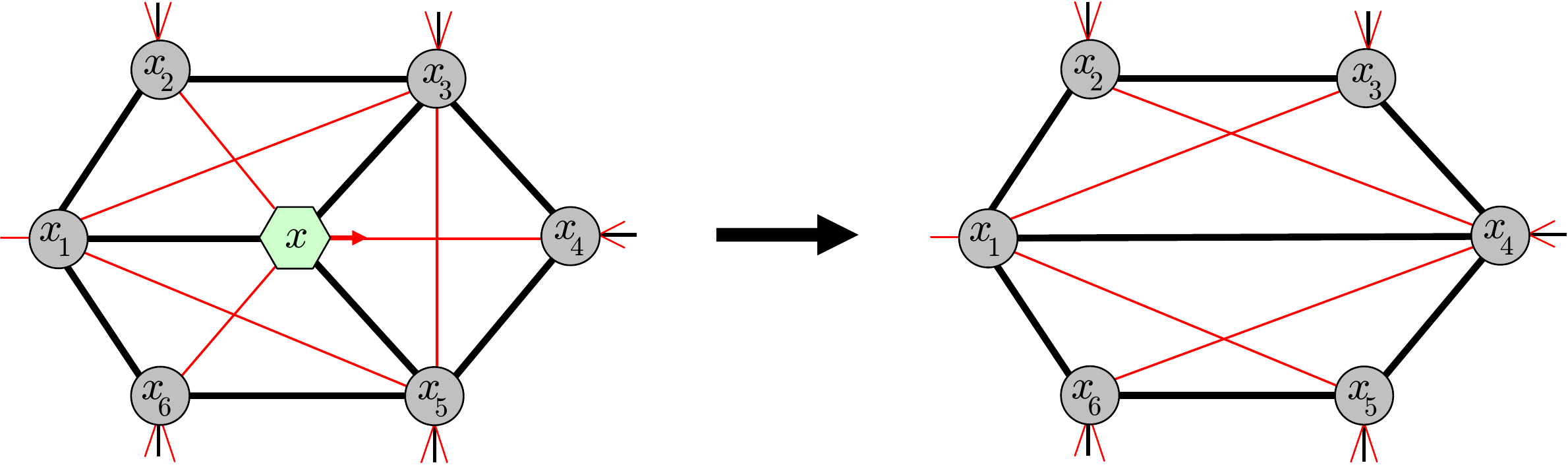}
     \caption{The reduction $SR(x  \mapsto x_4)$ for optimal 1-planar
       graphs. A candidate is drawn as a hexagon and other vertices as circles.
       Good candidates are light green   and bad ones orange. Planar edges
       are drawn black and thick and crossing edges red and thin. The
       tiny strokes at the outside indicate further necessary edges. The left
       graph is $CS$ together with its embedding.\label{SR}}
   \end{center}
\end{figure}

\begin{figure}
   \begin{center}
     \includegraphics[scale=0.32]{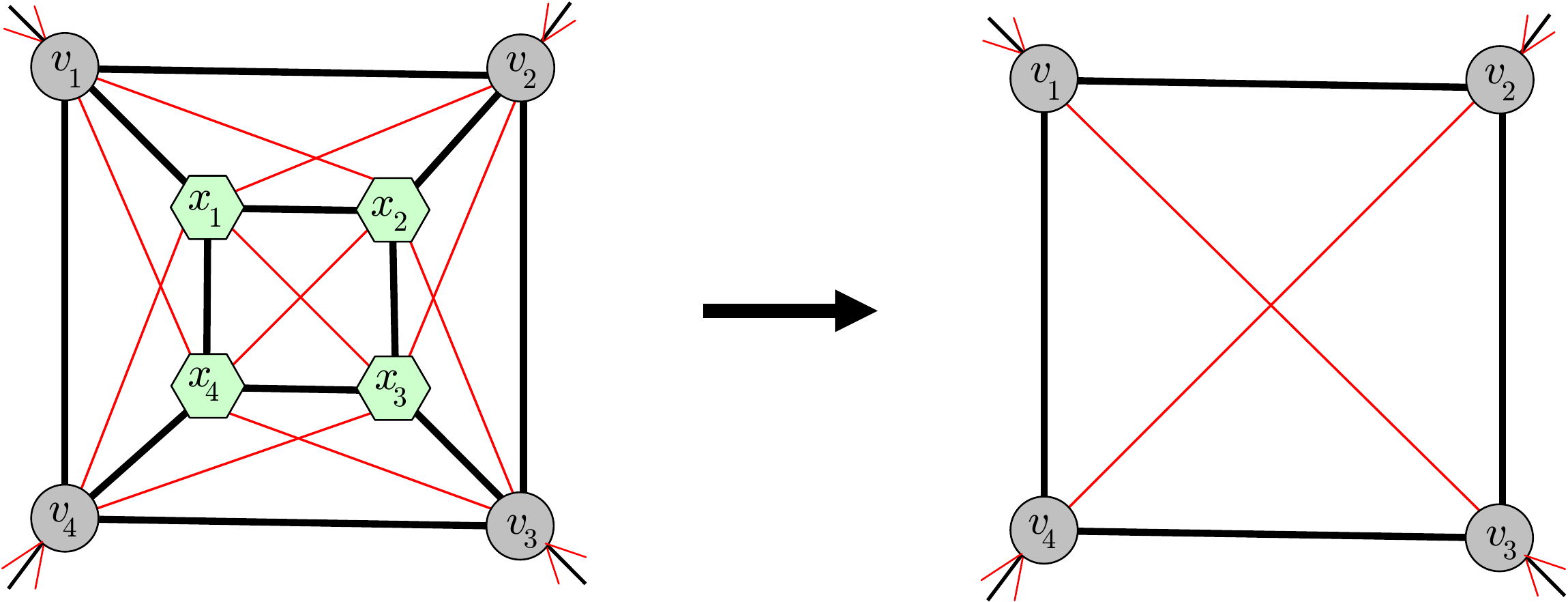}
     \caption{The reduction $CR(x_1, x_2, x_3, x_4)$ for optimal 1-planar
       graphs with candidates $x_1, x_2, x_3, x_4$. The reduction is good and the candidates are colored light
       green
       if there is no  edge $(v_1, v_3)$ or $(v_2, v_4)$ in the outer face. The
       graph on the left is $CC$ together with its embedding.\label{CR}}
   \end{center}
\end{figure}

 The main task of our algorithm is an efficient and feasible use
of the  reduction rules  such that optimal 1-planar graphs are
preserved. An obstacle is the gap between the graphs $CS$ and $CC$
of the left-hand side of the reductions, which come with an
embedding, and the matched subgraph $H$, which comes as a part of
$G$. A \emph{matched subgraph} $H(x)$ of a $SR$-reduction is a
subgraph induced by a vertex $x$ of degree six and its neighbors.
There are three red and three black neighbors which alternate in the
circular order around $x$. For a $CR$-reduction there is a subgraph
of eight vertices. A matched subgraph may have further edges, since
the matching is not onto for the edges. This introduces so-called
blocking edges and is discussed later on.
 We grant 3-connectivity of the underlying planar
skeleton by the absence of a \emph{blocking vertex}, which has
degree six. For example, vertices $x_3$ or $x_5$ are blocking
vertices of $SR(x \mapsto x_4)$ in Fig.\ \ref{SR} if they have
degree six. In case of a $CR$-reduction, a  vertex is blocking if it
is matched by a vertex from the outer cycle of $CC$ and has degree
six. Multiple edges are avoided
 by the absence of \emph{blocking edges}, which
can be planar or crossing, i.e., black or red. A blocking edge is
always related to a reduction and it may be blocking for many
reductions. Blocking black edges occur  in separating 4-cycles, and
 red and black  blocking edges are treated differently.
 Blocking vertices and planar blocking
edges can also appear in the planar case, whereas  blocking red
edges are exclusive to 1-planar graphs. They also cover the case of
blocking vertices, since a blocking vertex implies a  blocking red
edge. The converse does not hold.

 A matching of $CS$ or $CC$ with a
subgraph $H$ shall classify the edges of $H$ as planar and crossing
and color them black and red, respectively. It shall determine  the
circular order of the vertices in the outer face of $CS$ and $CC$,
 and thus an embedding of $H$. However, this is not always the
case. Graph $CS$ has several 1-planar embeddings, since some $K_4$'s
may be drawn planar or as a kite. In fact, $CS$ is a planar graph,
however, as a subgraph of a 1-planar graph it must be embedded with
crossings as shown in Fig.\ \ref{SR}, since reducible optimal
1-planar graphs have a unique embedding. Furthermore,  if the
matched graph  of $CS$ also has   edges $(x_2, x_6)$ and $(x_3,
x_6)$, then it has two 1-planar embeddings in which $x_1$ and $x_6$
may change places, which implies a color change of the incident
edges, just as in the case of  extended wheel graphs. If edges
$(x_2, x_4), (x_2, x_6$ and $(x_4, x_6)$ exists in addition to the
edges of $CS$, then the situation is even worse and any circular
order of the neighbors of $x$ is possible. Fortunately, these
possibilities are represented by the degree vectors which are
defined below.

The usability of a reduction is linked to one or four vertices of
degree six and some conditions. A  $SR$-reduction is applied to a
vertex $x$ of degree six, which is the image of the central vertex
and the corner of three kites of $CS$. For the right-hand side, $x$
is merged with a target, which is a red vertex $v$ of the outer
cycle, denoted $SR(x \mapsto v)$, and $SR(x \mapsto x_4)$ is shown
in Fig. \ref{SR}. A given optimal 1-planar graph
 may have several places for the application of a reduction, even at a single candidate,
 and
 the next reduction is chosen nondeterministically. There are candidates where a reduction is feasible
and others where a reduction is   infeasible. An application of a
$CR$-reduction is linked to (one of) four vertices $x_1, x_2, x_3,
x_4$ of degree six, which are all infeasible for a  $SR$-reduction,
and is denoted $CR(x_1, x_2, x_3, x_4)$. The vertices are on the
inner cycle of $CC$ and are removed and replaced by a pair of
crossing edges, such that the vertices from the outer cycle form a
kite. In  a drawing, the inner cycle may be at the outside.

For convenience, we say that $SR$ is applied to vertex $x$ of the
given graph if $SR(x \mapsto v)$ is feasible and call $v$   the
\emph{target} of $x$, and similarly, that $CR$ is applied to $(x_1,
x_2, x_3, x_4)$ or just to $x_i$ for some $i=1,2,3,4$. In addition,
we shall identify the vertices and edges of the left-hand sides $CS$
or $CC$ with those of the matched subgraph $H$, although the
embedding and edge coloring of $H$ is not yet fixed and some
vertices might change places. In general, the matching and embedding
will be clear. Sometimes, it would be good to increase the degree of
a vertex $u$, e.g., to avoid that $u$ is a blocking vertex for
another reduction. The simplest way is to apply the inverse of $CR$,
i.e., the $Q_4$-cycle addition of \cite{s-rm1pg-10}, and insert a
new $4$-cycle together with five pairs of crossing edges in a
quadrangular face at $u$ that is left if a pair of crossing edges is
removed.

\begin{definition}
  A vertex $x$ (of an optimal 1-planar graph $G$) of degree six is called a
  \emph{candidate}.

  A candidate $x$ is \emph{``good''} if there is a \emph{feasible application} of a
  reduction at $x$. Then the reduction is used as given in Figs.\ \ref{SR}
  and \ref{CR} and the class of optimal 1-planar graphs is
  preserved.
  A good candidate  is drawn as a light green hexagon. In case of $SR$, $x$ is
  the center of a subgraph $H(x)$ of $G$ that matches $CS$ and there is some
  red neighbor $v$, called a \emph{target},  such that $SR$ can be applied by merging $x$
  with $v$, denoted $SR(x \mapsto v)$. Then $SR(x \mapsto v)$ is \emph{``good''}
  and can feasibly be \emph{applied} to $x$.
  There are three targets  in $H(x)$ for a $SR$-reduction.
  In case of a $CR$-reduction, vertex
  $x$ belongs to the inner cycle of a subgraph $H$ that matches $CC$, and
  $CS$ is applied to any vertex of the inner cycle.

Otherwise, $x$ is a \emph{``bad''} candidate  and is drawn as an
orange hexagon. Then the reductions $SR(x \mapsto v)$ are bad for
all three red neighbors   of $x$. The usage is illegal.
  A bad reduction $SR(x \mapsto v)$ is
  \emph{blocked} by a vertex $u$ if $u$ is a black neighbor of $x$ and $v$
  of degree six and if $u$ is any vertex on the outer cycle of degree six in case of
  $CR$, respectively.
  An edge $e=(u, v)$ of $H(x)$ is a \emph{blocking red edge} of
  $SR(x \mapsto v)$ if $u$ is a red neighbors of $x$. Edge $e$ is a
  \emph{blocking black edge} if $u$ is a black neighbor and
  $e$ is not matched by an edge of $CS$.
If the outer cycle of $CC$ matches $(v_1,v_2, v_3, v_4)$, then edges
$(v_1, v_3)$ and $(v_2, v_4)$ are \emph{blocking red edges} of a
$CR$-reduction.
\end{definition}

A subgraph $H(x)$ with neighbors $(x_1, \ldots, x_6)$ of $x$ in
circular order  may have up to three blocking red edges, namely
$(x_2, x_4), (x_4, x_6)$ and $(x_6, x_2)$ if $x_2, x_4, x_6$ are the
red neighbors of $x$, see Fig.\ \ref{fig:blockingedges}. There may
be none. Blocking red edges are associated in pairs with
$SR$-reductions, and each blocking red edge $(u,v)$ is associated
with two $SR$-reductions, $SR(x \mapsto u)$ and $SR(x \mapsto v)$.
The edges must be red by Lemma \ref{circles}. Accordingly, a
blocking black edge $(u,v)$ of $SR(x \mapsto v)$ connects $v$ with
the vertex at the opposite side of $CS$, since it is not matched,
and, again, it must be black by Lemma \ref{circles}, see Fig.\
\ref{fig:blockingedges}.  There are up to three blocking black edges
in $H(x)$, and each $SR$-reduction has at most one blocking black
edge, since blocking black edges  do not cross. There are two
blocking red  edges in case a $CR$-reduction, however, at most one
of them can occur  in an optimal 1-planar graph that is not the
minimum extended wheel  graph $XW_6$.   By Lemma \ref{circles},
blocking black edges are excluded in this case.

\begin{figure}
   \centering
   \subfigure[Graph $CS$ with three blocking red edges $e,f$ and $g$. There may be a
       subgraph in the area between $e$ and (x2, x3, x4) and similarly for $f$ and $g$. $SR(x \mapsto x_4)$ is blocked
       by $x_3$ if there is no such subgraph and $e$ is crossed by a red edge incident to $x_3$.]{
     \parbox[b]{6.5cm}{%
       \centering
       \includegraphics[scale=0.33]{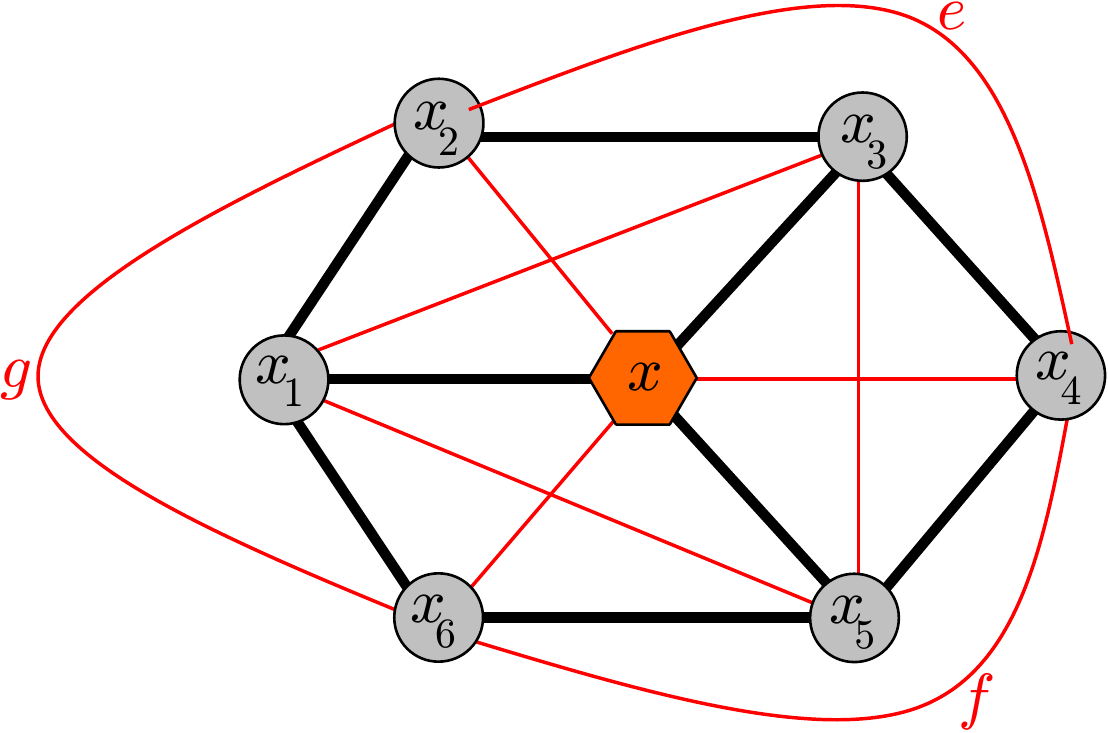}
     }
   }
   \hfil
   \subfigure[Graph $CS$ with a blocking black edge and two blocking red
       edges]{
     \parbox[b]{4.3cm}{%
       \centering
       \includegraphics[scale=0.33]{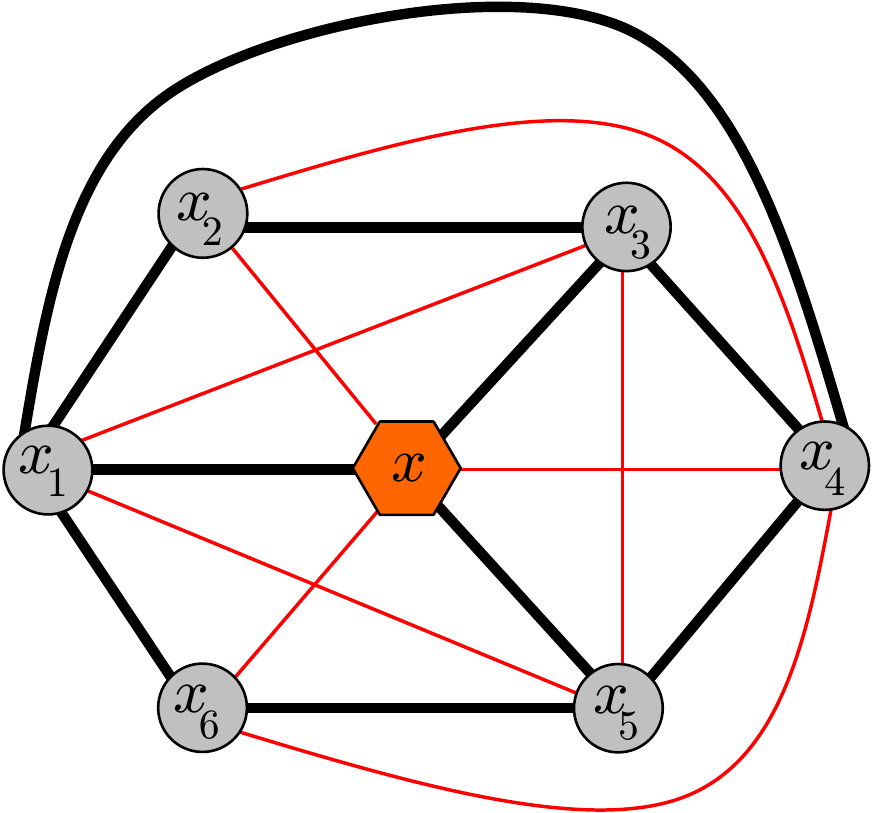}
     }
   }
   \caption{Blocking edges added to $CS$}
   \label{fig:blockingedges}
\end{figure}

An application of a reduction  with a blocking vertex would decrease
the degree of the blocking vertex to four, which would violate the
3-connectivity of the planar skeleton. The resulting graph would no
longer be optimal 1-planar. The application of a reduction with a
blocking edge would introduce a multiple edge, whose endpoints are a
separation pair of the planar skeleton. This again leads to a
violation of the 3-connectivity of the planar skeleton. Note that
the case of a blocking vertex is covered by a blocking red edge
between the black neighbors of the blocking vertex on the outer
cycle. The converse is not true, since the blocking edge may enclose
a (larger) subgraph. For example, add a forth vertex and then apply
the inverse of $CR$.

\begin{example} \label{ex1}
Consider  graph $G_{17}$  which  is optimal 1-planar by the 1-planar
embedding displayed in Fig.\ \ref{fig:base}. Vertices $a,c,d,h,
r,s,t, u,v,y,z$ are candidates, where $a, c,h,s,t$ are good for a
$SR$-reduction, whereas $d$ and $r$ are bad candidates, and
therefore are colored orange. Vertices $u,v,y,z$ are good for a
$CR$-reduction. A good $SR$-reduction $SR(x \mapsto v)$ is indicated
by an arrowhead on the red edge from $x$ to $v$. Vertex $x$ moves
along that edge and is merged with $v$. For example, $SR(a \mapsto
b)$ is good, whereas $SR(a \mapsto c)$  and $SR(a \mapsto h)$ are
bad, since $d$ is a blocking vertex and $(c,h)$ is a blocking red
edge.
\end{example}

As another example, consider   extended wheel graphs  as in Fig.\
\ref{XWgraphs}. Every vertex on the cycle of $XW_{2k}$ is a
candidate which, however, is blocked by its neighbors on the cycle.
In addition, all vertices of  $XW_6$  are blocked candidates. Hence,
all $SR$-reductions are bad and the extended wheel graphs are
irreducible.

\begin{definition}
  Every reduction $\alpha$ has a pair $\{e,f\}$ of \emph{associated red blocking edges}. If
 $\alpha = SR(x \mapsto v)$, then  $e = (u, v)$ and $f = (w, v)$,
 where $u, v$ and $w$ are the
  red neighbors of $x$.
  If $\alpha$ is a $CR$-reduction, then $e$ and
  $f$ are the two blocking red edges of $CC$.
\end{definition}

Recall that $SR$ is   Schumacher's ``$\hookrightarrow$''-relation
and the reverse of Suzuki's vertex splitting $Q_v$
\cite{s-rm1pg-10}, and $CR$ is the reverse of the $Q_4$-cycle
addition, and the $P_1$-~  and $P_3$-expansions of Brinkmann et al.\
are the restrictions to planar quadrangulations.
 The existential foundations for the reductions
are:

\begin{proposition} \cite{bggmtw-gsqs-05} \label{prop3}
The class $\mathcal{Q}_3$ of all 3-connected quadrangulations of the
sphere is generated from the pseudo-double wheels by the
$P_1(\mathcal{Q}_3)$-~  and  $P_3(\mathcal{Q}_3)$-expansions.
\end{proposition}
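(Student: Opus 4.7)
The plan is to argue by strong induction on the number $n$ of vertices of a 3-connected quadrangulation $Q$ of the sphere. The base case consists of the smallest 3-connected quadrangulation, namely the cube $W_6$, which is already the pseudo-double wheel with $k=3$. For the inductive step I need to show that every 3-connected quadrangulation $Q$ that is not a pseudo-double wheel admits either a $P_1$- or a $P_3$-reduction whose result is again a 3-connected quadrangulation; the proposition then follows because a single reduction strictly decreases $n$.

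First I would extract structural information from Euler's formula. Since $Q$ is a simple quadrangulation, $E = 2n-4$, so the degree sum is $4n-8$, and because 3-connectivity forces minimum degree at least $3$, a straightforward counting argument shows that $Q$ must contain vertices of degree $3$. This guarantees that a $P_1$-contraction is \emph{combinatorially} available; the real question is whether it preserves 3-connectivity. A $P_1$-contraction at a degree-$3$ vertex $x$ in the face $(u,x,v,z)$ fails to preserve 3-connectivity exactly when the identification of $u$ and $v$ either creates parallel edges or creates a new $2$-cut, which happens iff there is a separating $4$-cycle through $u$ and $v$ (other than the face itself) or, equivalently, $u$ and $v$ share a neighbor different from $x$ and $z$. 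I would record this as the \emph{obstruction condition} for $P_1$ at $x$.

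Next, assuming that every degree-$3$ vertex of $Q$ is obstructed, I would exhibit a $P_3$-configuration, i.e., a planar cube whose inner $4$-cycle consists of four degree-$3$ vertices and whose outer $4$-cycle has all vertices of degree at least $4$. The idea is to pick an obstructed degree-$3$ vertex $x$ inside a face $f=(u,x,v,z)$ and chase the separating $4$-cycle through $u$ and $v$ that witnesses the obstruction. One of the two sides of this $4$-cycle is a smaller 3-connected quadrangulation, so by a minimality/innermost choice of the separating $4$-cycle one can show that the enclosed region, combined with the obstruction at $x$, forces the four faces around $x$ to form exactly a cube whose inner corners are degree-$3$ vertices. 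At that point $P_3$ is applicable, and it preserves the quadrangulation property and, by the same innermost choice, also 3-connectivity.

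The main obstacle is this second step: verifying that ``every $P_1$ is obstructed, yet $Q$ is not a pseudo-double wheel'' genuinely forces an empty $4$-cube to appear. The case analysis has to rule out the alternative where a long chain of separating $4$-cycles nests into a pseudo-double wheel structure, and the crux is choosing the separating $4$-cycle to be innermost with respect to containment so that no further separator lies inside. Once that extremal choice is fixed, the quadrangulation inside the cycle can have at most four faces meeting at $x$, which pins down the cube. After the reduction is performed, the resulting graph is a 3-connected quadrangulation on fewer vertices, and the induction hypothesis completes the proof.
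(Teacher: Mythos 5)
This proposition is not proved in the paper at all: it is imported verbatim from Brinkmann et al.\ \cite{bggmtw-gsqs-05}, so there is no in-paper argument to compare against. Judged on its own, your outline follows the standard strategy (induction on $n$; Euler counting gives degree-$3$ vertices; if every $P_1$ is blocked, find a $P_3$-configuration inside an innermost separating $4$-cycle), but it contains one concrete error and one essential gap. The error: in the face $(u,x,v,z)$ the $P_1$-contraction identifies the degree-$3$ vertex $x$ with the \emph{opposite} corner $z$, not $u$ with $v$; identifying $u$ and $v$ (the two face-neighbors of $x$) would leave $x$ with degree $2$ and destroy the quadrangulation. Consequently your ``obstruction condition'' is stated for the wrong pair: the multiple-edge obstruction concerns a common neighbor of $x$ and $z$ other than $u,v$, and the connectivity obstruction is Schumacher's condition on length-$4$ paths from $x$ to $z$ avoiding $u$ and $v$ (equivalently, a separating $4$-cycle crossing the contraction), not a separating $4$-cycle through $u$ and $v$.

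The essential gap is the step you yourself flag as the ``main obstacle'': showing that if $Q$ is not a pseudo-double wheel and \emph{every} $P_1$-contraction at \emph{every} degree-$3$ vertex is obstructed, then an empty cube with four degree-$3$ inner vertices and degree-$\geq 4$ outer vertices exists. This is the entire content of the theorem, and ``chase the separating $4$-cycle \dots an innermost choice pins down the cube'' is an assertion, not a proof. You would need to (i) show that a non-facial $4$-cycle innermost by containment exists whenever some contraction is obstructed, (ii) analyze the patch it bounds and prove it contains either an unobstructed degree-$3$ vertex (contradicting the hypothesis) or exactly the $P_3$-cube, ruling out the alternative that the patch is a pseudo-double-wheel-like cap with all its contractions blocked by $4$-cycles that are \emph{not} inside it, and (iii) verify that the $P_3$-removal itself preserves $3$-connectivity, which is a separate check from the innermost choice. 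None of these is carried out, so the induction does not close. As written, the proposal is a plausible roadmap for reproving Brinkmann et al.'s Theorem but not a proof of it.
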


\begin{proposition} \cite{s-rm1pg-10} \label{prop4}
Every  optimal 1-planar graph   can be obtained from an extended
wheel graph by a sequence of $Q_v$-splittings and $Q_4$-cycle
additions.
\end{proposition}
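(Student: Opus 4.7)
The plan is to leverage the one-to-one correspondence between optimal 1-planar graphs and their planar skeletons (which are 3-connected quadrangulations, Proposition~\ref{prop:characterization}) to bootstrap from the planar statement in Proposition~\ref{prop3}. Given an optimal 1-planar graph $G$ with skeleton $G_P$, I would first invoke Proposition~\ref{prop3} to obtain a sequence of $P_1$- and $P_3$-expansions that produces $G_P$ from some pseudo-double wheel $W_{2k}$. Reading the sequence backwards gives a chain of $P_1$- and $P_3$-reductions on $G_P$ terminating at $W_{2k}$, each step preserving 3-connectivity and the quadrangulation property.

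The central step is to lift this planar reduction sequence to a sequence of $Q_f$-contractions and $Q_4$-removals on $G$ (the inverses of $Q_v$-splitting and $Q_4$-cycle addition). Since, by Proposition~\ref{prop:characterization}, an optimal 1-planar graph is determined by its skeleton (the crossing edges being forced as a kite in each quadrilateral face), I would show that a $P_1$-reduction on $G_P$ contracting a degree-3 vertex $x$ with face $(u,x,v,z)$ lifts to a $Q_f$-contraction of the corresponding face in $G$: the three red edges incident to $x$ are removed together with $x$, and the merged vertex inherits the kites from the three surrounding faces of $G_P$, producing precisely the optimal 1-planar graph whose skeleton is the reduced quadrangulation. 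An analogous verification works for $P_3$-reductions and the corresponding $Q_4$-removal, where a ``cube'' substructure of $G_P$ with its four kites is replaced by a single kite. The correctness of this lift relies on the uniqueness of the kite augmentation: because every face of the smaller skeleton is quadrangular and 3-connectedness is preserved, the resulting 1-planar graph remains optimal.

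For the base case, I would observe that if one starts from a pseudo-double wheel $W_{2k}$ and adds a pair of crossing edges in each quadrangular face to form kites (as prescribed by Proposition~\ref{prop:characterization}), then the resulting 1-planar graph is exactly $XW_{2k}$. Indeed, the crossing pairs $(p, v_{2i-1}), (v_{2i}, v_{2i+2})$ and $(q, v_{2i}), (v_{2i-1}, v_{2i+1})$ in the definition of $XW_{2k}$ are precisely the kites one obtains by crossing the two diagonals of each face of $W_{2k}$. Reversing direction gives the claimed sequence of $Q_v$-splittings and $Q_4$-cycle additions producing $G$ from $XW_{2k}$.

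The main obstacle I anticipate is handling the constraint that each reduction must preserve the class of optimal 1-planar graphs, not merely the skeleton class. Concretely, when lifting a $P_1$-reduction on $G_P$ one must argue that no red edges of $G$ become parallel or force a degree drop that would violate 3-connectivity after adding kites to the new faces; equivalently, that the blocking configurations described in Section~\ref{sect:rules} do not occur simultaneously with the structural conditions guaranteed by Proposition~\ref{prop3}. This amounts to checking that whenever Brinkmann et al.'s $P_1$- or $P_3$-reduction is admissible on $G_P$, the corresponding 1-planar reduction is also admissible on $G$. Once this compatibility is established, the inductive argument from Proposition~\ref{prop3} carries through verbatim.
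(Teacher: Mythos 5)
The paper does not prove this proposition at all: it is imported verbatim from Suzuki \cite{s-rm1pg-10}, and the surrounding text merely records that the $P_1$- and $P_3$-reductions of Brinkmann et al.\ transfer to optimal 1-planar graphs via the one-to-one correspondence with 3-connected quadrangulations. Your lifting argument is exactly that intended justification and is sound, including the base case identifying the kite augmentation of $W_{2k}$ with $XW_{2k}$. The ``main obstacle'' you anticipate in the last paragraph actually dissolves: because the kite augmentation is a bijection between simple 3-connected quadrangulations and optimal 1-planar graphs (Proposition~\ref{prop:characterization}), a $P_1$- or $P_3$-reduction is admissible on $G_P$ (i.e., preserves simple 3-connected quadrangulations) if and only if the lifted reduction preserves optimal 1-planarity --- the blocking vertices and edges of Section~\ref{sect:rules} are just the graph-level manifestations of those same skeleton conditions, and parallel red edges cannot arise since two faces of a 3-connected quadrangulation never share a pair of opposite vertices.
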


\subsection{Application of the Reduction Rules}

For an application of a reduction  one must find a matching of $CS$
and $CC$ and a subgraph $H$ that preserves the coloring and the
1-planar embedding of $CS$ and $CC$, respectively, and check whether
the reduction is good or bad. In addition, the reason for a bad
reduction must be known for the linear time algorithm in Section
\ref{sect:lineartime}. Fortunately, the degree vector and the local
degrees of the vertices of the matched subgraph provide the
necessary information, as stated in  Table 1.

\begin{definition}
  Let $x$ be a candidate of graph
  $G$, and let $H(x)$ be the subgraph induced by
  $x$ and its six neighbors. Let $\overrightarrow{H(x)} = (d_1, \ldots, d_7)$ be
  the lexicographically ordered 7-tuple with the local degrees of the
  vertices of $H(x)$, called the \emph{degree vector} of $x$,
  and call $\tau(x) = d_1$ the \emph{type} of $x$.
\end{definition}

\begin{lemma}\label{lem:H(x)}
  If $x$ is a candidate of an optimal 1-planar graph, then
  \begin{enumerate}
  \item $3 \leq \tau(x) \leq 5$
  \item $H(x)$ has between $15$ and $18$ edges, and
  \item $\overrightarrow{H(x)} \in \{(3,3,3,5,5,5,6), (3,3,4,5,5,6,6),
    (3,4,4,5,5,5,6), (3,4,5,5,5,6,6)$, \\
    \hspace*{11.5mm} $(4,4,5,5,5,5,6), (4,4,5,5,6,6,6), (5,5,5,5,5,5,6)\}$.
  \end{enumerate}
\end{lemma}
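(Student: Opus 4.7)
The plan is to show that $H(x)$ has a rigid base of $15$ forced edges plus at most three optional ``extras,'' whose feasible combinations yield exactly the seven claimed tuples.

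First, I would establish the forced part of $H(x)$. Since $x$ has degree six with planar and crossing edges alternating in the rotation (Proposition~\ref{prop:characterization}), label the neighbors $x_1,\ldots,x_6$ in cyclic order so that $x_1,x_3,x_5$ are black and $x_2,x_4,x_6$ are red. The three black edges at $x$ partition the neighborhood of $x$ in the planar skeleton $G_P$ into three quadrilateral faces $(x,x_1,x_2,x_3)$, $(x,x_3,x_4,x_5)$, $(x,x_5,x_6,x_1)$, each of which carries a kite by Proposition~\ref{prop:characterization}. Collecting the resulting edges shows that $H(x)$ always contains the hexagon $x_1 x_2 \cdots x_6$ (black), the six spokes from $x$ (alternately black and red), and the three red kite-diagonals $(x_1,x_3),(x_3,x_5),(x_5,x_1)$---that is, $15$ forced edges giving the base local-degree vector $(3,3,3,5,5,5,6)$, which is tuple~$1$.

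Next, I would enumerate the possible extras and fix their colors. The remaining edges of the complete graph on $\{x,x_1,\ldots,x_6\}$ are the three \emph{short diagonals} $(x_2,x_4),(x_4,x_6),(x_6,x_2)$ between red-neighbors and the three \emph{long diagonals} $(x_1,x_4),(x_2,x_5),(x_3,x_6)$ between opposite hexagon vertices. Applying Lemma~\ref{circles} to the triangles $x_{2j}x_{2j+1}x_{2j+2}$ (odd cycles with two already-black hexagon sides) forces any short diagonal present to be red, and applying it to the $4$-cycles $x_ix_{i+1}x_{i+2}x_{i+3}$ (three already-black hexagon sides) forces any long diagonal present to be black.

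The main obstacle, and the heart of the argument, is the topological coexistence analysis. Each long diagonal is a planar chord routed in the exterior of the hexagon (the kite faces fill the interior at $x$), splitting the exterior into two subdisks. The three candidate long diagonals pairwise interleave on the outer cyclic order, so at most one is present. A short red diagonal $(x_{2j},x_{2j+2})$ is the crossing diagonal of a quadrilateral face of $G_P$; all four corners of that face, in particular its two red-neighbor endpoints, must lie on the boundary of a single one of the two subdisks defined by the long chord. A case check shows this forces the short to share its red-neighbor endpoint with the red-neighbor endpoint of the long---i.e., the short and the long share a vertex. Combining these restrictions yields exactly seven admissible extra-edge configurations: no extras; one long; one short; one long plus one compatible short; two shorts (necessarily adjacent at a common red-neighbor); one long plus two compatible shorts (all three incident to the common red-neighbor); three shorts.

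Finally, because each extra edge increments exactly two local degrees by $1$, the sorted local-degree vector can be read off directly from each configuration. Tabulating the seven cases produces precisely the seven tuples of part~3, with edge counts ranging over $\{15,16,17,18\}$ (confirming part~2) and minimum entries lying in $\{3,4,5\}$ (confirming part~1).
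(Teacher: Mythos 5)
Your proof is correct and follows essentially the same route as the paper: take the $15$ forced edges of $CS$ as the base, observe that the only possible extras are the three red diagonals between the red neighbors and the three black diagonals between opposite hexagon vertices (colored via Lemma~\ref{circles}), and enumerate the admissible combinations of at most three such edges in the outer face. You additionally make explicit the coexistence analysis (at most one long diagonal, and any short diagonal must share the long diagonal's red endpoint) that the paper leaves implicit when it only exhibits the two maximal three-edge configurations and asserts that the remaining vectors arise from one or two added edges.
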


\begin{proof}
  The first tuple for $\overrightarrow{H(x)}$ is the degree vector of $CS$
  and any sparser subgraph cannot match $CS$. As $x$ is the corner of three
  kites, one can add at most three extra edges in the outer face of $CS$,
  namely $(x_2, x_4), (x_2, x_6), (x_6, x_4)$ with $\overrightarrow{H(x)} =
  (5,5,5,5,5,5,6)$ and $(x_2, x_4), (x_2, x_6), (x_2, x_5)$ with
  $\overrightarrow{H(x)} = (4,4,5,5,6,6,6)$, where e.g., $(x_2, x_5)$ must
  be black and the other edges are red. The other degree vectors result from
  one or two edges added to $CS$. \qed
\end{proof}

Obviously, $\overrightarrow{H(a)} = (3,4,4,5,5,5,6)$ for vertex $a$
of $G_{17}$ in Fig.\ \ref{fig:base} and vertex $b$ has local degree
$3$. Moreover, $\overrightarrow{H(x)} = (4,4,5,5,5,5,6)$ if $x$ is
on the inner cycle of $CC$ and the $CR$-reduction is good, such as
$u, v, y,z$ in $G_{17}$, and if $x$ is on the cycle of an extended
wheel graph $XW_{2k}$ for $k \geq 4$. Finally, the maximum degree
vector
 $\overrightarrow{H(x)} = (5,5,5,5,5,5,6)$ appears at every vertex of $XW_6$ and
 at two candidates of $CC$ if there is a
blocking red edge, e.g., $(b,g)$  in Fig. \ref{fig:G17ah}.

\begin{definition}
The subgraph $H(x)$ of a candidate $x$ of an optimal 1-planar graph
is \emph{fixed} if its embedding and coloring is uniquely
determined. It has a \emph{partial coloring} if two neighbors of $x$
may change places and the coloring of the incident edges is open,
and, finally, $H(x)$ is \emph{unclear} if the coloring of the edges
of $H(x)$ is undecided.
\end{definition}

\begin{lemma}\label{localtest}
  Let $x$ be a candidate of an optimal 1-planar graph $G$ and $H(x)$ the
  subgraph induced by $x$ and its neighbors.

  \begin{enumerate}
  \item If $\tau(x) = 3$, then the coloring of $H(x)$ is fixed except for
    $\overrightarrow{H(x)} = (3,4,5,5,5,6,6)$, where there is a
    partial coloring.
  \item If $\tau(x) = 4$, then  there is a
    partial coloring.
  \item If $\tau(x) = 5$, then
    the edge coloring is unclear.
  \end{enumerate}
\end{lemma}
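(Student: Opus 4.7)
The plan is a case analysis over the seven degree vectors enumerated in Lemma~\ref{lem:H(x)}. The opening step is the following bookkeeping: in the canonical $CS$ the three red neighbors of $x$ have local degree $3$ and the three black neighbors have local degree $5$, and each added blocking edge raises the local degree of exactly two neighbors by one. A blocking red edge joins two red neighbors, while a blocking black edge joins a black neighbor to the red neighbor at the diametrically opposite position (by Lemma~\ref{circles}, the color of each added edge is forced). Solving this small arithmetic system pins down, for each of the seven vectors, the multisets of blocking edges that can realize it.

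For $\tau(x)=3$ the argument anchors on a degree-$3$ neighbor $u$, which must be a red neighbor of $x$ carrying no blocking edge. The alternating red--black pattern on the six-cycle around $x$ then determines the bipartition, and the remaining local degrees locate every blocking edge. For $(3,3,3,5,5,5,6)$, $(3,3,4,5,5,6,6)$, and $(3,4,4,5,5,5,6)$ every color is forced, because degree $4$ flags a red neighbor incident to exactly one blocking edge and degree $6$ flags a black neighbor incident to exactly one blocking black edge. The exceptional vector $(3,4,5,5,5,6,6)$ is realized by one blocking red together with one blocking black edge sharing a red endpoint; that endpoint is pushed to local degree $5$ and becomes locally indistinguishable from two untouched black neighbors, so one pair of vertices may be swapped and the coloring is only partial.

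For $\tau(x)=4$ the degree-$3$ anchor is unavailable since every red neighbor has been touched by some blocking edge. The vector $(4,4,5,5,5,5,6)$ is realized by two blocking red edges forming a path on the three red neighbors, matching the inner-cycle situation in a $CC$ (and a cycle vertex of $XW_{2k}$ for $k\ge 4$); the induced $H(x)$ admits an involution exchanging two candidates playing symmetric ``pole'' roles, which yields a partial coloring. The vector $(4,4,5,5,6,6,6)$ is realized by two blocking red edges and one blocking black edge sharing a common red endpoint, as in the enumeration inside the proof of Lemma~\ref{lem:H(x)}; the two non-$x$ degree-$6$ vertices (one red with three incident blocking edges, one black with one) are adjacent to every other neighbor of $x$ and are abstractly swappable, again giving a partial coloring. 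Finally, for $\tau(x)=5$ the unique realization is the triangle of three blocking red edges, and $H(x)$ is then isomorphic to $K_7$ minus a perfect matching on the six neighbors, which is precisely the neighborhood structure at every vertex of $XW_6$; its six inequivalent embeddings (Proposition~\ref{prop:characterization}) show that the coloring is completely unclear.

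The main obstacle will be arguing rigorously, in each partial-coloring case, that exactly one pair of neighbors may be swapped and that no wider ambiguity remains. Lemma~\ref{circles} supplies the key constraint: odd cycles must contain a red edge and a four-cycle with three black edges is entirely black, so alternative colorings of the non-swappable edges are ruled out. For the fully unclear case the comparison with $XW_6$ is immediate, but it still needs to be verified that the triangle of three blocking red edges together with the $CS$ cycle structure admits the full group of automorphisms realizing all six $XW_6$ embeddings.
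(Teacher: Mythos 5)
Your proposal is correct and follows essentially the same route as the paper: anchor on a local-degree-$3$ neighbor, which must be a red neighbor whose kite is forced by the uniqueness of the $K_4$ embedding, then classify the remaining blocking edges from the local degrees and identify the single swappable pair in the partial-coloring cases. The only difference is presentational — you make the degree arithmetic (each blocking edge raising two local degrees by one, with its color forced by Lemma~\ref{circles}) explicit, and you treat the $\tau(x)=5$ case via the $XW_6$ comparison, both of which the paper leaves implicit.
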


\begin{proof}
First, a black neighbor  of $x$ has local degree at least $5$.

If $\tau(x)=3$ and $x_4$ has local degree $3$, then $x_4$ is a red
neighbor of $x$ and  has two more neighbors, say $x_3$ and $x_5$,
that are black neighbors of $x$ and $x_4$. Then the subgraph induced
by $(x, x_3, x_4, x_5)$ must form a kite, since it is $K_4$ and the
embedding is unique by Proposition \ref{prop:characterization}.
If there is another vertex with local degree $3$, then the above
applies again, such that the circular order of the neighbors of $x$,
the edge coloring  and the embedding  of $H(x)$ are uniquely
determined. If $d_2 = d_3 =4$, then the two vertices with local
degree $4$ are red neighbors of $x$ and they have a red edge in
between, whose removal leaves two  vertices with local degree $3$.
Again, $H(x)$ is uniquely determined.
Finally, consider  $ \overrightarrow{H(x)} = (3,4,5,5,5,6,6)$ with
$x_4$ of local degree $3$ and   $x_2$ of local degree $4$. Vertex
$x_4$ determines $x_3$ and $x_5$ as its black neighbors on the
cycle. There are no edges $(x_2, x_4)$ and $(x_2, x_5)$ such that
$x_2$ is opposite of $x_5$. Vertices $x_2$ and $x_4$ have $x_3$ as
common neighbor  and $x_3$ is a black neighbor of $x, x_2$ and
$x_4$.
 However, the roles of $x_1$ and $x_6$ are
undecided in $H(x)$. They may change places in the circular order
around $x$, but the edge $(x_1, x_6)$ is black, see Fig.
\ref{undecided}.
Thus there is a partial coloring of $H(x)$.

If $\tau(x) = 4$,   there are two vertices of local degree $4$ by
Lemma \ref{lem:H(x)}. Let  $x_2$ and $x_4$ be these vertices, which
are red neighbors of $x$. The third red neighbor of $x$ has local
degree at least $5$. Hence,   edge $(x_2, x_4)$ is missing in
$H(x)$. There is a vertex of local degree $5$ that is not adjacent
to $x_4$ and is opposite of $x_4$ and similarly for $x_2$. Let $x_1$
and $x_5$ be the respective vertices, which are black neighbors of
$x$. Edges $(x_1, x_2)$ and $(x_4, x_5)$ are black, and the subgraph
induced by $\{x, x_1,
  x_3, x_4, x_6\}$ is fixed. However,   $x_3$ and $x_6$ may change
  places and there is a partial edge coloring.
  Finally, the neighbors of $x$ are indistinguishable and the edge
  coloring is unclear if $\tau(x) = 5$.
\qed
\end{proof}

\begin{figure}
  \centering
  \subfigure[The correct 1-planar drawing.]{
    \includegraphics[scale=0.37]{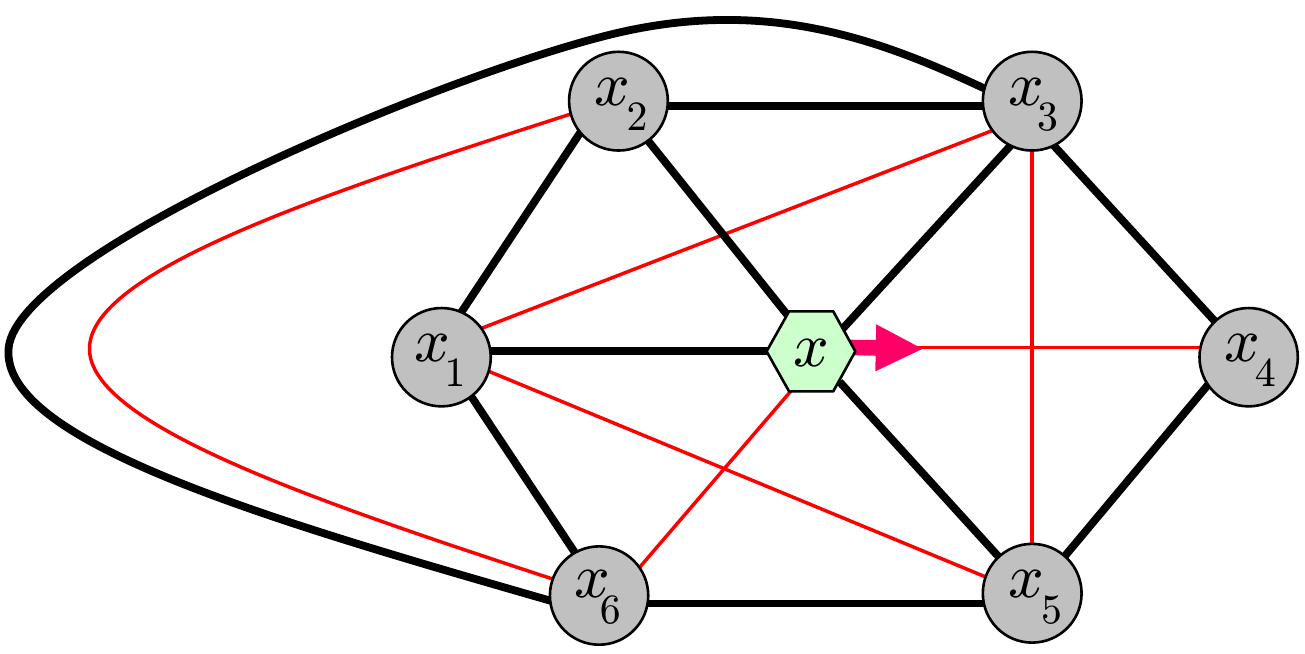}
    \label{undecided-a}
  }
  \hfil
  \subfigure[The incorrect 1-planar drawing with a new edge coloring.
      For a correct drawing swap $x_1$ and $x_6$.]{
     \parbox[b]{6.3cm}{%
       \centering
       \includegraphics[scale=0.37]{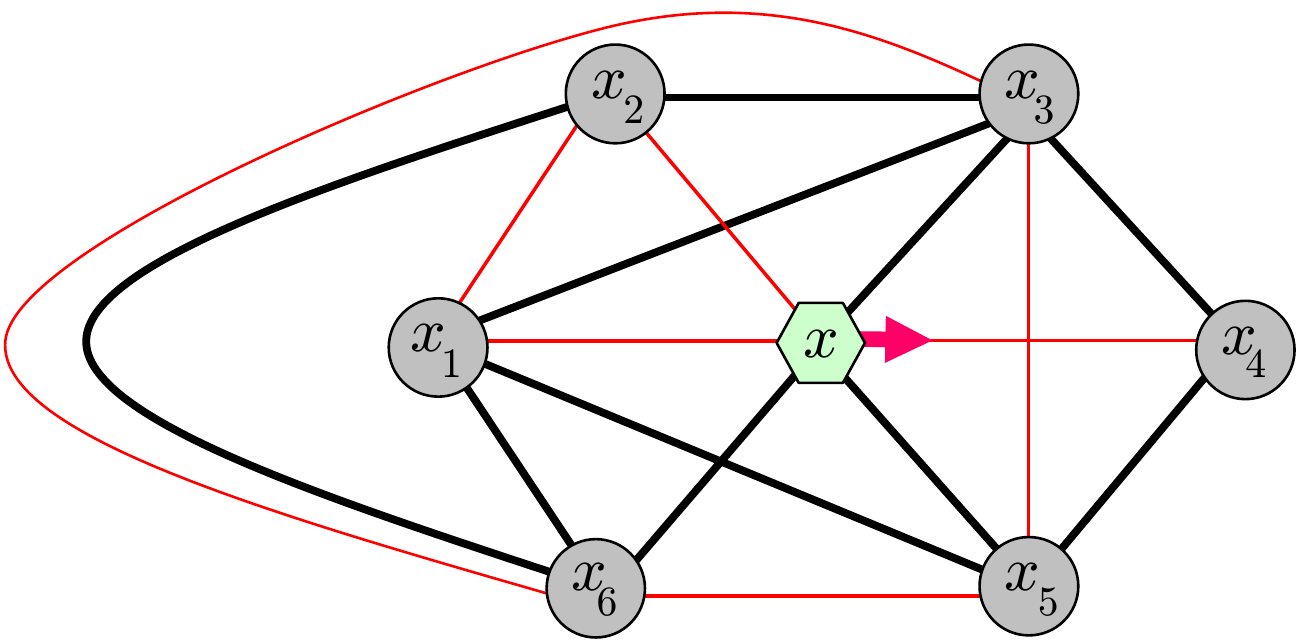}
     }
    \label{undecided-b}
  }
  \caption{An ambiguous case where the embedding of $x_1$ and $x_6$
    is not yet fixed.}
  \label{undecided}
\end{figure}

Fortunately, neighboring candidates help each other in determining
the edge coloring. Consider the candidates $x_1, x_2, x_3, x_4$ of
the inner cycle of $CC$, as given in Fig.\ \ref{CR}, and assume that
the graph is not $XW_6$. Then $\tau(x_i) \geq 4$ and  $\tau(x_i)= 4$
for two of them, say $x_1$ and $x_3$. Then $\overrightarrow{H(x_1)}$
determines that $(x_1, x_2), (x_1, x_4), (x_2, v_2)$ and $(x_4,
v_4)$ are black, whereas $v_1$ and $x_3$  may change places.
Similarly, $\overrightarrow{H(x_3)}$ determines the black edges
$(x_3, x_2), (x_2, v_2),$ $(x_3, x_3),$ $(x_4, v_4)$. If $\tau(x_2)
=5$ then $\tau(x_4)= 5$ and their coloring is unclear. However, the
black neighbors of $x_2$ are $x_1, v_2, x_3$ and $v_1$ is a red
neighbor, which implies that $v_1$ is a black neighbor of $x_1$ and
the case is decided. Hence, the coloring of a subgraph matching $CC$
is fixed and its embedding is unique.

In fact, we have the following situation:

\begin{lemma} \label{lem:twoH}
Let $H$ be a (sub-) graph of size $7$ with a vertex $x$ of degree
$6$. Then $H$ is unique and is maximal planar if
$\overrightarrow{H(x)} = (3,3,3,5,5,5,6)$. There are (at least) two
graphs $H_1$ and $H_2$ if $\overrightarrow{H(x)} = (3,4,4,5,5,5,6)$,
and $H_1$ and $H_2$ are non-planar and $1$-planar.
\end{lemma}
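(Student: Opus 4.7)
The plan is to handle the two degree vectors separately.

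\emph{Case $\overrightarrow{H(x)} = (3,3,3,5,5,5,6)$.} The edge count is $\tfrac{1}{2}(30) = 15$. Since the degree-$6$ vertex $x$ is universal in $H$, classifying $H$ reduces to classifying graphs on six vertices with degree sequence $(2,2,2,4,4,4)$. I would first argue that the three degree-$4$ vertices $A, B, C$ are pairwise adjacent: if $A$ and $B$ were non-adjacent, then both would have to be adjacent to all four remaining vertices, saturating the degree of each of the three degree-$2$ vertices via $A$ and $B$, which would leave $C$ with only $A$ and $B$ as neighbors and contradict $\deg C = 4$. Once the triangle $ABC$ is established, each of $A, B, C$ has exactly one non-neighbor lying in the degree-$2$ set, and these three non-adjacencies must form a perfect matching (otherwise some degree-$2$ vertex loses too many potential edges to reach degree $2$). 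This uniquely determines $H$ up to isomorphism, and a direct check identifies the resulting graph with $CS$. Since $CS$ is planar (as remarked in the text) and $|E(H)| = 15 = 3 \cdot 7 - 6$, the graph $H$ is a planar triangulation and hence maximal planar.

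\emph{Case $\overrightarrow{H(x)} = (3,4,4,5,5,5,6)$.} The edge count is $16 > 15 = 3n - 6$, so Euler's formula immediately gives non-planarity of every such $H$. To exhibit two non-isomorphic $1$-planar representatives, I take $H_1 := CS + (x_2, x_4)$, obtained from $CS$ by adding one edge between two of the three degree-$3$ (red) neighbors of $x$ (see Fig.\ \ref{fig:blockingedges}(a)). This raises the local degrees of $x_2$ and $x_4$ from $3$ to $4$ and keeps the other degrees fixed, yielding the required degree vector; the $1$-planar embedding of $CS$ extends to $H_1$ by routing the new edge through the outer hexagonal face without creating any new crossing. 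For the second graph, I take $H_2$ to be the graph on $\{x, A, B, C, D, E, F\}$ in which $x$ is universal and the induced subgraph on the remaining six vertices is $K_{3,3}$ on parts $\{A, B, C\}, \{D, E, F\}$ with the edge $(A, F)$ removed and the two chords $(A, B), (A, C)$ added. A direct check gives $\overrightarrow{H_2(x)} = (3,4,4,5,5,5,6)$. Since $K_{3,3} - (A, F)$ is planar, its embedding has four $4$-faces; placing $x$ in one of them (so that four spokes are drawn without crossings) and routing the remaining two spokes and the chord $(A, C)$ across one edge each produces a $1$-planar drawing of $H_2$.

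To show $H_1 \not\cong H_2$, I would compute the number of triangles through the unique degree-$3$ vertex. In $H_1$ this vertex is $x_6$, whose three neighbors $\{x, x_1, x_5\}$ are pairwise adjacent (using the spokes and the red edge $(x_1, x_5)$ of kite $3$ in $CS$), giving three triangles. In $H_2$ the degree-$3$ vertex $F$ has neighbors $\{x, B, C\}$ with $B, C$ non-adjacent in $H_2$, giving only two triangles. Hence $H_1 \not\cong H_2$. The main obstacle is producing the explicit $1$-planar drawing for $H_2$; the remaining pieces --- structural uniqueness in the first case and non-planarity in the second --- follow from short degree-sequence and edge-count arguments.
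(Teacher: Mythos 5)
Your proof is correct and follows the same basic strategy as the paper: determine the first graph uniquely by local-degree analysis, exhibit two explicit witnesses for the second degree vector, get non-planarity from the edge count $16 > 3\cdot 7-6$, and get $1$-planarity from an explicit drawing. Two points of genuine difference. First, your uniqueness argument for $(3,3,3,5,5,5,6)$ works from the degree-$4$ triangle and the perfect matching of non-adjacencies, whereas the paper argues from the neighborhoods of the three local-degree-$3$ vertices and rules out an edge between two of them; yours is a bit more systematic and actually closes the case the paper leaves implicit (that no edges can occur among the degree-$2$ vertices). Second, your $H_2$ is \emph{not} the paper's $H_2$: in the graph of Fig.~\ref{fig:secondH} the local-degree-$3$ vertex is adjacent to a local-degree-$4$ vertex (this adjacency is exactly the paper's distinguishing invariant), while in your $H_2$ the degree-$3$ vertex $F$ has neighbors $x,B,C$ of local degrees $6,5,5$. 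Your witness is nonetheless valid --- the two local-degree-$4$ vertices $D,E$ are non-adjacent in your $H_2$ but adjacent in $H_1$, and your triangle count through the degree-$3$ vertex also separates the two graphs --- and the lemma only claims ``at least two'' graphs. One slip to fix: your drawing recipe routes ``the remaining two spokes and the chord $(A,C)$'' but forgets the chord $(A,B)$; in the planar embedding of $K_{3,3}-(A,F)$ both chords $(A,B)$ and $(A,C)$ lie as diagonals of (distinct) quadrilateral faces and can be drawn crossing-free, after which placing $x$ in the face $(D,B,F,C)$ and crossing one edge with each of the spokes to $A$ and $E$ gives the required $1$-planar drawing, so the conclusion stands.
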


\begin{proof}
Let $\{x, v_1, \ldots, v_6\}$  be the vertices of $H$, where $x$ has
degree $6$, and   $v_2, v_4, v_6$ have  local degree $3$. If each of
$v_2, v_4, v_6$ has two vertices of $v_1, v_3, v_5$ as neighbors,
then $H = CS$. Clearly, $CS$ satisfies the assumptions and is
planar. For a contradiction, suppose that there is an edge $(v_2,
v_4)$ and let $u, v$ be the two remaining neighbors of $v_2$ and
$v_4$. Then $v_1, v_3$ and $v_5$ cannot have local degree $5$ and
there is no graph as required.

Let $H_1$ be obtained from $CS$ by adding   edge $(x_2, x_4)$, and
let $H_2$ be the graph displayed in Fig.\ \ref{fig:secondH}. There
is an edge from the vertex of degree $3$ to a vertex of degree $4$
in $H_2$, which does not exist in $H_1$. The graphs are non-planar,
since there are $7$ vertices and $16$ edges and they are 1-planar,
as shown by the figures. \qed
\end{proof}

\begin{figure}
   \begin{center}
    \includegraphics[scale=0.55]{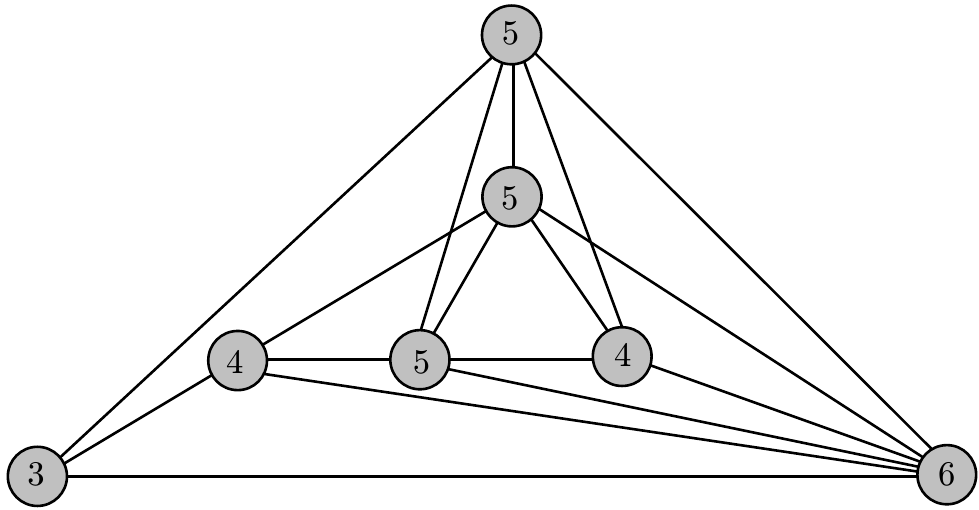}
     \caption{Graph $H_2$ from Lemma \ref{lem:twoH} with vertices labeled by
     degree  \label{fig:secondH}}
   \end{center}
\end{figure}

Similarly, there is a unique subgraph that matches $CC$ if $CR$ is
good. The unique embedding is obtained from pairs of  vertices that
are placed opposite each other on the inner and outer cycles.

\begin{lemma} \label{lem:uniqueCC}
There is a unique graph $H$  that matches $CC$ if $H$ has four
mutually neighbored candidates   $x_1, x_2, x_3, x_4$ with
$\overrightarrow{H(x_i)} = (4,4,5,5,5,5,6)$ for $i=1,2,3,4$.
\end{lemma}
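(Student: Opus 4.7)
The plan is to pin down $H$ from three ingredients: the kite-forcing property of Lemma~\ref{circles}, the quadrangulation structure of the planar skeleton $G_P$, and the prescribed degree vector at each $x_i$.

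First, I would argue that the $K_4$ induced on the mutually adjacent candidates $\{x_1,x_2,x_3,x_4\}$ must embed as a kite. The planar embedding of $K_4$ contains a triangle of planar edges, which is ruled out by Lemma~\ref{circles}; by Proposition~\ref{prop:characterization} and Fig.~\ref{K4-drawings}, the only alternative is the kite. Relabeling, let the planar 4-cycle be $x_1x_2x_3x_4$ with crossing diagonals $x_1x_3$ and $x_2x_4$. Because planar and crossing edges alternate in the rotation around each vertex of an optimal 1-planar graph, and at $x_i$ the three kite edges appear consecutively in the pattern planar ($x_{i-1}$), crossing ($x_{i+2}$), planar ($x_{i+1}$), the three remaining edges at $x_i$ are forced into the pattern crossing-planar-crossing. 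Write $v_i$ for the unique planar outer neighbor of $x_i$.

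Second, working in the planar skeleton $G_P$, which is a 3-connected quadrangulation (Proposition~\ref{prop:characterization}), the kite face $(x_1,x_2,x_3,x_4)$ has empty interior and is a genuine face of $G_P$. The face of $G_P$ sharing edge $x_ix_{i+1}$ with the kite face is again a 4-cycle; since $v_i$ and $v_{i+1}$ are the only third planar neighbors of $x_i$ and $x_{i+1}$ respectively, that face must be $(x_i,x_{i+1},v_{i+1},v_i)$. This forces the edge $v_iv_{i+1}$ in $G_P$ and, by inserting the kite in this face, the crossings $x_iv_{i+1}$ and $x_{i+1}v_i$. In particular, the three outer neighbors of $x_i$ are exactly $\{v_{i-1},v_i,v_{i+1}\}$, indices taken modulo 4.

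Third, I would verify that $v_1,v_2,v_3,v_4$ are distinct and that $H$ carries no further edges. If $v_1=v_3$, the common vertex would be adjacent to all four $x_j$'s and to $v_2,v_4$, so its local degree in $H(x_1)$ would equal $6$, producing a second entry of $6$ in the degree vector and contradicting $(4,4,5,5,5,5,6)$; the symmetric argument rules out $v_2=v_4$. Therefore $H$ has eight distinct vertices. The only possible extra chord inside $H$ is one of the outer diagonals $v_1v_3$ or $v_2v_4$; inserting $v_1v_3$, for example, raises the local degrees of both $v_1$ and $v_3$ in $H(x_2)$ from $4$ to $5$, turning the degree vector of $H(x_2)$ into $(5,5,5,5,5,5,6)$, again a contradiction. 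Hence $H$ matches $CC$ uniquely, with the 1-planar embedding inherited from the face structure of $G_P$. The main obstacle I anticipate is carrying out step two cleanly: one must work inside $G_P$ and combine the quadrangulation property with the identification of $v_i$ as the unique extra planar neighbor of $x_i$ to force the bounding face to be exactly $(x_i,x_{i+1},v_{i+1},v_i)$.
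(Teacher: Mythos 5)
Your overall strategy is sound and reaches the right conclusion, but it travels a genuinely different road from the paper. The paper's proof is a four-line counting argument that never touches the embedding: since each $x_i$ has degree six and is adjacent to the other three candidates, it has exactly three neighbors among the four remaining vertices of $H$, i.e.\ it \emph{excludes} exactly one; if two candidates excluded the same vertex, each would have local degree $6$ in the other's induced neighborhood $H(x_j)$, producing a second entry $6$ and contradicting $(4,4,5,5,5,5,6)$. Injectivity of the exclusion map then forces the bipartite adjacency of $CC$ outright. You instead reconstruct $H$ topologically: kite on the $K_4$, faces of the quadrangulation $G_P$ adjacent to the kite, forced outer cycle, then distinctness and absence of chords via the degree vector. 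Your steps two and three are fine (your exclusion of $v_1=v_3$ and of the chord $v_1v_3$ is in fact the same degree-vector trick the paper uses), but your route is considerably heavier and reintroduces exactly the embedding machinery that the paper's graph-theoretic formulation is designed to avoid.

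The concrete weak point is your step one, not step two. Lemma~\ref{circles} only rules out an all-black triangle in the induced $K_4$; it does not by itself force the configuration you assume. Two things remain open: (i) the $K_4$ could carry \emph{three} red edges (black edges forming a star $K_{1,3}$ is still bipartite), and (ii) even if exactly two non-adjacent edges $x_1x_3$ and $x_2x_4$ are red, each red edge of an optimal 1-planar graph crosses its own kite partner, and nothing you cite guarantees that $x_1x_3$ and $x_2x_4$ cross \emph{each other} rather than edges outside the $K_4$ (in which case $(x_1,x_2,x_3,x_4)$ is not a face of $G_P$ and your step two collapses). Eliminating these cases needs an extra argument from the degree vectors — essentially the content of Lemma~\ref{localtest} for $\tau(x)=4$, which identifies the two local-degree-$4$ vertices as red neighbors of $x$ outside the $K_4$ — or one can bypass the issue entirely by adopting the paper's exclusion-counting argument, which needs none of this.
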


\begin{proof}
Let $v_1, v_2, v_3, v_4$ be the remaining vertices of $H$.  Each
vertex of the inner cycle of $CC$ excludes exactly one vertex of the
outer cycle as a neighbor.
 Assume this does not hold in $H$ and let e.g.,\
 $x_1$ and $x_2$ both exclude $v_1$ as a neighbor.
 Since $x_1$ and $x_2$ are candidates, vertices $v_2, v_3,
v_4$ are their neighbors, and both $x_1$ and $x_2$ contribute a $6$
in the degree vector of the other such that $\overrightarrow{H(x_1)}
= (\ldots, 6,6)$, a  contradiction. \qed
\end{proof}

The usability of a reduction is completely determined by the degree
vector of a candidate and the type distinguishes between a $SR$-~
and a $CR$-reduction.

\begin{lemma}\label{apply-SR}
  A candidate $x$ of an optimal 1-planar graph is good for $SR$ if and only
  if $\tau(x) = 3$ and $SR(x \mapsto v)$ is good if  $v$ has
  local degree $3$.
\end{lemma}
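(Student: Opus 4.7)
The plan is to prove both assertions jointly, using Lemma~\ref{lem:H(x)} to constrain the degree vectors, Lemma~\ref{localtest} to pin down the coloring and the kite structure inside $H(x)$, and Proposition~\ref{prop:characterization} to exploit the fact that every quadrangular face of the planar skeleton carries a pair of crossing red diagonals.

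For the ``only if'' direction I would argue the contrapositive. Suppose $\tau(x)\geq 4$; by Lemma~\ref{lem:H(x)} every red neighbor $u$ of $x$ has local degree at least $4$ in $H(x)$, so $u$ has at least one edge in $H(x)$ beyond the three it already has in $CS$. Inspecting $CS$, any such extra edge at $u$ can only go to another red neighbor of $x$ (producing a blocking red edge for $SR(x\mapsto u)$) or to the black neighbor of $x$ opposite $u$ in $CS$ (producing a blocking black edge for $SR(x\mapsto u)$). Hence each of the three possible targets is blocked, so no $SR(x\mapsto u)$ is feasible and $x$ is bad for $SR$.

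For the sufficient condition (which also yields the ``if'' direction), let $\tau(x)=3$ and pick $v\in H(x)$ of local degree $3$. By Lemma~\ref{localtest}, $v$ is a red neighbor of $x$ and the quadruple $(x,y,v,z)$ spans a kite, where $y$ and $z$ are the two black neighbors of $x$ adjacent to $v$ in the rotation at $x$. Since $v$ attains only its $CS$-degree, it has no further neighbors in $H(x)$, so $SR(x\mapsto v)$ has neither a blocking red edge nor a blocking black edge incident to $v$.

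The main obstacle is to exclude a blocking \emph{vertex}, i.e.\ to show $\deg_G(y)>6$ and (symmetrically) $\deg_G(z)>6$. I would argue by contradiction: assume $\deg_G(y)=6$, so $y$ is itself a candidate and has exactly three planar-skeleton neighbors. Two of them are $x$ and $v$, coming from the kite $(x,y,v,z)$; the third, call it $b$, must come from the second kite of $H(x)$ incident to $y$, so $b$ is the red neighbor of $x$ on the opposite side of $y$ from $v$ in the rotation around $x$, and in particular $b\in H(x)$. The planar skeleton therefore has three quadrangular faces at $y$: the two kites inside $H(x)$, plus a third face $F$ whose cyclic boundary visits $y$, $v$, some vertex $w$ opposite $y$, and $b$. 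By Proposition~\ref{prop:characterization} the two diagonals of $F$ are red edges of $G$; one of them is $(v,b)$. Since $b\in H(x)$, this edge lies in $H(x)$ and raises the local degree of $v$ to at least $4$, contradicting our choice of $v$. A symmetric argument excludes $\deg_G(z)=6$, so $SR(x\mapsto v)$ has no blocking vertex and is feasible, completing the proof.
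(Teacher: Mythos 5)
Your proof is correct and follows the same basic strategy as the paper's: reduce goodness to the degree vector via Lemmas~\ref{lem:H(x)} and \ref{localtest}, and detect blocking edges by counting local degrees. The ``only if'' direction is essentially identical --- the paper asserts that $\tau(x)\geq 4$ forces a blocking red edge at every red neighbor, and your slightly weaker conclusion (a blocking red \emph{or} black edge at every red neighbor) suffices equally well. Where you genuinely add something is in the sufficiency direction: the paper's proof simply declares $SR(x\mapsto x_4)$ good when $x_4$ has local degree three, implicitly relying on the earlier, unproved remark that ``the case of a blocking vertex is covered by a blocking red edge between the black neighbors of the blocking vertex on the outer cycle''; you supply that missing argument explicitly, observing that a degree-six black neighbor $y$ would have exactly three skeleton neighbors $x$, $v$, $b$ and hence a third quadrangular face whose red diagonal $(v,b)$ lies inside $H(x)$ and raises the local degree of $v$ above three. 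Conversely, you omit the paper's discussion of the partially colored case $\overrightarrow{H(x)}=(3,4,5,5,5,6,6)$, where $x_1$ and $x_6$ may change places; this costs you nothing logically, since your argument never needs to distinguish them, but the paper uses that discussion to confirm that the reduction can still be executed with a consistent (partial) edge coloring, which matters for the bookkeeping in Section~\ref{sect:lineartime}.
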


\begin{proof}
Let $H(x)$ be the subgraph matching $CR$ and let $\{x, x_1, \ldots,
x_6\}$ be the vertices of $H(x)$. Then $H(x)$ has a unique embedding
  matching the embedding of $CS$ as shown in Lemma
\ref{localtest} if $\tau(x) = 3$ and $\overrightarrow{H(x)} \neq
(3,4,5,5,5,6,6)$. Then $SR(x \mapsto x_4)$ is good if $x_4$ has
local degree $3$. Otherwise, there is a partial coloring and $x_1$
and $x_6$ may change places if $x_4$ has local degree $3$ and $x_2$
has local degree $4$. This ambiguity does not hinder using $SR(x
\mapsto x_4)$, which removes $x$ and the edge $(x_3, x_5)$ and
inserts the edges $(x_1, x_4), (x_6, x_4)$ and the red edge $(x_2,
x_4)$. Then the color of the edges incident to $x_1$ and $x_6$
remains open.

If $\tau(x) \geq 4$, then every red neighbor of $x$ has a blocking
red edge and there is no good $SR$-reduction. \qed
\end{proof}

\begin{lemma}\label{apply-CR}
  A candidate $x_1$ of an optimal 1-planar graph is good for $CR$ if and only
  if $\tau(x_1) = 4$ and there are three more candidates $x_2, x_3, x_4$
  with $\overrightarrow{H(x_i)} = (4,4,5,5,5,5,6)$, and $CC$ matches the subgraph induced by
  $x_1, x_2, x_3, x_4$ and its four common neighbors.
\end{lemma}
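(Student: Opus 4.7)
The plan is to prove both implications by matching the local combinatorial picture at the inner-cycle vertices of $CC$ with the claimed degree vector $(4,4,5,5,5,5,6)$, and then to show that any deviation signals either a failure to match $CC$ or the presence of a blocking red edge.

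For the forward direction, suppose $CR$ is good at $x_1$. Then by definition there is a subgraph $H$ matching $CC$ with $x_1$ on the inner $4$-cycle, no blocking vertex, and no blocking red edge $(v_1,v_3)$ or $(v_2,v_4)$ in the outer face. The inner cycle of $CC$ induces a $K_4$, so the four inner-cycle vertices $x_1,x_2,x_3,x_4$ are pairwise adjacent; each has degree exactly $6$ and is therefore a candidate. I would enumerate the six neighbors of $x_i$ in $H$: the three other inner-cycle vertices together with the three outer-cycle vertices other than the one opposite to $x_i$. A direct count of edges inside $H(x_i)$ gives local degree $5$ for each of the three inner neighbors, local degree $4$ for the two outer neighbors flanking the missing outer vertex (they share only one edge of the outer $4$-cycle with $H(x_i)$), and local degree $5$ for the third outer neighbor (which shares two such edges). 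This yields $\overrightarrow{H(x_i)} = (4,4,5,5,5,5,6)$ for each $i$.

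For the backward direction, assume $\tau(x_1)=4$ and that three further candidates $x_2,x_3,x_4$ realize $\overrightarrow{H(x_i)}=(4,4,5,5,5,5,6)$, and that $CC$ matches the subgraph $H$ induced by these four vertices and their common neighbors. By Lemma \ref{lem:uniqueCC}, $H$ is isomorphic to $CC$ as an abstract graph, and its $1$-planar embedding is uniquely forced by Proposition \ref{prop:characterization}. To conclude that $CR$ is good, I must rule out the two obstructions: (i) no inner-cycle vertex is a blocking vertex, which is immediate because each such vertex has degree exactly $6$, which is precisely what $CR$ consumes and not less; and (ii) no blocking red edge of the outer face exists. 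For (ii) I would argue by the converse of the forward count: the diagonal $(v_2,v_4)$ lies inside $H(x_1)$, and its presence would raise the local degrees of $v_2$ and $v_4$ from $4$ to $5$, changing $\overrightarrow{H(x_1)}$ to $(5,5,5,5,5,5,6)$ and contradicting the hypothesis; symmetrically, $(v_1,v_3)$ would be detected by $\overrightarrow{H(x_2)}$ (or $\overrightarrow{H(x_4)}$), since those $H(x_j)$ contain both $v_1$ and $v_3$.

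The main obstacle I anticipate is keeping the bookkeeping of "which outer vertex is skipped by which inner vertex" consistent across the four subgraphs $H(x_j)$, so that each potential blocking diagonal is certified to affect some $\overrightarrow{H(x_j)}$. A related subtlety is the special graph $XW_6$, in which both diagonals $(v_1,v_3)$ and $(v_2,v_4)$ could coexist; here every candidate has $\overrightarrow{H(x)}=(5,5,5,5,5,5,6)$, so the hypothesis of the lemma already excludes this case, and the paper's earlier remark that at most one red blocking edge occurs outside $XW_6$ disposes of any remaining ambiguity. Once these points are handled, both directions follow from the degree-vector computation sketched above.
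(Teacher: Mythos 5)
Your overall strategy matches the paper's: both directions hinge on the degree vector $(4,4,5,5,5,5,6)$, and the converse leans on Lemma \ref{lem:uniqueCC}. Your forward direction is in fact more explicit than the paper's, which merely eliminates the alternative vectors from Lemma \ref{lem:H(x)} ($(5,5,5,5,5,5,6)$ forces a blocking red edge, $(4,4,5,5,6,6,6)$ forces a black edge between opposite neighbours of a centre); you instead compute the local degrees inside $CC$ directly, and your count is correct (the two outer vertices flanking the excluded one get local degree $4$, the third gets $5$, the three inner neighbours get $5$). Your observation that each outer diagonal is visible in at least one of the induced subgraphs $H(x_j)$ --- $(v_2,v_4)$ in $H(x_1)$ and $H(x_3)$, $(v_1,v_3)$ in $H(x_2)$ and $H(x_4)$ --- is exactly the mechanism that makes the degree-vector test sound, and the $XW_6$ caveat is handled correctly.

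There is one genuine misstep in your converse, item (i). For a $CR$-reduction the paper defines a \emph{blocking vertex} to be a degree-six vertex matched by the \emph{outer} cycle of $CC$, not the inner one: the reduction deletes the four inner vertices and costs each outer vertex $v_j$ two incidences (three inner neighbours lost, one diagonal gained), so a degree-six $v_j$ would drop to degree four and destroy optimality. Your argument that the inner vertices "have degree exactly $6$, which is precisely what $CR$ consumes" addresses vertices that are removed outright and therefore cannot block anything; the actual obstruction, a degree-six outer vertex, is never ruled out by what you wrote. The gap is repairable: if $v_j$ has degree six it lies on exactly three quadrilateral faces of the planar skeleton, two of which are side faces of $CC$, so its unique face outside $CC$ is $(v_{j-1},v_j,v_{j+1},w)$ and carries the red diagonal $(v_{j-1},v_{j+1})$ --- precisely one of the two blocking diagonals that your item (ii) already excludes via the degree vectors. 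This is the paper's earlier remark that a blocking vertex is always "covered by" a blocking red edge; you need to invoke it (or reprove it as above) to close the converse.
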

\begin{proof}
If $CR$ is good, then the degree vector of the four vertices that
match the vertices of the inner cycle of $CC$ is $(4,4,5,5,5,5,6)$.
The degree vector $(5,5,5,5,5,6,6)$ implies a blocking red edge and
$\overrightarrow{H(x)} = (4,4,5,5,6,6,6)$ implies a black edge
between two opposite neighbors of a center, which violates $CC$.

Conversely, there is a unique subgraph matching $CC$ by Lemma
\ref{lem:uniqueCC} if the degree vector of the candidates is
$\overrightarrow{H(x)} = (4,4,5,5,6,6,6)$, and there is no blocking
edge. \qed
\end{proof}

\begin{corollary} \label{good-bad}
For each candidate $x$ of a graph $G$ it can be checked in $O(1)$
time whether $x$ is good or bad. It can be determined which
reduction applies if $x$ is good.
 The reduction takes $O(1)$
time including   a (partial) coloring of the edges.
\end{corollary}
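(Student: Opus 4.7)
The plan is to show that everything needed can be read off from the constant-size neighborhood $H(x)$ and then to invoke the earlier lemmas.

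First, since $x$ is a candidate, it has degree exactly $6$, so $H(x)$ has $7$ vertices and by Lemma~\ref{lem:H(x)} at most $18$ edges. Computing the degree vector $\overrightarrow{H(x)}$ and the type $\tau(x)$ therefore requires only inspecting the six neighbors of $x$ and the edges among them, which is $O(1)$ work provided the graph is stored in an adjacency structure that allows constant-time neighbor enumeration for vertices of bounded degree (each neighbor of $x$ has local degree at most $6$, hence total degree at least $6$, but we only need the edges inside $H(x)$, which is $O(1)$ many).

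Next, the classification proceeds by cases on $\tau(x)$. If $\tau(x)=5$ or if $\tau(x)=4$ with $\overrightarrow{H(x)}=(4,4,5,5,6,6,6)$, Lemmas~\ref{apply-SR} and~\ref{apply-CR} declare $x$ bad. If $\tau(x)=3$, Lemma~\ref{apply-SR} certifies that $x$ is good for $SR$, and the targets are precisely the neighbors of local degree $3$; these are read off from $\overrightarrow{H(x)}$ in $O(1)$. If $\tau(x)=4$ with $\overrightarrow{H(x)}=(4,4,5,5,5,5,6)$, Lemma~\ref{apply-CR} requires three further candidates $x_2,x_3,x_4$ inducing the $CC$ pattern together with $x$; but these three vertices must lie inside $H(x)$ (they are the local-degree-$4$ or local-degree-$5$ neighbors of $x$ that are themselves mutually adjacent), so checking the pattern amounts to verifying a constant number of adjacencies and degree vectors, each of which is an $O(1)$ lookup.

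Once the reduction is identified, the (partial) coloring is handed over by Lemma~\ref{localtest}: for $\tau(x)=3$ the embedding of $H(x)$ is uniquely determined except in the case $\overrightarrow{H(x)}=(3,4,5,5,5,6,6)$, where a partial coloring suffices because the ambiguous vertices $x_1,x_6$ play symmetric roles with respect to the $SR$-reduction performed (as argued in the proof of Lemma~\ref{apply-SR}); for $\tau(x)=4$ and a $CR$-match, Lemma~\ref{lem:uniqueCC} yields a unique matched subgraph, and the coloring is fixed by the arguments following Lemma~\ref{localtest}. Finally, the physical application of the reduction inserts or removes a constant number of edges and vertices, so it runs in $O(1)$ time as well.

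The only mild obstacle is the $CR$-case, where one might worry that certifying the $CC$-pattern demands looking outside $H(x)$; the point to stress is that by Lemma~\ref{lem:uniqueCC} the entire matched subgraph consists of $x$, its six neighbors, and one additional vertex reached via a constant number of pointer steps from them, so the test stays $O(1)$.
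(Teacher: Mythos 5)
Your proposal is correct and takes essentially the same route as the paper: the type $\tau(x)$ and the degree vector (via Lemmas~\ref{apply-SR}, \ref{apply-CR} and \ref{localtest}/\ref{lem:uniqueCC}) decide good versus bad and which reduction applies, and the reduction itself only touches a constant-size subgraph (six resp.\ eight vertices, a constant number of edge insertions and deletions). You merely spell out in more detail than the paper the constant-time computation of $\overrightarrow{H(x)}$ and the fact that the $CC$-pattern check stays within $H(x)$ plus one extra vertex.
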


\begin{proof}
The type of $x$ decides which reduction may apply and the degree
vector(s) and the local degrees tell whether the reduction is good.
The reductions operate on subgraphs with six resp. eight vertices.
They remove one or four vertices and one more edge and insert three
or two edges. This can be accomplished in $O(1)$ time. \qed
\end{proof}

We summarize the degree vectors and their impact on an edge
coloring,  reductions and their blocking edges, and storing the
reductions in the linear-time algorithm in Section
\ref{sect:lineartime} in Table 1.
 For convenience,  assume that the circular order of the
neighbors of candidate $x$ is $(x_1, \ldots x_6)$ as in Fig.\
\ref{SR}, where $x_2, x_4$ and $x_6$  are red neighbors and $x_4
\leq x_2 \leq x_6$ if the vertices are ordered by local degree.
 Let $e=(x_2, x_4), f = (x_6, x_4), g = (x_2,
x_6)$ and let $d$ and $d'$ be the diagonals $(v_1, v_3)$ and $(v_2,
v_4)$ in case of $CC$ and a
$CR$-reduction.\\

\begin{table}[t]
\begin{center}

\begin{tabular}{c | c | c | c | c }
  $\overrightarrow{H(x)}$ & coloring  & reductions & blocking edges & storing  \\
  \hline \hline

  $(3,3,3,5,5,5,6)$ & fixed & $SR(x \mapsto x_2)$ & none &  $GOOD_e, GOOD_g$ \\
   &   & $SR(x \mapsto x_4)$  & none & $GOOD_e, GOOD_f$ \\
    &   & $SR(x \mapsto x_6)$  & none & $GOOD_f, GOOD_g$ \\

  $(3,3,4,5,5,6,6)$ & fixed & $SR(x \mapsto x_2)$ & none & $GOOD_e, GOOD_g$ \\
    &   & $SR(x \mapsto x_4)$  & none & $GOOD_e, GOOD_f$  \\
    &   & $SR(x \mapsto x_6)$  & black & none \\

   $(3,4,4,5,5,5,6)$ & fixed & $SR(x \mapsto x_2)$ & $g$   & $WAIT_e, BAD_g$ \\
    &   &   $SR(x \mapsto x_4)$  & none & $GOOD_e, GOOD_f$ \\
    &   &   $SR(x \mapsto x_6)$ & $g$ & $WAIT_f, BAD_g$ \\

  $(3,4,5,5,5,6,6)$ & partial & $SR(x \mapsto x_2)$ & $g$ & $WAIT_e, BAD_g$ \\
    &   & $SR(x \mapsto x_4)$ & none & $GOOD_e, GOOD_f$ \\
    &   & $SR(x \mapsto x_6)$ & black & none \\

  $(4,4,5,5,5,5,6)$ & fixed & $CR$ & none & $GOOD_d, GOOD_{d'}$ \\
  $(4,4,5,5,6,6,6)$ & fixed & $CR$ & $d$ & $BAD_d, WAIT_{d'}$ \\
  $(5,5,5,5,5,5,6)$ & unclear & infeasible &   &   \\
  \hline
\end{tabular}
\end{center}
\label{table1}
 \caption{Degree vectors and their impact on reductions}
\end{table}

The existence of a good candidate is granted unless all candidates are
blocked, as in an extended wheel graph, or if the graph is not optimal
1-planar.

\begin{lemma} \label{existence} If $G$ is a reducible  optimal 1-planar graph,
then $G$ has a good candidate.
\end{lemma}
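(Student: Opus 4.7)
The plan is to invoke Proposition~\ref{prop4}. Since $G$ is reducible, it is not an extended wheel graph, so Proposition~\ref{prop4} provides a nonempty sequence of $Q_v$-splittings and $Q_4$-cycle additions producing $G$ from some $XW_{2k}$. Reversing this sequence yields a nonempty chain of Suzuki's $Q_v$-contractions and $Q_4$-removals starting at $G$ and ending at $XW_{2k}$, each intermediate graph being optimal 1-planar.

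Let $\alpha$ be the first step of this reversed sequence, applied to $G$. Since Suzuki's $Q_v$-contraction coincides with the $SR$-reduction and Suzuki's $Q_4$-removal with the $CR$-reduction (as observed in Section~\ref{sect:rules}), and since $\alpha(G)$ is optimal 1-planar, $\alpha$ is a feasible $SR$- or $CR$-reduction of $G$: it preserves $3$-connectivity and bipartiteness of the planar skeleton and introduces no multiple edges, so it is not obstructed by any blocking vertex or any blocking red/black edge. In the language of this paper, $\alpha$ is a good reduction at its target vertex $x$ (or at any of the four target vertices $x_1, \ldots, x_4$ in the $CR$ case), and that vertex of $G$ is a good candidate. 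To translate this back into the degree-vector formulation of Lemmas~\ref{apply-SR} and~\ref{apply-CR}: the absence of blocking edges in $H(x)$, combined with the pattern of $CS$ (resp.\ $CC$) and the enumeration of admissible candidate degree vectors in Lemma~\ref{lem:H(x)}, forces $\overrightarrow{H(x)} = (3,3,3,5,5,5,6)$ in the $SR$-case and $\overrightarrow{H(x_i)} = (4,4,5,5,5,5,6)$ for each $i$ in the $CR$-case, which are precisely the conditions for goodness.

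The main obstacle is bridging the embedding-based notion of feasibility used by Schumacher and Suzuki (on which Proposition~\ref{prop4} rests) with the purely combinatorial notion of good candidate adopted in this paper. Once the two viewpoints are aligned, via the one-to-one correspondence between optimal 1-planar graphs and their planar quadrangulations (Proposition~\ref{prop:characterization}) and the local characterization of degree vectors (Lemma~\ref{lem:H(x)}), the conclusion is immediate from a single application of Proposition~\ref{prop4} to the last expansion in the sequence producing $G$.
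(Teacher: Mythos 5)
Your argument is correct in its main thrust but takes a genuinely different route from the paper. The paper does not touch Proposition~\ref{prop4} here: it cites a specific lemma of Brinkmann et al.\ (their Lemma~4) asserting that every non-irreducible $3$-connected quadrangulation admits a feasible $P_1$- or $P_3$-reduction, located in the innermost (or outermost) separating $4$-cycle, and then transfers this to optimal $1$-planar graphs via the one-to-one correspondence with their planar skeletons. You instead take Suzuki's generation theorem (Proposition~\ref{prop4}), observe that the expansion sequence from $XW_{2k}$ to a reducible $G$ is nonempty, and reverse its last step to obtain a class-preserving, hence good, reduction on $G$. Both are essentially ``cite the prior work'' proofs; yours buys a shorter derivation from a proposition already stated in the paper, while the paper's version buys the extra structural information (the good candidate sits in the innermost separating $4$-cycle), which is not needed for this lemma but is in the spirit of the later analysis of blocking edges. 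One caveat for your version: Proposition~\ref{prop4} as quoted does not literally say that all intermediate graphs of the expansion sequence are optimal $1$-planar; this is true of Suzuki's construction and you should say you are using that fact, since the feasibility (hence goodness) of the reversed step depends on it.

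One genuine error in your final ``translation'' paragraph: the absence of blocking edges for the \emph{applied} reduction does not force $\overrightarrow{H(x)} = (3,3,3,5,5,5,6)$. By Lemma~\ref{apply-SR} and Table~1, a good $SR$-reduction only requires $\tau(x)=3$ and a target of local degree $3$; candidates with degree vectors $(3,3,4,5,5,6,6)$, $(3,4,4,5,5,5,6)$ and $(3,4,5,5,5,6,6)$ still admit a good $SR(x \mapsto x_4)$ even though other reductions at $x$ are blocked. This overclaim does not affect the existence statement you are proving, but it should be dropped or corrected.
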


\begin{proof}
  According to Brinkmann et al. \cite{bggmtw-gsqs-05} there is a good
  candidate for their $P_1$-~  and $P_3$-reductions (or expansions) on
  3-connected quadrangulations unless the graph is a double-wheel
  graph, and thus irreducible.
  In  Lemma 4 \cite{bggmtw-gsqs-05} they prove  that a good candidate lies in the innermost
  (or outermost) separating $4$-cycle. By the one-to-one correspondence
  between planar 3-connected quadrangulations and optimal 1-planar graphs,
  this generalizes to optimal 1-planar graphs. \qed
\end{proof}

As a final step, we consider the recognition of extended wheel
graphs.

\begin{lemma} \label{XW-test} There is a linear time algorithm to test
  whether a graph is an extended wheel graph $XW_{2k}$.
\end{lemma}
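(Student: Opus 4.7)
The plan exploits the rigid structure of $XW_{2k}$: for $k \geq 4$ it has exactly two non-adjacent vertices $p,q$ of degree $n-2$ (the poles), every other vertex has degree $6$, and the subgraph $G[V \setminus \{p,q\}]$ is the squared cycle $C_{2k}^2$ on $2k$ vertices. The case $k = 3$ is special since in $XW_6$ every vertex has degree $n-2 = 6$; here $XW_6$ is precisely the cocktail party graph $K_8$ minus a perfect matching.

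First I would compute $n$ and $m$ and reject unless $n$ is even, $n \geq 8$, and $m = 4n-8$. If $n = 8$, I reject unless every vertex has degree $6$ and the four non-edges of $G$ form a perfect matching, both testable in $O(n)$ time. For $n \geq 10$, I identify the two candidate poles as the unique vertices of degree $n-2$, rejecting if there are not exactly two, or if any other vertex does not have degree $6$. I then check $p \not\sim q$ and $N(p) = N(q) = V \setminus \{p,q\}$ by scanning the adjacency lists of $p$ and $q$.

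The main work is to verify that $G' = G[V \setminus \{p,q\}]$ is isomorphic to $C_{2k}^2$. I pick an arbitrary vertex $v_1$ of $G'$ and search its four neighbors for one $v_2$ with $|N_{G'}(v_1) \cap N_{G'}(v_2)| = 2$; in $C_{2k}^2$ this selects a cycle edge, since cycle edges lie in two triangles whereas skip edges lie in only one. Let $\{a,b\}$ be the two common neighbors and fix $v_3 := a$, which fixes an orientation. I then iteratively define $v_{i+1}$ as the unique element of $(N_{G'}(v_{i-1}) \cap N_{G'}(v_i)) \setminus \{v_{i-2}\}$, rejecting if this set is not a singleton or if $v_{i+1}$ has already been visited, and stopping once $v_{2k}$ is produced. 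Finally, I accept iff the edges $v_{2k-1} v_1$, $v_{2k} v_1$, and $v_{2k} v_2$ are all present. Since the walk certifies $2k$ cycle edges and $2k$ skip edges of a presumed squared cycle and $G'$ is $4$-regular with exactly $4k$ edges, these are all of its edges, so $G' \cong C_{2k}^2$.

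The main obstacle is performing each neighborhood intersection in amortized $O(1)$ time. I would maintain a length-$n$ boolean array that, at step $i$, marks the vertices of $N_{G'}(v_{i-1})$; scanning the four neighbors of $v_i$ against this array then computes $N_{G'}(v_{i-1}) \cap N_{G'}(v_i)$ in $O(1)$ time, and the array is refreshed to mark $N_{G'}(v_i)$ for the next step by unmarking four entries and marking four more. Since $G'$ is $4$-regular, each step does $O(1)$ work, giving a total running time of $O(n)$. Correctness follows from the definition of $XW_{2k}$ in Section~\ref{sect:prelim} together with the elementary observation that the triangle-count distinction above characterizes $C_{2k}^2$ uniquely among $4$-regular graphs on $2k$ vertices.
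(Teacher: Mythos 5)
Your proposal is correct and follows essentially the same route as the paper: handle $XW_6$ separately, identify the two poles as the degree-$(n-2)$ vertices, and recognize the remaining $4$-regular induced subgraph as the squared cycle $C_{2k}^2$ from local neighborhood structure in $O(n)$ time. The only difference is the local test --- you count common neighbors along edges to separate cycle edges from skip edges and then walk the cycle, whereas the paper orders each vertex's four neighbors by their missing edges --- and your final edge-counting argument makes the correctness more explicit than the paper's sketch.
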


\begin{proof}
  If the input graph $G$ has eight vertices, we check  $G = XW_{6}$ by
  inspection. Here, each vertex $x$ is a candidate with $\overrightarrow{H(x)} =
  (5,5,5,5,5,5,6)$.

  For $k \geq 4$, an extended wheel graph $XW_{2k}$ has two
  poles $p$ and $q$ of degree $2k$ as distinguished vertices and a cycle
  of $2k$ vertices of degree six. This is checked in a
  preprocessing step on the given graph and takes en passant $O(1)$ time.
  For a final check, we remove the poles and restrict ourselves to the subgraph induced
  by the vertices of degree six. Each such vertex $v$ has four neighbors and
  the cyclic ordering of these vertices is determined as $(v_{-2}, v_{-1},
  v, v_{+1}, v_{+2})$ by the missing edges $(v_{-2}, v_{+1}), (v_{-2},
  v_{+2})$ and $(v_{-1}, v_{+2})$. So we determine the cycle and then check
  for $XW_{2k}$. Altogether, the tests take $O(2k)$ time.
  \qed
\end{proof}

From the above observations, we obtain a simple quadratic-time
algorithm for the recognition of optimal 1-planar graphs. The
algorithm scans the actual graph and  searches a single candidate
for $SR$ or a cluster of four candidates for $CR$ and checks in
$O(1)$ time whether the reduction is good or bad. Each reduction
removes one or four vertices. Hence, there are at most $n-2k-2$
reductions from a graph of size $n$ to an extended wheel graph
$XW_{2k}$.

\begin{theorem} \label{thm:quadratic}
There is a quadratic-time recognition algorithm for optimal 1-planar
graphs.
\end{theorem}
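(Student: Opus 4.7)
The plan is to turn the structural results already established into a straightforward reduction-based algorithm, and then bound its running time by two simple observations: there are only $O(n)$ reductions, and finding a good candidate is done by a full scan in $O(n)$.

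\medskip

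\noindent\textbf{Algorithm.}
First I would run a linear-time preprocessing that verifies the necessary density and degree conditions: $|E|=4|V|-8$, every vertex has even degree, and every vertex has degree at least six. If any of these fails, reject. Then enter the main loop, which maintains the invariant that the current graph $G'$ is optimal 1-planar iff the original $G$ is. In each iteration, scan every vertex $x$ of $G'$, compute its degree vector $\overrightarrow{H(x)}$ in $O(1)$ time using the neighbor list, and use Corollary \ref{good-bad} together with Table~1 to decide whether $x$ is a good candidate and, if so, which reduction (and which target) to apply. If a good candidate is found, apply the associated $SR$ or $CR$ reduction in $O(1)$ time. If no good candidate exists, call the routine of Lemma~\ref{XW-test} to test whether $G'$ is an extended wheel graph $XW_{2k}$: accept if so, reject otherwise.

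\medskip

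\noindent\textbf{Correctness.}
Soundness follows because every applied reduction is good, hence by Figs.~\ref{SR}, \ref{CR} and the constrained use inherited from Brinkmann et al.\ \cite{bggmtw-gsqs-05}, the class of optimal 1-planar graphs is preserved; if the algorithm accepts, the final graph is $XW_{2k}$ and so, unwinding the reductions via Proposition~\ref{prop4}, the input is optimal 1-planar. For completeness, suppose the input is optimal 1-planar. Then the current graph $G'$ is optimal 1-planar throughout, by the class-preservation above. By Lemma~\ref{existence} every reducible optimal 1-planar graph has a good candidate, so the loop can only get stuck when $G'$ is irreducible; by Propositions~\ref{prop3} and \ref{prop4} the only irreducible optimal 1-planar graphs are the extended wheel graphs, which Lemma~\ref{XW-test} accepts.

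\medskip

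\noindent\textbf{Running time.}
Each reduction strictly decreases the number of vertices, removing either one vertex (the $SR$ case) or four vertices (the $CR$ case), so the main loop executes at most $n$ times. Each iteration consists of scanning the $O(n)$ vertices, performing $O(1)$ work per vertex (degree-vector lookup and table comparison), and at most one $O(1)$ reduction. The terminal extended-wheel test and the preprocessing are linear by Lemma~\ref{XW-test} and direct inspection. In total the algorithm runs in $O(n^2)$ time, establishing the theorem.

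\medskip

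\noindent\textbf{Main obstacle.}
The delicate point is not the complexity bound but ensuring that a ``stuck'' state is diagnosed correctly: I must guarantee that the absence of a good candidate truly implies the graph is $XW_{2k}$ whenever the input really is optimal 1-planar. This is precisely the content of Lemma~\ref{existence}, which imports the Brinkmann et al.\ structural result through the bijection between $3$-connected quadrangulations and optimal 1-planar graphs, so the hard work has already been done upstream; the proof of Theorem~\ref{thm:quadratic} amounts to assembling these pieces and accounting for the cost.
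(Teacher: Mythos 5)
Your proposal is correct and follows essentially the same route as the paper: a repeated $O(n)$ scan for a candidate, an $O(1)$ goodness check via the degree vector (Corollary~\ref{good-bad} and Table~1), at most $n$ reductions each removing one or four vertices, completeness via Lemma~\ref{existence}, and the terminal extended-wheel test of Lemma~\ref{XW-test}. The paper states this only as a short preceding paragraph, so your write-up is simply a more explicit version of the same argument.
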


\begin{example} \label{ex:example1}
For an explanation of the reductions consider the input graph
$G_{17}$ as shown in Fig.\ \ref{fig:base} with a 1-planar embedding.
Vertices   $a, c,h,s,t$ are good for an $SR$-reduction, and
$u,v,y,z$ are good for a $CR$-reduction. If the $CR$-reduction is
applied first, we obtain the graph in Fig.\ \ref{fig:G17CR} and
$SR(h \mapsto q)$ then yields $XW_{10}$.

Alternatively, using $SR(a \mapsto b), SR(h \mapsto q), SR(g \mapsto
d), SR(c \mapsto b), SR(d \mapsto i)$ and finally $CR(u,v,y,z)$ ends
up at $XW_6$. This computation is illustrated in Figs.\
\ref{fig:G17a} to \ref{fig:G17ahgcdxy}.  Note that the cluster
$u,v,y,z$ flips from good to bad if  there is an outer neighbor of
degree six, which is blocking and induces a blocking red edge.

Also, $XW_8$ can be obtained by $SR(a \mapsto b), SR(h \mapsto q),
SR(g \mapsto d)$, and finally $CR(u,v,y,z)$.
\end{example}

\begin{figure}
  \centering
  \subfigure[A 1-planar embedding of $G_{17}$. Candidates are drawn as
      hexagons which are light green for good candidates and  orange for bad
      candidates. Non-candidates of degree at least $8$ are drawn as
      circles.]{
     \parbox[b]{\textwidth}{%
       \centering
      \includegraphics[scale=0.35]{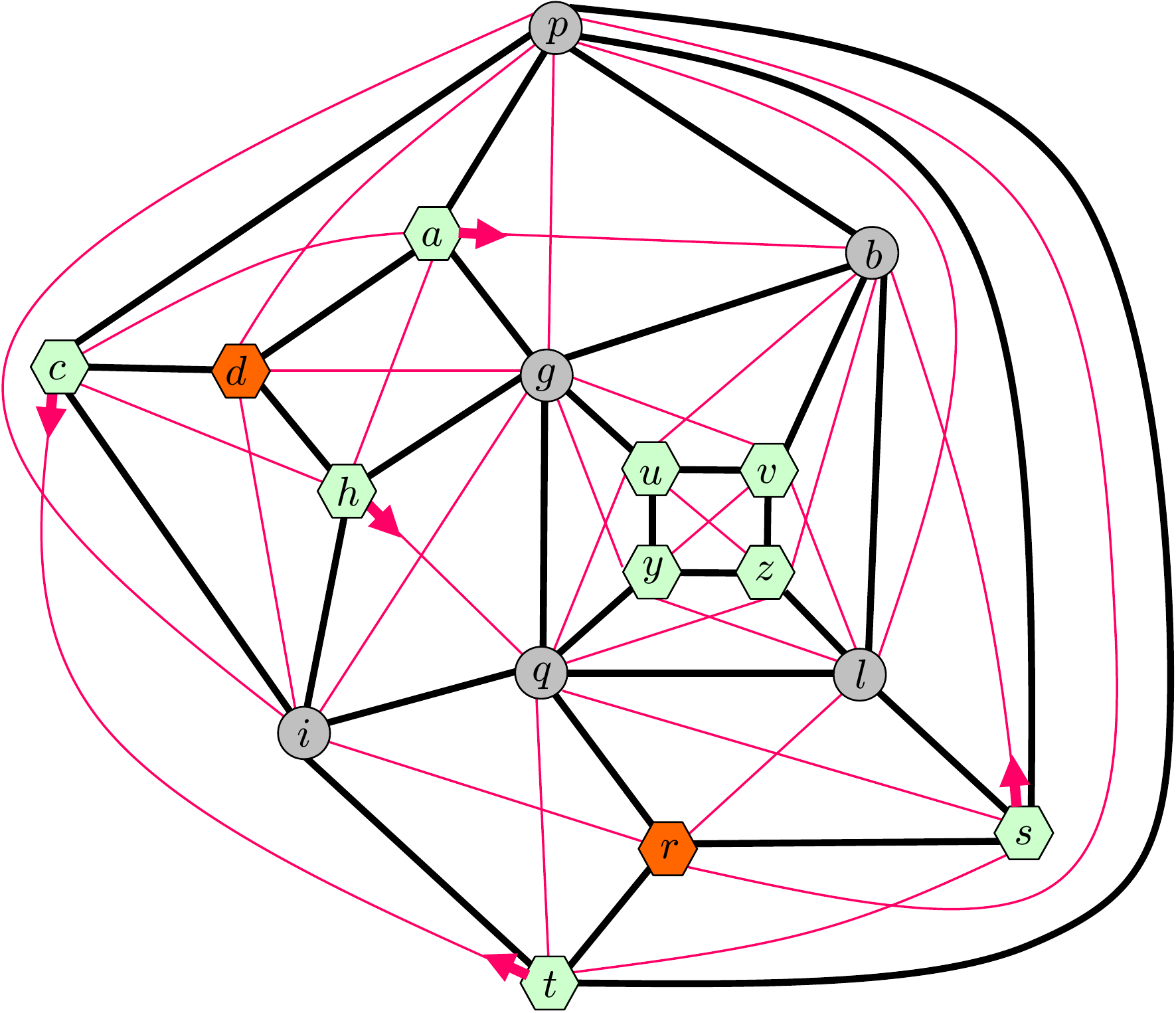}
    }
    \label{fig:base}
  }
  \subfigure[Graph $G_{17}$ after $CR(u,v,y,z)$. ]{
    \parbox[b]{4.5cm}{%
      \centering
      \includegraphics[scale=0.23]{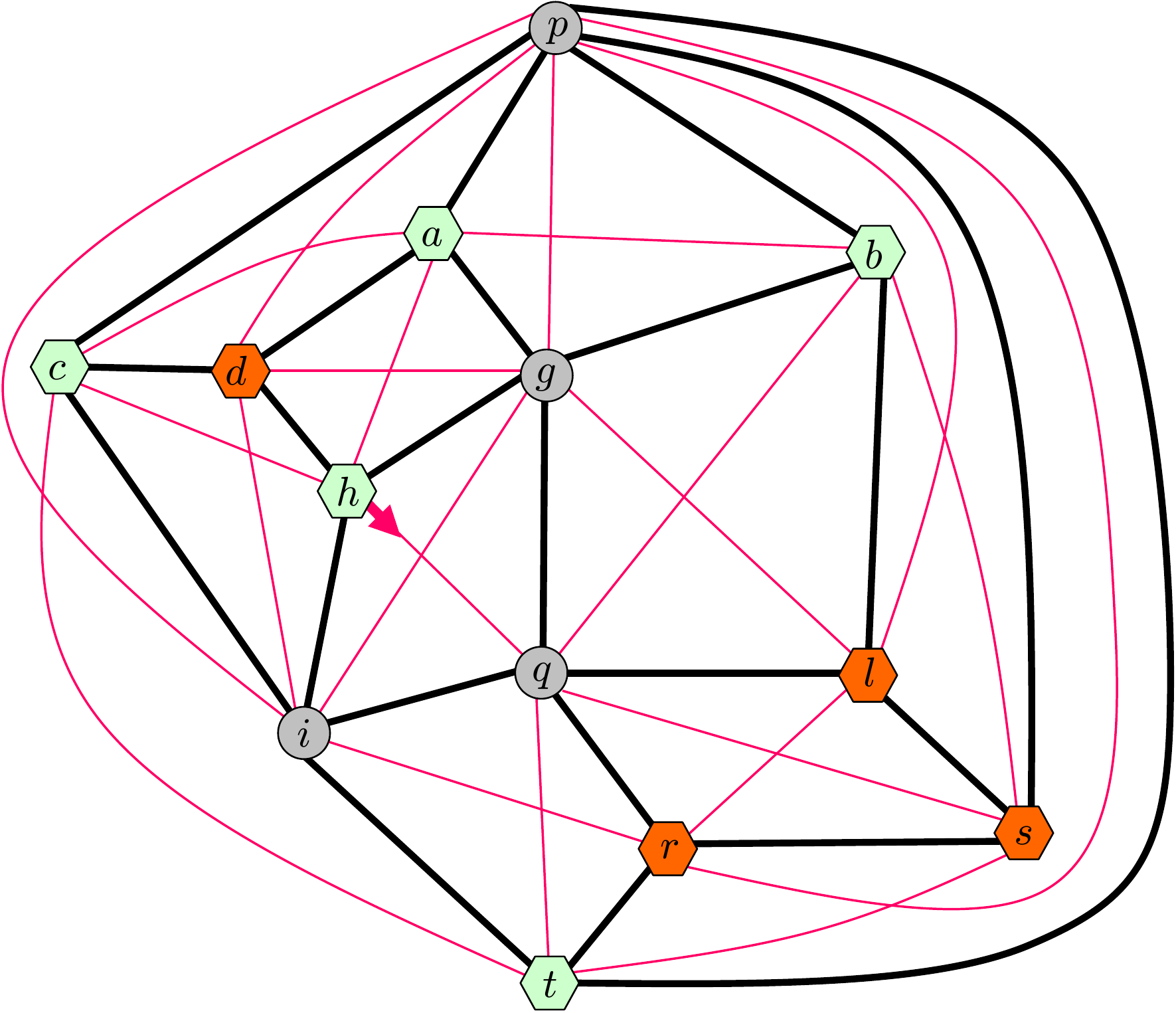}
    }
    \label{fig:G17CR}
  }
  \hfil
  \subfigure[and  $XW_{10}$ after $SR(h \mapsto q)$.]{
    \includegraphics[scale=0.23]{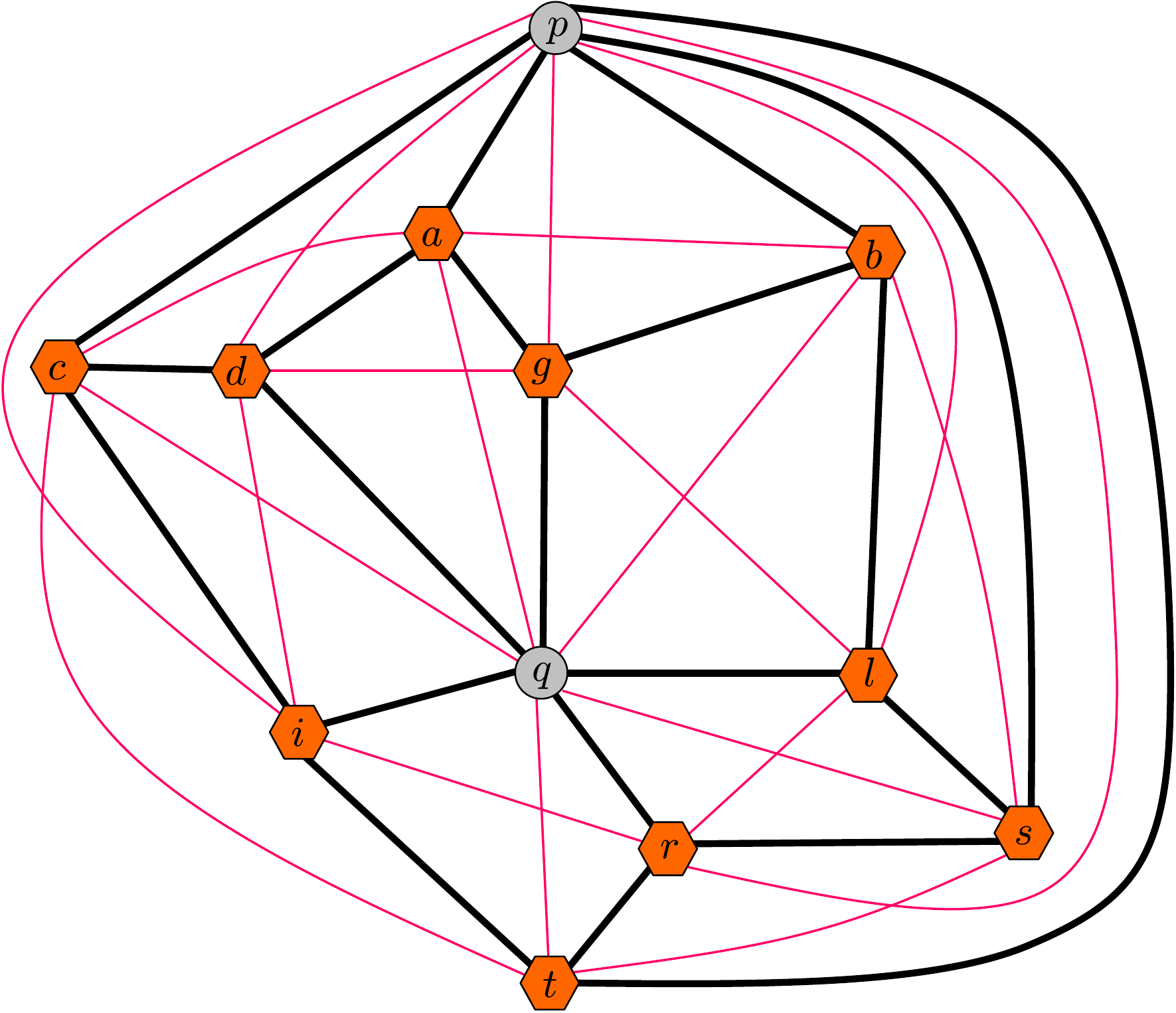}
    \label{fig:XW10}
  }
  \caption{A reduction of an input graph to an extended wheel graph.}
  \label{fig:reduce-G17}
\end{figure}

\begin{figure}
  \centering
  \subfigure[ after $SR(a \mapsto b)$]{
    \includegraphics[scale=0.23]{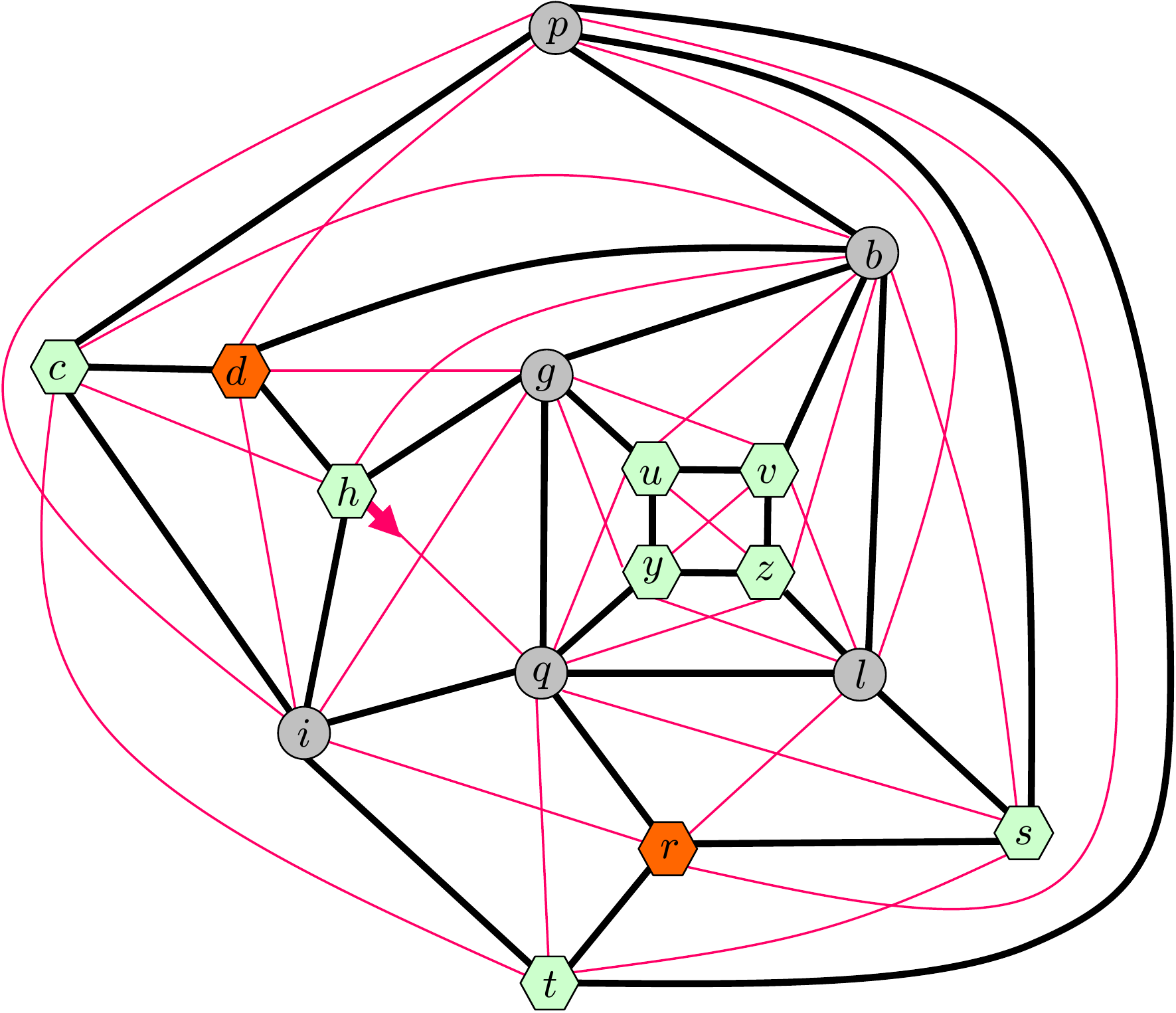}
    \label{fig:G17a}
  }
  \hfil
  \subfigure[and  after $SR(h \mapsto q)$ ]{
    \includegraphics[scale=0.23]{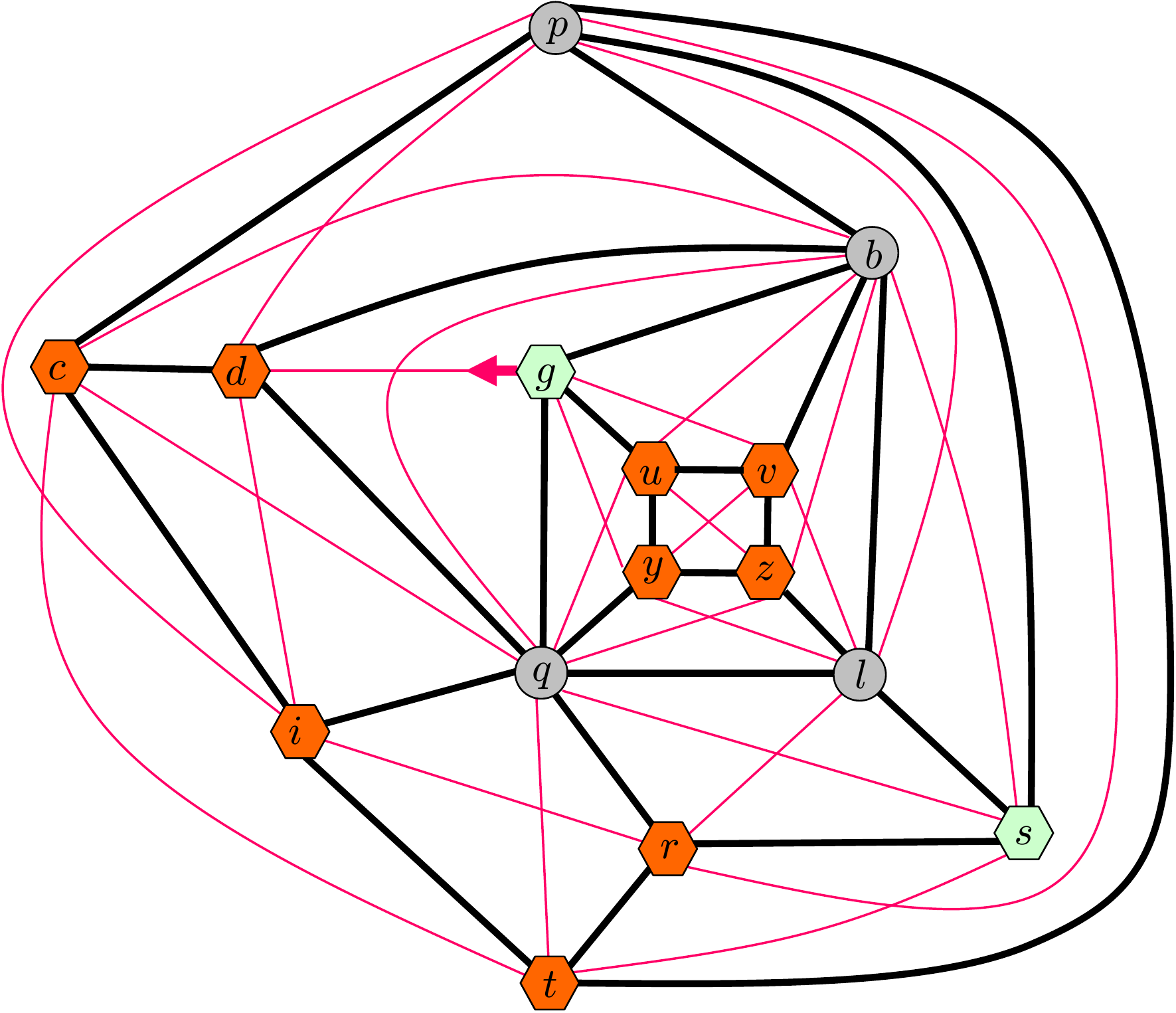}
    \label{fig:G17ah}
  } \\
  \subfigure[and  after $SR(g \mapsto d)$]{
    \includegraphics[scale=0.23]{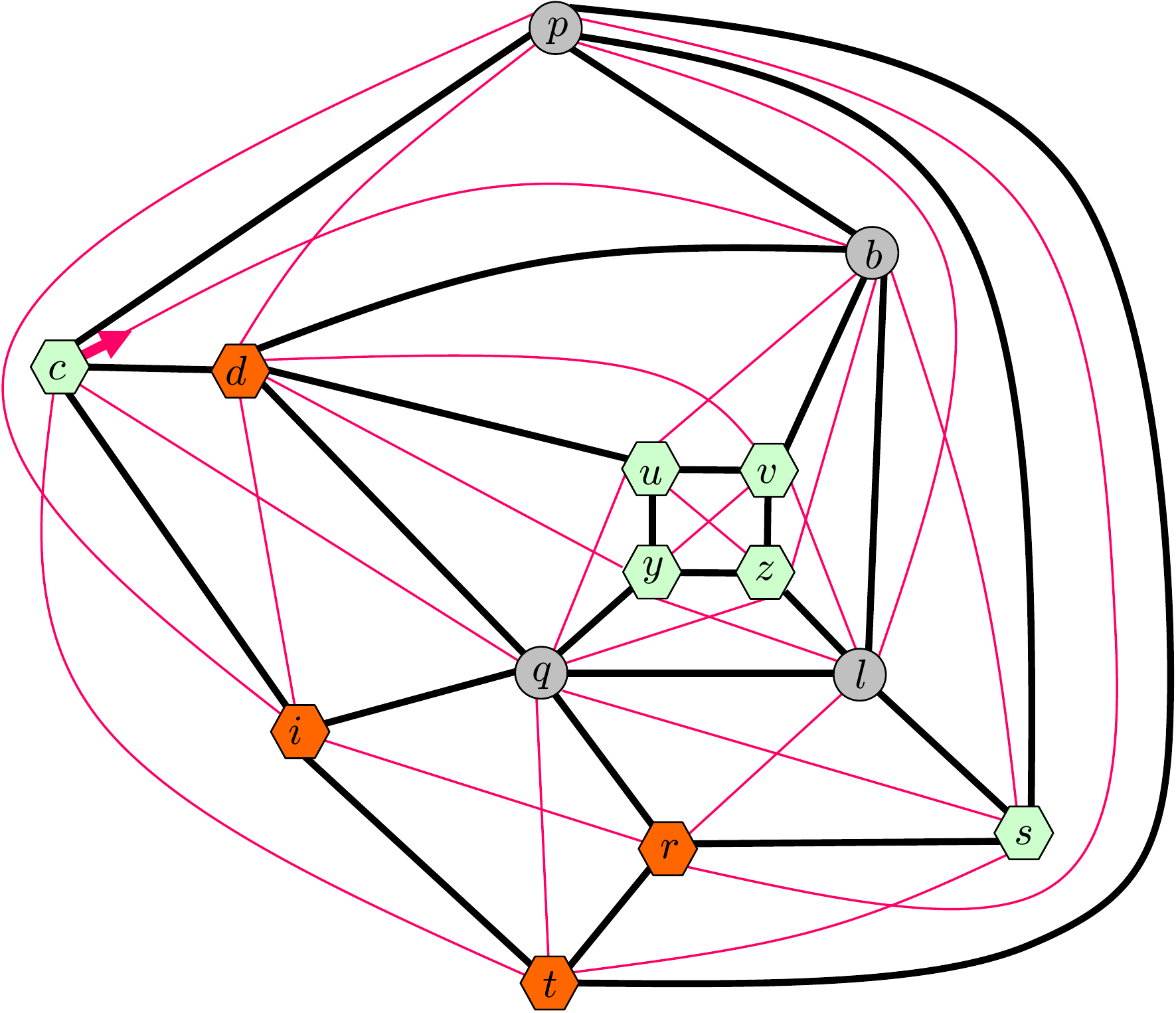}
    \label{fig:G17ahg}
  }
  \hfil
  \subfigure[and  after $SR(c \mapsto b)$]{
    \includegraphics[scale=0.23]{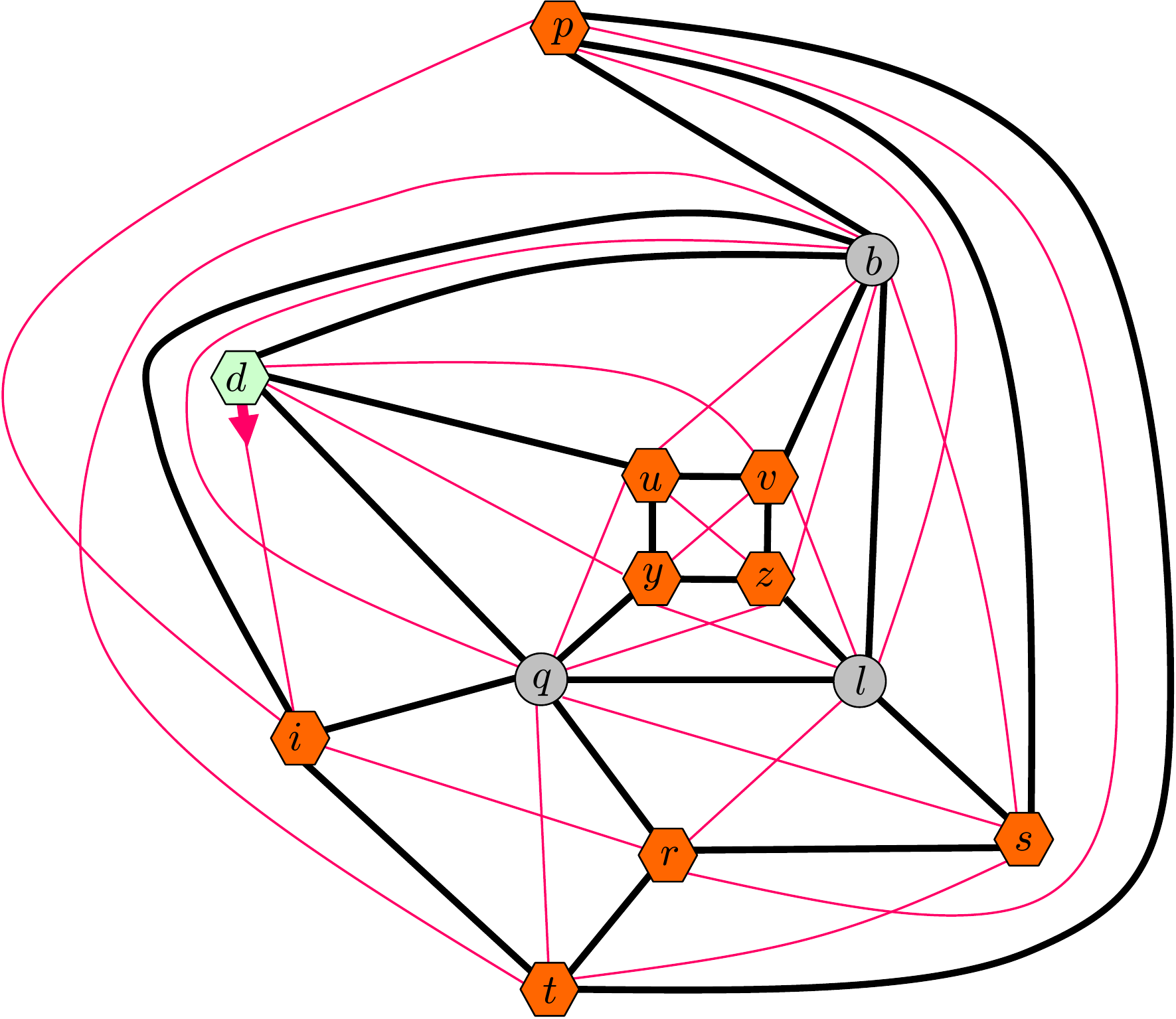}
    \label{fig:G17ahgc}
  } \\
  \subfigure[and after $SR(d \mapsto i)$]{
    \includegraphics[scale=0.23]{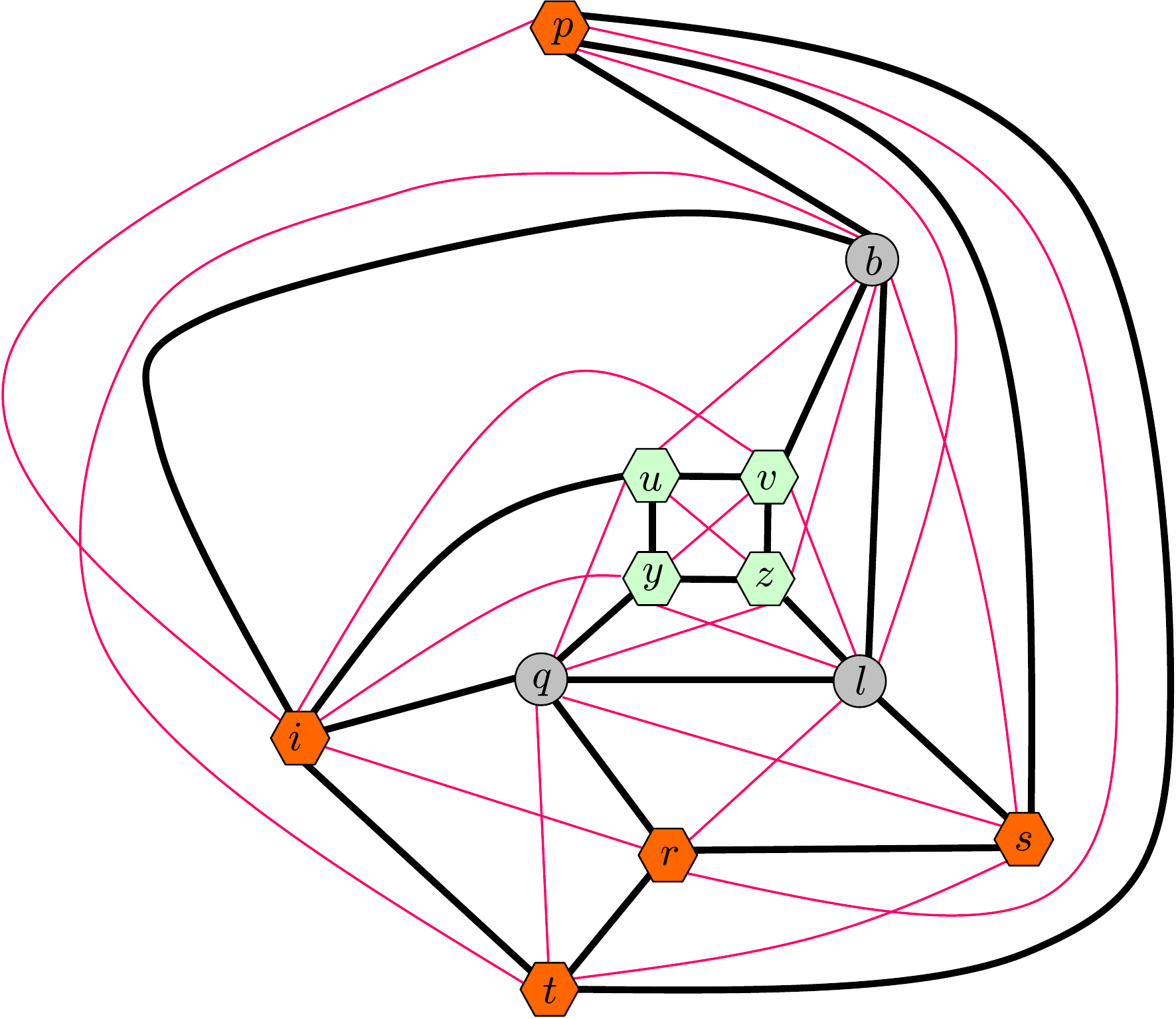}
    \label{fig:G17ahgcd}
  }
  \hfil
  \subfigure[and finally $CR(u,v,y,z)$  yields  $XW_6$]{
    \parbox[b]{5.0cm}{%
      \centering
      \includegraphics[scale=0.23]{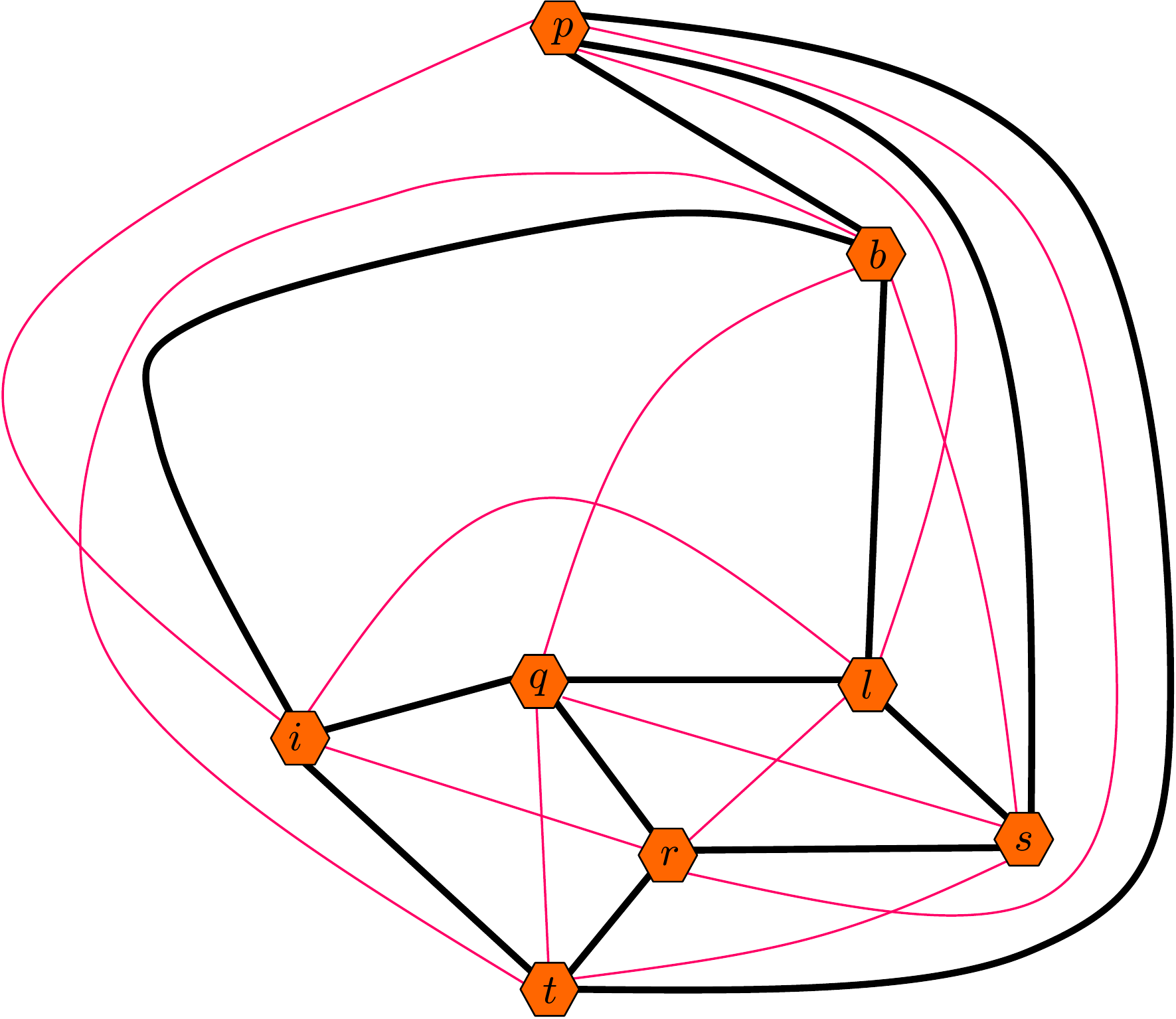}
    }
    \label{fig:G17ahgcdxy}
  }
  \caption{A alternative reduction of an input graph to $XW_6$.}
  \label{fig:reduction-XW8}
\end{figure}

Graph $G_{17}$ in Example \ref{ex:example1} can be reduced to
different extended wheel graphs  which are irreducible. In
consequence, the graph reduction system with the rules $SR$ and $CR$
cannot be confluent, since confluence implies a unique irreducible
representative.
 A rewriting
system is \emph{confluent} if $x \rightarrow^* u$ and  $x
\rightarrow^* v$ implies that there is a common descendant $z$ with
$u \rightarrow^* z$ and $y \rightarrow^* z$. In consequence, if two
rules  can be applied at different places of $x$  starting two
 reductions, then the reductions join at a common descendant.
 This is not the case for $SR$ and $CR$. More properties are
 elaborated in \cite{b-rso1p-16}.

\begin{corollary}
The reduction system with the rules $SR$ and $CR$ is non-confluent
on optimal 1-planar graphs.
\end{corollary}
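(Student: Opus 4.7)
The plan is to refute confluence by exhibiting a single optimal $1$-planar graph with two distinct irreducible descendants. Confluence would require that whenever $G \rightarrow^* U$ and $G \rightarrow^* V$, some common descendant $Z$ satisfies $U \rightarrow^* Z$ and $V \rightarrow^* Z$. When $U$ and $V$ are themselves already irreducible, $Z$ is forced to equal both $U$ and $V$; so producing a graph with two non-isomorphic irreducible descendants is enough to falsify confluence.

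Such a witness is already laid out in Example \ref{ex:example1}: the graph $G_{17}$ admits one sequence of good reductions ending at $XW_{10}$ (apply $CR(u,v,y,z)$, then $SR(h \mapsto q)$) and an alternative sequence ending at $XW_{6}$ (apply five $SR$-reductions, then $CR(u,v,y,z)$). Since $XW_{2k}$ has $2k+2$ vertices, these endpoints contain $12$ and $8$ vertices respectively and therefore cannot be isomorphic.

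What remains is to observe that both endpoints are irreducible, which is the content of the paragraph following Example \ref{ex1}. Concretely, the local tests of Lemmas \ref{apply-SR} and \ref{apply-CR} fail on every candidate of an extended wheel graph: for $XW_{2k}$ with $k \geq 4$ the cycle vertices are the only candidates, each having $\tau = 4$ with degree vector $(4,4,5,5,5,5,6)$, which rules out $SR$; and no four cycle vertices form the $K_4$-on-the-inner-cycle configuration demanded by $CC$ (two cycle vertices at cycle-distance $3$ are not adjacent), which rules out $CR$. For $XW_6$ every vertex has degree vector $(5,5,5,5,5,5,6)$ with $\tau = 5$, so neither rule applies. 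Hence $XW_{10}$ and $XW_{6}$ are distinct normal forms and confluence fails.

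The main, and in fact only, thing to check is that each intermediate graph along the two sequences in Example \ref{ex:example1} is optimal $1$-planar and that each chosen rule is good at the chosen candidate. This is certified in $O(1)$ time by Lemmas \ref{apply-SR} and \ref{apply-CR}, with global correctness inherited from Propositions \ref{prop3} and \ref{prop4}. There is no deeper obstacle; the worked example is essentially the proof once irreducibility of extended wheel graphs is on record.
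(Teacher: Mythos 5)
Your proposal is correct and matches the paper's own argument: the paper likewise derives non-confluence from the fact that $G_{17}$ in Example \ref{ex:example1} reduces to both $XW_{10}$ and $XW_6$, which are irreducible (as noted in the paragraph following Example \ref{ex1}), so there is no unique normal form. Your added verification that extended wheel graphs are irreducible via degree vectors and Lemmas \ref{apply-SR} and \ref{apply-CR} is just a slightly more explicit rendering of what the paper already records.
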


\section{A Linear Time Algorithm} \label{sect:lineartime}

Example \ref{ex:example1} shows that a reduction  may change the
role of other candidates and reductions. In particular, $SR(x
\mapsto x_4)$ increases the degree of $x_4$ by two. If $x_4$ was a
candidate before, it is no longer. If $x_4$ blocked another
reduction, it does no longer. On the other hand, the degree of $x_3$
and $x_5$ decreases by two and they may become new candidates, which
may turn their neighbor candidates   from good to bad. Accordingly,
a $CR$-reduction decreases the degree of the vertices on the outer
cycle by two, which may introduce some of them as candidates with an
impact on candidates in their neighborhood. However, vertices at
distance at least three from the vertex of the application of a rule
are not affected.
  Hence, a reduction operates locally. However, a reduction may   have a global
effect and introduce or remove a blocking edge for many other
reductions and candidates. This is illustrated in Figs.\ \ref{multi}
and \ref{multired}. Thus it may be advantageous to maintain lists
with all   reductions that are or may be blocked by an edge. There
is a separating $4$-cycle if an edge blocks two or more reductions.
Clearly, there is a separating $4$-cycle at a $CR$-reduction. If two
$SR$-reductions are blocked by an edge $e=(u,v)$, then the centers
of the reductions have degree $6$ and have $u$ and $v$ as common
neighbors and there is a separating $4$-cycle through $u$ and $v$
which includes $e$ if the blocking edge is black.

We shall assume throughout that the given graph $G$ is reducible,
i.e., not an extended wheel graph, and that $x$ is a candidate of a
$SR$-reduction or $x$ is one of four candidates of a $CR$-reduction.

The degree vector of a candidate $x$ does not determine the coloring
of $H(x)$, but it tells which reduction is applicable, see Lemmas
\ref{localtest} - \ref{apply-CR}.  Infeasible applications can be
restricted even further.

 First, observe that $\tau(x) \leq 4$ if $x$ is a candidate of a reducible
 optimal 1-planar graph. Otherwise,
$\overrightarrow{H(x)} = (5,5,5,5,5,5,6)$  implies $H(x) = K_6$, but
$K_6$ is not a proper subgraph of a $5$-connected 1-planar graph
\cite{bsw-1og-84}.

Second, both blocking edges of a $CR$-reduction cannot occur
simultaneously, since the induced subgraph would have a separation
pair violating $4$-connectivity.

Finally, suppose there is a blocking black edge for an
$SR$-reduction at candidate $x$, say edge $(x_3, x_6)$. Then
$\overrightarrow{H(x)} = (3,3,4,5,5,6,6)$ or $\overrightarrow{H(x)}
= (3,4,5,5,5,6,6)$ by Lemma \ref{lem:H(x)}. If vertex $x_4$ has
local degree three, then $SR(x \mapsto x_4)$ is applicable, whereas
$SR(x \mapsto x_6)$ is not. Suppose the coloring is fixed as in
Fig.\ \ref{SR}, otherwise one must also consider cases with $x_1$
and $x_6$ exchanged.
Then $(x_1, x_2, x_3, x_6)$ and $(x_1, x_6, x_4, x_3)$ are
separating $4$-cycles which separates $x$ from further black
neighbors of $x_2$ and $x_4$, respectively. The blocking black edge
$(x_3, x_6)$ for $SR(x \mapsto x_6)$ cannot be removed if $x$
remains as a candidate. Then another reduction must remove the edge.
A black edge is removed by a  reduction if it is incident to a
candidate. However, if $x$ is a candidate and there is the edge
$(x_3, x_6)$, then $x_3$ has degree at least eight and the degree of
$x_3$ cannot be decreased to six since $x$ is a blocking neighbor.
Consider the $4$-cycle $C = (x_1, x_2, x_3, x_6)$ and suppose that
$x$ is in the outer face of $C$, the other case is similar. Then
$x_2$ has at least one more black neighbor $w$ besides $x_1$ and
$x_3$. If $x_6$ had local degree six, then the red edge $(x_6, w)$
were crossed by $(x_1, x_3)$, which were a multiple edge. Hence,
also $x_6$ has degree at least eight and is not a candidate. Hence,
the black prohibited edge $(x_3, x_6)$ remains if $x$ remains.
However, if $SR(x \mapsto x_2)$ or $SR(x \mapsto x_4)$ can be
applied, then $x$ is removed and also $SR(x \mapsto x_6)$. In
consequence,  a reduction $SR(x \mapsto v)$ can never be used if
there is a blocking black edge incident to $v$, and we add ``none''
in the last column of Table 1.

We summarize these facts:

\begin{lemma} \label{lem:fact}
For a reducible optimal 1-planar graph the following holds:

\begin{enumerate}
  \item If $x$ is a candidate, then $H(x)$ is fixed or
has a partial coloring for an $SR$-reduction.
  \item The subgraph matched by $CC$ has at most
  one   blocking red edge.
   \item A  reduction $SR(x \mapsto v)$ is infeasible if there is a
blocking black edge incident to $v$.
\end{enumerate}

\end{lemma}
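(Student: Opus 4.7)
The plan is to attack the three items separately, reusing the degree-vector classification already established.

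For part (1), I would first observe that in a reducible optimal 1-planar graph, the vector $\overrightarrow{H(x)}=(5,5,5,5,5,5,6)$ is excluded for a candidate $x$: this vector forces $H(x)=K_6$, and since optimal 1-planar graphs are 4-connected (indeed 5- and 6-connected once reducible), $K_6$ cannot arise as $H(x)$ in a reducible graph, as already noted from \cite{bsw-1og-84}. Hence $\tau(x)\in\{3,4\}$, so Lemma \ref{localtest} gives either a fixed coloring or, in the two listed borderline cases, a partial coloring where only two neighbors of $x$ (e.g.\ $x_1$ and $x_6$ in Fig.\ \ref{undecided}) may swap. That is exactly the statement of (1).

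For part (2), I would argue by contradiction. Let $H$ be the matched subgraph of $CC$ on vertices $\{x_1,\dots,x_4,v_1,\dots,v_4\}$, and suppose both diagonals $d=(v_1,v_3)$ and $d'=(v_2,v_4)$ of the outer 4-cycle are present. In the unique embedding of $H$ (Lemma \ref{lem:uniqueCC}), $v_1$ and $v_3$ lie on opposite sides of the outer cycle, so $d$ and $d'$ must both be drawn in the same face bounded by the outer cycle of $CC$; the only vertices of $H$ on that side are $v_1,v_2,v_3,v_4$ themselves. But then $\{v_1,v_3\}$ (or, symmetrically, $\{v_2,v_4\}$) becomes a 2-cut separating the four inner-cycle candidates from the rest of the graph, which contradicts the 4-connectivity of optimal 1-planar graphs.

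Part (3) is the main obstacle, and I would follow the informal sketch already in the text and make it rigorous. Assume a blocking black edge $(x_3,x_6)$ for $SR(x\mapsto x_6)$ exists, with the coloring fixed as in Fig.\ \ref{SR} (the partial-coloring case where $x_1$ and $x_6$ swap is handled symmetrically). I would show that $(x_1,x_2,x_3,x_6)$ and $(x_1,x_6,x_4,x_3)$ are separating 4-cycles, using Lemma \ref{circles} to force the chords to be black and the degree-vector analysis to place $x$ in one of the two faces they bound. Next, I would argue that while $x$ remains in the graph, the edge $(x_3,x_6)$ cannot be removed: a reduction can only delete a black edge incident to a candidate, but $x_3$ (and, by the multiple-edge argument with the red edge through $w$ and the crossing $(x_1,x_3)$, also $x_6$) has degree at least $8$, so neither is a candidate, and moreover $x$ itself blocks any attempt to lower their degree because it is their common degree-6 neighbor inside the separating 4-cycle. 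Hence $(x_3,x_6)$ can only disappear together with $x$. But then $SR(x\mapsto x_6)$ is never legitimately applied: either some other good reduction at $x$ (namely $SR(x\mapsto x_2)$ or $SR(x\mapsto x_4)$) removes $x$ first, or no reduction at $x$ is ever performed. In either case $SR(x\mapsto x_6)$ is infeasible, proving (3).

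The hard part will be part (3), specifically ruling out all indirect ways the blocking black edge might be eliminated. The cleanest way, I expect, is to phrase it as an invariant: as long as $x$ is a candidate, any separating 4-cycle through $x_3$ and $x_6$ that isolates $x$ forces both $x_3$ and $x_6$ to have degree $\geq 8$, and this invariant is preserved by every reduction applied outside $H(x)$.
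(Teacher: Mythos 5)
Your proposal takes essentially the same route as the paper, whose proof of this lemma is the discussion immediately preceding its statement: for (1) exclude $\overrightarrow{H(x)}=(5,5,5,5,5,5,6)$ via the $K_6$ obstruction and invoke Lemma~\ref{localtest}; for (2) derive a separation pair from two coexisting diagonals; for (3) show that the blocking black edge $(x_3,x_6)$ lies on separating 4-cycles forcing $x_3$ and $x_6$ to have degree at least eight, so the edge persists as long as $x$ does. Two minor slips, neither fatal: reducible optimal 1-planar graphs are only guaranteed 4-connected (your parenthetical claim of 5- and 6-connectivity is false, though unused), and in (2) the separation pair created by drawing both diagonals in the outer region is a pair of \emph{adjacent} outer vertices such as $\{v_1,v_2\}$ cutting off whatever lies in one of the four sub-regions bounded by the crossing diagonals, not $\{v_1,v_3\}$ --- and if nothing lies there the graph is $XW_6$ and hence irreducible, a case you should exclude explicitly via the reducibility hypothesis.
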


Next, consider the interaction between $SR$- and $CR$-reductions.
Their usability is distinguished by the type of the candidates. The
vertices of the inner cycle of $CC$ mutually block each other for a
$SR$-reduction. These vertices are a ``black hole'' for
$SR$-reductions, since they can never take the role of the center of
a good $SR$-reduction. However,
 vertex $x$ of the inner cycle of $CC$ may be the target of a
$SR$-reduction $SR(w \mapsto x)$, whose use absorbs vertex $w$.
 In that case, the
$CR$-reduction is bad and is blocked by $w$. The vertices of the
inner cycle  can only be removed by a $CR$-reduction, or they remain
for the final extended wheel graph.

\begin{lemma} \label{lem:blackhole}
 A $SR$-reduction never applies to a candidate $x_i$ if
a $CR$-reduction applies to candidates $x_1, x_2, x_3, x_4$ for
$i=1,2,3,4$.
\end{lemma}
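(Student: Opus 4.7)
The plan is to deduce this immediately from the type-based characterizations of good candidates already established in Lemmas \ref{apply-SR} and \ref{apply-CR}. Those two lemmas pin down mutually incompatible values of the invariant $\tau$, so the conclusion will follow in a single step.

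First I would recall that, by hypothesis, $CR(x_1,x_2,x_3,x_4)$ is a good reduction. By Lemma \ref{apply-CR}, this forces each of the four candidates to have degree vector $\overrightarrow{H(x_i)} = (4,4,5,5,5,5,6)$, and in particular $\tau(x_i) = 4$ for every $i\in\{1,2,3,4\}$. Next I would appeal to Lemma \ref{apply-SR}, which asserts that a candidate $x$ is good for an $SR$-reduction if and only if $\tau(x) = 3$. Since $4 \neq 3$, no $SR$-reduction at any $x_i$ can be good, which is precisely the statement to be proved.

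I do not expect any real obstacle here; the work was already done in Lemmas \ref{apply-SR} and \ref{apply-CR}. For intuition, one could optionally supplement the proof by recalling the geometric picture: each $x_i$ lies on the inner cycle of $CC$ and the other three candidates $x_j$ ($j\neq i$) are red neighbors or share red edges with the red neighbors of $x_i$. Thus every potential $SR$-target at $x_i$ is either another inner-cycle candidate or carries a blocking red edge towards one, so the inner cycle forms the ``black hole'' described in the remark preceding the lemma. This geometric remark is illustrative but not needed; the clean proof is the two-line type argument above.
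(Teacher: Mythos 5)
Your core argument is exactly the paper's: the inner-cycle candidates have $\tau(x_i)\geq 4$, while Lemma \ref{apply-SR} requires $\tau(x)=3$ for a good $SR$-reduction, so the two are incompatible. For the case you treat -- the $CR$-reduction is good, hence $\overrightarrow{H(x_i)}=(4,4,5,5,5,5,6)$ by Lemma \ref{apply-CR} -- your two-line deduction is correct and matches the paper verbatim.

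However, you have silently narrowed the hypothesis to ``the $CR$-reduction is \emph{good},'' and that is a genuine gap relative to what the lemma is used for. The surrounding text (the ``black hole'' discussion) and the linear-time algorithm invoke this lemma to justify permanently discarding all $SR$-entries at a vertex once it is recognized as lying on the inner cycle of a $CC$-match -- including when the $CR$-reduction is currently \emph{bad}, e.g.\ blocked by a diagonal $(v_1,v_3)$ or by an outer-cycle vertex of degree six. The paper's proof therefore explicitly handles both cases: if the reduction is good the degree vectors are $(4,4,5,5,5,5,6)$, and if it is bad some of them become $(4,4,5,5,6,6,6)$ or $(5,5,5,5,5,5,6)$; in every case $\tau(x_i)\geq 4$, so the same type argument applies. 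Your proof as written does not cover the bad case, though the repair is a single sentence citing Lemma \ref{lem:H(x)} or Table 1 for the possible degree vectors of inner-cycle candidates when a blocking edge is present. Your closing ``geometric'' remark gestures at the right picture (the $x_i$ mutually block one another), but as stated it is not a substitute for checking the types in the blocked case.
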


\begin{proof}
If $CR$ applies to $x_1, x_2, x_3, x_4$, then
$\overrightarrow{H(x_i)} = (4,4,5,5,5,5,6)$ for $i=1,2,3,4$ if the
reduction is good and   $\overrightarrow{H(x_i)} = (5,5,5,5,5,5,6)$
for two vertices if the reduction is bad  by Lemma \ref{apply-CR}.
The matching subgraph has a unique embedding. Vertices $x_i$ on the
inner cycle of $CC$ mutually block each other and $\tau(x_i) \geq 4$
excludes the use of a $SR$-reduction, which needs $\tau(x_i) = 3$ by
Lemma \ref{apply-SR}. \qed
\end{proof}

Finally, consider the relationship  between reductions and blocking
edges.
A reduction may introduce a blocking edge for many reductions, and
it may be blocked by several blocking edges. It is a many-to-many
relation, say $(j:k)$, where  $k$ may be linear in the size of the
graph. By Lemma \ref{lem:fact} it suffices to consider reductions
with a fixed or a partial coloring, and a reduction  with a blocking
black edge can be discarded. Hence, a candidate $x$ may allow for
three $SR$-reductions towards its red neighbors if $H(x)$ is fixed.
 Each $SR$-reduction has zero, one, or two
  blocking red edges, where zero means that the reduction is good.
Therefore, $j \leq 2$ suffices. There are two bad $SR$-reductions
for a candidate $x$ if there is a single blocking red edge, and $x$
is bad if and only if there are two blocking red edges or the graph
is $XW_6$.

A $SR$-reduction $SR(x \mapsto x_4)$ introduces the planar edge
$(x_1, x_4)$, which simultaneously may close many 4-cycles and then
may block many other candidates and their $SR$-reduction towards
$x_4$, see Fig. \ref{multi}. Similarly, edges $(x_2, x_4)$ and
$(x_6, x_4)$ or the diagonals in $CR$ may be blocking red edges for
many other reductions, as Fig. \ref{multired} illustrates. Such
edges may be removed by another reduction, and then they can
reappear after a further  reduction.

\begin{figure}
  \begin{center}
     \includegraphics[scale=0.3]{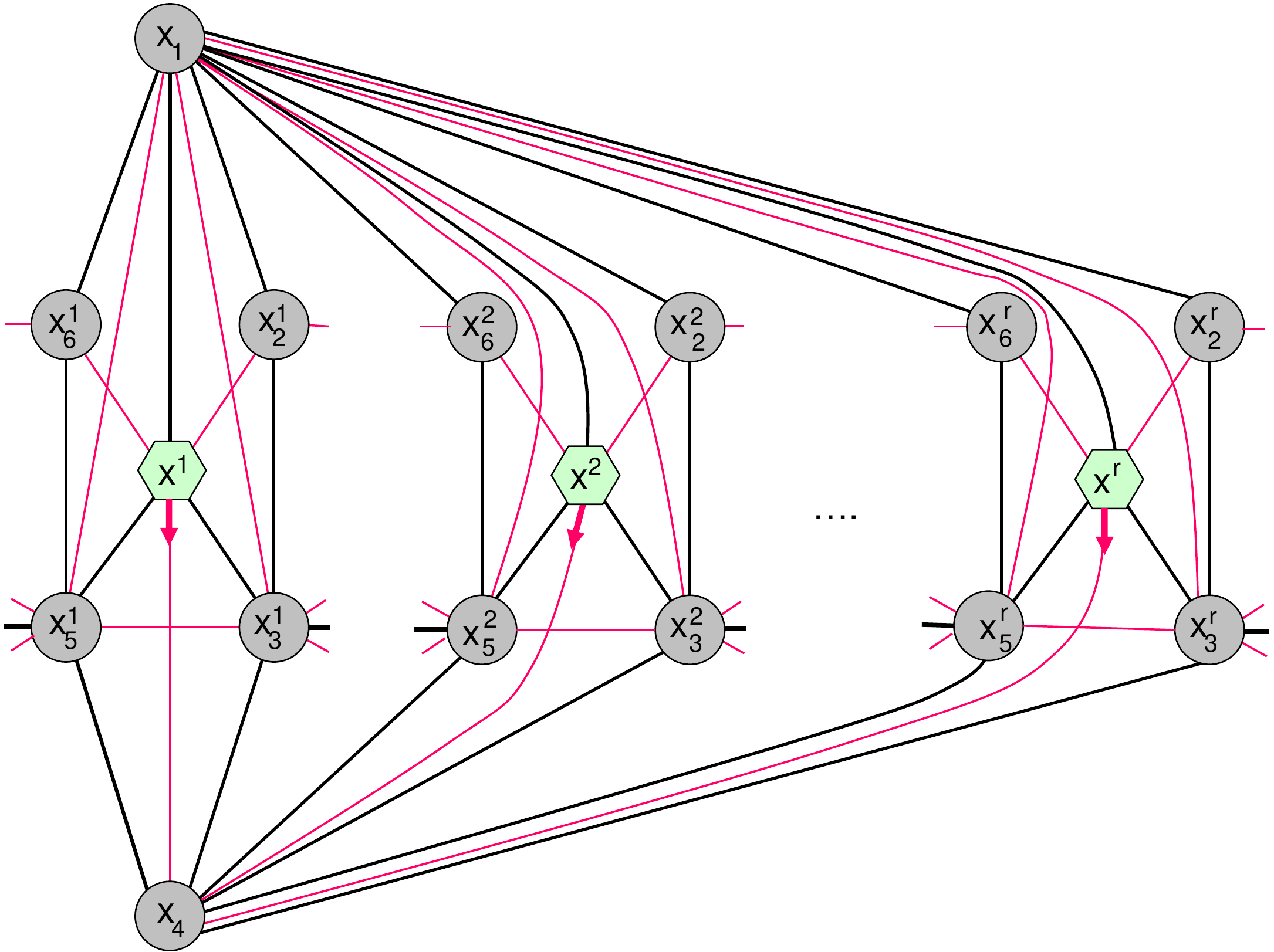}
     \caption{A conflict among   good candidates $x^1, \ldots, x^r$ and
       their $SR$-reduction $SR(x^i \mapsto x_4)$. The first such reduction
       generates a blocking black edge and blocks the other
       reductions. \label{multi}}
   \end{center}
\end{figure}

\begin{figure}
  \begin{center}
     \includegraphics[scale=0.30]{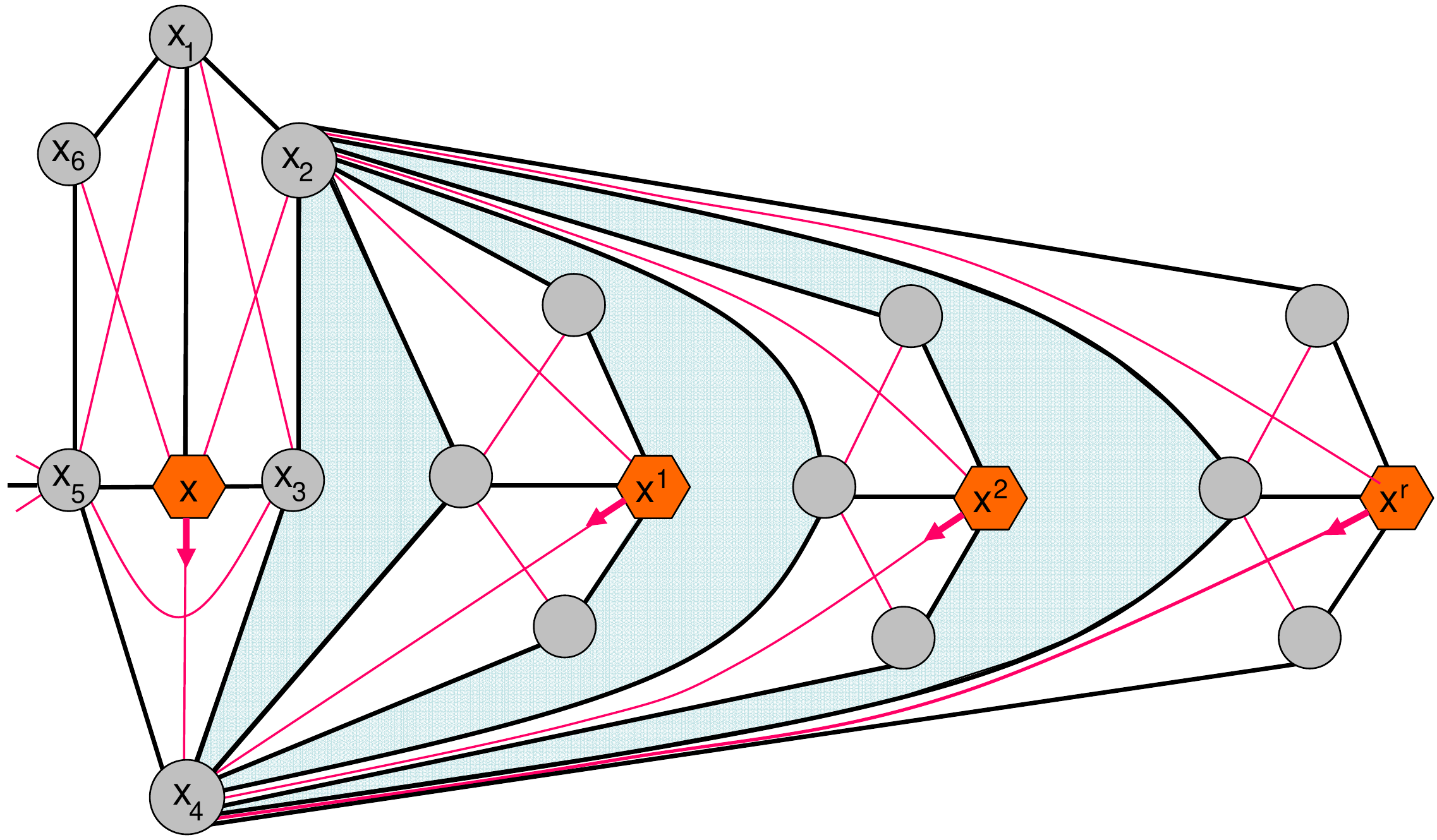}
     \caption{Illustration of a blocking red edge $(x_2, x_4)$ with many
       conflicts. There must be subgraphs in the shaded areas. If edge
       $e = (x_2, x_4)$ is introduced, e.g., by $SR(x \mapsto v_4)$, then $e$
       blocks the  reductions $SR(x^i \mapsto x_4)$ for $i = 1, \ldots, r$.
       If $e$ is removed thereafter, it can be reintroduced by
       $SR(x^j \mapsto x_4)$ for some $j$ and then it blocks the remaining
       $SR$-reductions again.\label{multired}}
   \end{center}
\end{figure}

A direct treatment of all candidates and their reductions may lead
to a quadratic running time. We use lists  for the reductions that
 a  red edge may block. For example, all reductions $SR(x^i \mapsto
 x_4)$ and $SR(x \mapsto x_4)$ in Fig.\ \ref{multired}
 are collected in a list $BAD_e$ if edge $e = (x_2, x_4)$ exists. The existence of a
 blocking red edge associated  with a $SR$-reduction is
determined by the degree vector and the local degree of the
vertices. The outcome is given in Table 1 and is a consequence of
Lemmas \ref{apply-SR}, \ref{apply-CR} and \ref{lem:fact}.

To manage the reductions efficiently, we split each pair of
associated blocking red edges of a reduction and treat each edge
separately. For each red edge that occurs in $H(x)$ for some
candidate $x$, there are three lists of reductions $GOOD_e, BAD_e$
and $WAIT_e$ and two entries of each reduction as given in Table 1.
Hence, there are up to six entries of $SR$-reductions at a
candidate. A reduction $\alpha$ is good if and only if $\alpha$ is
not blocked by an edge if and only if $\alpha$ is stored in $GOOD_e$
and  in $GOOD_f$ or $WAIT_f$ and $\alpha$ is not blocked by a black
edge. Here $e$ and $f$ are the blocking red edges associated with
$\alpha$. If $\alpha$ is blocked by $f$ and is not blocked by $e$,
then $\alpha$ is stored in $WAIT_e$ and in $BAD_f$ and, finally,
$\alpha$ is stored in $BAD_e$ and in $BAD_f$ if both associated
edges are blocking. In consequence, $BAD_e$ is empty if edge $e$
does not exist and, conversely, $GOOD_e$ and $WAIT_e$ are empty if
$e$ exists.

However, it may happen that   reduction $\alpha$ appears in $GOOD_e$
although $\alpha$ is blocked by the other blocking red edge $f$,
and, conversely, that $\alpha$ appears in $WAIT_e$ although $\alpha$
is good. This happens unnoticed to $e$ and $GOOD_e$ if edge $f$ is
(re)introduced or is removed, as indicated in Fig.\ \ref{multired}.
If $\alpha$ is accessed via $GOOD_e$ and $\alpha$ is bad, then there
is an \emph{unsuccessful access}, and $\alpha$ is moved from
$GOOD_e$ to $WAIT_e$.

$CR$-reductions have a higher priority than $SR$-reductions. If it
is encountered that a candidate $x$ has become a vertex of the inner
cycle of $CC$, then its $SR$-reductions are removed from the lists
 and are replaced by the $CR$-reduction. This
situation is detected as described in Lemma \ref{apply-CR} and is
justified by Lemma \ref{lem:blackhole}. In other words, $CR$
overrules $SR$.

In the next step of a computation a $\alpha$ is accessed via
$GOOD_e$ for some edge $e$. Then it is checked whether $\alpha$ is
good and if so,  $\alpha$ is applied  and some further actions are
taken.
Otherwise,  there is an \emph{unsuccessful access}. Then  $\alpha$
is  moved from $GOOD_e$   to $WAIT_e$ if there is the other blocking
red edge $f$, and $\alpha$ is removed from the lists if $\alpha =
SR(x \mapsto v)$ and there is a blocking black edge incident to $v$.

Suppose a reduction $SR(x \mapsto x_4)$ is good and is applied as
shown in Figs.\ \ref{SR} or \ref{SRupdate}.
 The case of a $CR$-reduction is similar, and even simpler. The actual graph is
modified as described by the $SR$-reduction. Vertex $x$ is removed
and so are all reductions at $x$ that are stored in the lists
$GOOD_e, BAD_e$,  and $WAIT_e$. Also all lists with a red edge
$e=(x,y)$ for some $y$ are removed. There are three vertices $y$,
since $x$ is a candidate, and these removals take  constant time. If
$x_4$ was a candidate before, all reductions at $x_4$ are removed,
since $x_4$ is no longer a candidate.

The $SR$-reduction removes edge $e=(x_3, x_5)$. Therefore, $BAD_e$
is renamed to $GOOD_e$. This makes the  stored reductions accessible
in the next step. Conversely,   $GOOD_e$ and $WAIT_e$ are renamed to
$BAD_e$ for $e = (x_2, x_4)$ and $e=(x_6, x_4)$, since these edges
are introduced and may be blocking red edges for other reductions.
Edge $h=(x_1, x_4)$ may become a blocking black edge, see Fig.\
\ref{multi}. Here, no action is taken and reductions blocked by $h$
are removed at an unsuccessful access or if one of $x_1$ or $x_4$ is
removed. Finally, vertices $x_3$ and $x_5$ may change their status
and become a candidate. We consider $x_3$; the case of $x_5$ is
similar. If vertex $x_3$ has become a candidate, then the possible
reductions on $x_3$ are computed and are added to the respective
lists $GOOD_e, BAD_e, WAIT_e$ and $GOOD_f, BAD_f, WAIT_f$ for the
pair of associated  red edges $e$ and $f$. Here $CR$ may overrule
$SR$.

A change of the status of $x_4$ to a non-candidate and of $x_3$ and
$x_5$ to a candidate has side effects on their neighbors if they
were candidates, too. This is illustrated by the color change of
candidates in Figs.\ \ref{fig:G17a} to \ref{fig:G17ahgcdxy} and in
Fig.\ \ref{SRupdate}. However, there is no need for a special
treatment, since everything is done by
 renaming  the lists.

\begin{figure}
  \begin{center}
     \includegraphics[scale=0.35]{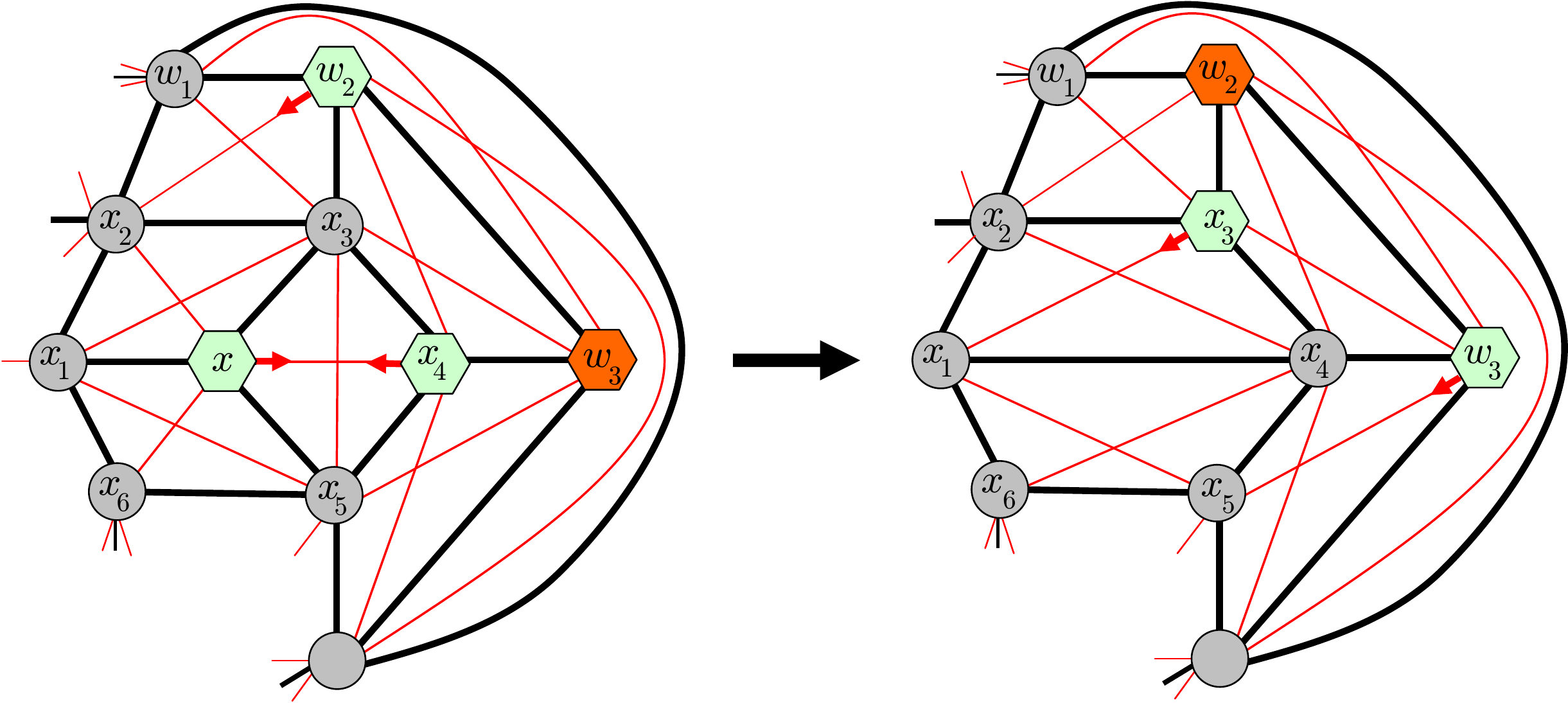}
     \caption{An update by $SR(x \mapsto x_4)$ with candidates in the neighborhood.
       \label{SRupdate}}
   \end{center}
\end{figure}

Our linear time algorithm  operates in three phases. First, it makes
a static check that all vertices of the input graph $G$ of size $n$
have   even degree at least six and that there are $4n-8$ edges.
Then it sweeps the given graph for candidates $x$, checks $H(x)$,
classifies and stores the reductions, and colors as many edges as
possible.  A second sweep may be helpful to clear some partial
colorings. In general, it creates six entries for the
$SR$-reductions at a candidate  $x$ and stores them in the lists
$GOOD_e, BAD_e$, and $WAIT_e$ for each associated blocking red edge
$e$. Two entries are discarded if there is a blocking black edge. If
there is a $CR$-reduction at $x$, then two entries are created and
  $SR$-reductions at $x$ are removed immediately.
  If, surprisingly, the coloring of $G$
is complete, we are done. The planar skeleton is 3-connected and has
a unique embedding and we test straightforwardly whether $G$ is
optima1 1-planar.  In general, there is a computation by a sequence
of steps and each step is a reduction on a presumably optimal
1-planar graph  $G=G_0 \rightarrow G_1 \rightarrow \ldots
\rightarrow G_t = XW_{2k}$ for some $k \geq 3$ and $t \geq 0$. The
algorithm immediately stops and reports a failure if the conditions
for the application of a reduction are not met or there is a
mismatch in the edge coloring between the graph and a reduction.

The algorithm has access to the lists $GOOD_e$, which, internally,
are combined to a superlist. The data structure resembles an
adjacency list for storing graphs. Empty sublists are removed. The
algorithm renames lists  which, internally, means removing and
inserting sublists  and takes $O(1)$ time.
There is no preference or restriction for the manipulation of the
superlist, which can be organized as a stack or as a queue or at
random. The next reduction is taken from the neighborhood of the
previous one if the superlist is organized as a stack, and all
candidates of a given graph are checked sequentially if there is a
queue. Moreover, one may use $CR$-reductions with higher priority
than $SR$-reductions, since they remove four vertices in a step and
have only two entries. Anyhow, there is a linear running time.

\begin{algorithm}\label{alg:a}
  \KwIn{A graph $G$}
  \KwOut{$G$ is optimal 1-planar or fail}
  \KwData{A collection of lists  $GOOD_e$, $BAD_e$ and $WAIT_e$ for some
    edges $e$. Each list contains reductions $\alpha = SR(x \mapsto x_4)$ or
    $\alpha = CR(u,v,y,z)$ with candidates $x, u, v, y, z$.}
  \BlankLine
  \algcomline{Preprocessing}\;
  \lIf{$\neg$($G$ has $4n - 8$ edges and all vertices have even degree
      $\geq 6$)}{%
    \Return fail
  }
  \BlankLine
  \algcomline{Initialization}\;
  \lForEach{candidate~$x$ of $G$}{
    call ADD\_REDUCTIONS
  }

  \BlankLine
  \algcomline{Processing}\;
  \While{there is a reduction $\alpha$ accessible via $GOOD_e$ for some edge
      $e$}{%
    \If{ $\alpha$  is good}{
      apply $\alpha$ and update $G$ and the coloring\;
      remove all reductions at  vertex $x$ from the lists $GOOD_e, BAD_e$
        and $WAIT_e$\;
      \If{$\alpha = SR(x \mapsto x_4)$}{%
        remove all reductions on $x_4$ from the lists $GOOD_e, BAD_e$ and
          $WAIT_e$\;
        remove all lists with $e = (x, y)$\;
        \lIf{ $x_3$ is a candidate after the $SR$-reduction}{%
          call ADD\_REDUCTIONS
        }
        \lIf{ $x_5$ is a candidate after the $SR$-reduction}{%
          call ADD\_REDUCTIONS
        }
        \lFor{$e = (x_3, x_5)$}{%
          rename $BAD_e$ to $GOOD_e$
        }
        \lFor {$e = (x_2, x_4)$ and $e=(x_6, x_4)$}{%
          rename $GOOD_e$ and $WAIT_e$ to $BAD_e$
        }
      }
      \Else(\algcom{$\alpha$ is a $CR$-reduction with outer cycle $(v_1, v_2, v_3, v_4)$}) {%
         apply $\alpha$ and update $G$ and the coloring\;
        \For{$v = v_1, v_2, v_3, v_4$} {%
          \lIf{$v$ is a candidate}{%
             call ADD\_REDUCTIONS
          }
        }
        \lFor {$e = (v_1, v_3)$  and $e= (v_2, v_4)$}{%
          rename $GOOD_e$ and $WAIT_e$  to $BAD_e$
        }
      }
    }
    \Else(\algcom{an unsuccessful access to a reduction}){%
      \lIf{$\alpha = SR(x \mapsto v)$ has a  blocking black edge incident to $v$}{
        remove $\alpha$ from the lists
      }
      \lElse{%
       move $\alpha$ from $GOOD_e$ to $WAIT_e$
      }
    }
}
  \lIf{$G$ is an extended wheel graph}{%
    \Return $G$ is optimal 1-planar
  }
  \lElse{%
    \Return fail
  }
  \caption{OPTIMAL 1-PLANARITY TESTING}
  \label{alg:optimal}
\end{algorithm}

\begin{algorithm}
  \KwIn{A candidate $x$ and lists $GOOD_e$, $BAD_e$ and $WAIT_e$ for some
    edges $e$}
  \KwOut{lists $GOOD_e$, $BAD_e$ and $WAIT_e$}
  \BlankLine
  \If{$x$ belongs to the inner cycle of $CC$ with outer cycle $(v_1, v_2,
      v_3, v_3)$}{
    \For{ $e = (v_1, v_3)$ and $f=(v_2,v_4)$}{%
      \lIf{$e$ exists}{%
        add $CR(x)$ to $BAD_e$ and to $WAIT_f$
      }
      \lElseIf{$f$ exists}{%
        add $CR(x)$ to $BAD_f$ and to $WAIT_e$
      }
      \lElse{%
        add $CR(x)$ to $GOOD_e$ and to $GOOD_f$
      }
    }
    remove all entries with $SR$-reductions of vertices from the inner
      cycle of $CC$ from the lists\;
  }
  \Else(\algcom{$x$ is a candidate for  $SR$-reductions}){
    \ForEach{$SR$-reduction $\alpha$ at $x$ with associated blocking
        red edges $e$ and $f$}{%
      \If{$\alpha = SR(x \mapsto v)$ is blocked by $e$}{%
        add $\alpha$ to $BAD_e$\;
        \lIf{$\alpha$ is blocked by $f$}{%
          add $\alpha$ to $BAD_f$
        }
        \lElse{%
          add $\alpha$ to $WAIT_f$
        }
      }
      \Else{%
        \lIf{$\alpha$ is blocked by $f$}{%
          add $\alpha$ to $WAIT_e$ and to $BAD_f$
        }
        \lElse{%
          add $\alpha$ to $GOOD_e$ and to $GOOD_f$
        }
      }
    }
  }
  \caption{ADD REDUCTIONS}\label{alg:optimal}
\end{algorithm}

Algorithm \ref{alg:a} preserves the following invariant:

\begin{lemma} \label{lem:properties}
Let  $G=G_0 \rightarrow G_1 \rightarrow \ldots \rightarrow G_t$ for
some  $t \geq 0$ be the sequence of graphs computed by the algorithm
on an optimal 1-planar graph $G$, i.e., a successful computation.
For every $i = 1, \ldots,t$ the following holds for $G_i$ and the
lists $GOOD_e, BAD_e$ and $WAIT_e$:

  \begin{enumerate}
  \item Each graph $G_i$ is optimal 1-planar and $G_t = XW_{2k}$
  for some $k \geq 3$.
  \item For each candidate $x$ of $G_i$ three  $SR$-reductions $\alpha_1, \alpha_2, \alpha_3$ at
    $x$ are each stored in the lists of  their associated
    blocking red edges if $x$ does not belong to an inner cycle
    of $CC$. If there is a blocking black edge, then only two
     $SR$-reductions may be stored; the one with an endvertex of
    the blocking black edge as a target may be missing.
  \item If $x$ belongs to an inner
    cycle of $CC$, then one entry of $CR$ is stored  in the
    lists of each associated blocking red edge.

  \item  If $\alpha$ is in $GOOD_e$ or in in $WAIT_e$, then $\alpha$ is not blocked by
  $e$.
  \item A reduction $\alpha$ is in $BAD_e$ if and only if $\alpha$ is blocked by   $e$.
\item A reduction $\alpha$ is good if and only if $\alpha$ is in $GOOD_e$ and in
$GOOD_f$ or $WAIT_f$ for the associated blocking red edges  $e$
    and $f$ and $\alpha$ is not blocked by a blocking black
    edge.

  \item  If there is an entry in the lists of edge $e$, then
  $BAD_e$ is nonempty if and only if $GOOD_e$ and $WAIT_e$ are empty.
  \end{enumerate}
\end{lemma}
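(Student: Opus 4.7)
The plan is to prove the seven invariants simultaneously by induction on $i$, the index of the computation step. The base case is $i=0$, i.e., after the initialization loop has called ADD\_REDUCTIONS on every candidate of $G$. Here invariant~(1) is just the hypothesis that $G$ is optimal 1-planar, and invariants~(2)--(7) reduce to verifying that ADD\_REDUCTIONS distributes the entries of each candidate exactly as prescribed by Table~1. This is a routine case analysis over the seven possible degree vectors from Lemma~\ref{lem:H(x)}, combined with Lemmas~\ref{apply-SR} and \ref{apply-CR} which guarantee that types determine which rule applies, and with Lemma~\ref{lem:fact}(3) which explains why at most two of the three $SR$-entries survive when a blocking black edge is present.

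For the inductive step, assume the invariants hold for $G_{i-1}$ and consider the reduction $\alpha$ applied at step $i$. The access condition of the main loop together with invariants~(4)--(6) at step $i-1$ guarantees that $\alpha$ is accessed via some $GOOD_e$ and that, before the algorithm executes it, $\alpha$ is indeed good (if not, the algorithm takes the \emph{unsuccessful access} branch and no $G_i$ is produced; only the lists are modified, and the required updates — moving $\alpha$ to $WAIT_e$, or deleting it when a blocking black edge is incident to the target — preserve invariants~(2)--(7) trivially). Thus the interesting case is that $\alpha$ is truly good and is applied. Invariant~(1) for $G_i$ then follows from Propositions~\ref{prop:characterization}, \ref{prop3} and \ref{prop4}: a feasible $SR$- or $CR$-reduction on an optimal 1-planar graph yields an optimal 1-planar graph, and termination at an extended wheel graph is guaranteed when the superlist of $GOOD$-entries is exhausted, which by Lemma~\ref{existence} and Lemma~\ref{XW-test} must be an $XW_{2k}$ for the algorithm to report success.

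To verify invariants~(2)--(7) after the update, I would localize the argument: a reduction at $x$ (or at the four vertices of an inner $CC$-cycle) only changes the degrees of vertices within distance two of $x$, so the only lists whose contents can become stale are those associated with red edges incident to the modified subgraph. The algorithm's bookkeeping is designed around this locality. Specifically, for $SR(x\mapsto x_4)$ I would check: (i) entries of $x$ and of a possibly former candidate $x_4$ are removed wholesale (preserving~(2)--(3)); (ii) for the deleted edge $e=(x_3,x_5)$, renaming $BAD_e$ to $GOOD_e$ restores invariants~(4)--(7), since the reductions stored there were blocked precisely by the now-removed $e$; (iii) for the newly created edges $(x_2,x_4),(x_6,x_4)$, renaming $GOOD$ and $WAIT$ lists to $BAD$ correctly reflects that all stored reductions are now blocked; (iv) for vertices $x_3,x_5$ that may have become candidates, the call to ADD\_REDUCTIONS inserts their entries exactly as in the base case. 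For the $CR$ branch, an analogous check on the outer-cycle vertices and their two diagonals suffices, using Lemma~\ref{lem:blackhole} to justify why no $SR$-entries need to be kept on the inner cycle.

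The main obstacle is invariant~(4), the assertion that entries in $GOOD_e$ and $WAIT_e$ are never blocked by $e$ itself. While this clearly holds after the local renaming, the discussion around Figures~\ref{multi} and~\ref{multired} shows that an entry $\alpha$ in $GOOD_e$ may silently become blocked by its \emph{other} associated red edge $f$ (this is exactly the ``unnoticed'' phenomenon). I would resolve this by arguing that invariant~(4) is stated for $e$ alone, not for both associated edges, so the lazy update is legal; the true goodness of $\alpha$ is only asserted in invariant~(6), which combines the status in the $e$-list with the status in the $f$-list, and the lazy mismatch is corrected at the first unsuccessful access via the $WAIT_e$ transition. The subtler bookkeeping point — that a blocking black edge can never be reliably detected by a list update — is handled by invariant~(6)'s explicit side condition and by the branch of the main loop that deletes $\alpha$ when its target is incident to such an edge, whose correctness rests on Lemma~\ref{lem:fact}(3).
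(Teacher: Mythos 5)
Your proposal is correct and follows essentially the same route as the paper: induction over the computation, with the invariants established by the initialization calls to ADD\_REDUCTIONS (per Table~1 and Lemmas~\ref{apply-SR}, \ref{apply-CR}, \ref{lem:fact}) and preserved both by the list renamings of each successful reduction and by the move-to-$WAIT_e$ or deletion performed on an unsuccessful access. The paper's own argument is just a much more compressed version of the same induction, including your key observation that invariant~(4) tolerates the lazy, per-edge bookkeeping while true goodness is only certified by invariant~(6).
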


\begin{proof}
The first property is due to Propositions \ref{prop3} and
\ref{prop4} since the algorithm either applies a reduction or does
not change the graph if there is an unsuccessful access. Properties
2 and 7 hold  for $G_1$ after the initialization, and they are
maintained by each successful reduction $G_i \rightarrow G_{i+1}$
for $1 \leq i < t$. If in the $i$-th step there is an unsuccessful
access to some reduction $\alpha$ in $GOOD_e$, then $\alpha$ is bad
and the red edge $e$ does not exist in $G_i$. Then $\alpha$ is
blocked  by a blocking black edge, in which case $\alpha$ is
removed, or by the other associated blocking red edge $f$, in which
case   $\alpha$ is moved from $GOOD_e$ to $WAIT_e$,  and the
invariant is preserved. \qed
\end{proof}

Concerning the running time, the critical part is the number of
unsuccessful accesses.

\begin{lemma}\label{lem:unsuccessful}
  If $G=G_0 \rightarrow G_1 \rightarrow \ldots \rightarrow G_t$ is a
  successful computation of Algorithm \ref{alg:a} on an optimal 1-planar
  graph $G$ of size $n$, then there are at most $O(n)$ unsuccessful
  accesses.
\end{lemma}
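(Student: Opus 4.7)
The plan is an amortized analysis. First I would establish the overall budget: the total number of reductions $\alpha$ ever placed into the lists is $O(n)$. A reduction is created only via ADD\_REDUCTIONS when a vertex becomes a candidate, and by inspection of Figures~\ref{SR} and~\ref{CR} each successful reduction changes the degrees of only $O(1)$ vertices. Since the number of successful reductions is at most $n-2k-2 = O(n)$ (each removes at least one vertex, cf.\ Theorem~\ref{thm:quadratic}), the total number of candidate-creation events, and hence the total number of distinct $\alpha$'s ever created, is $O(n)$.

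Second, I would split unsuccessful accesses into the two cases of the \texttt{else}-branch of Algorithm~\ref{alg:a}. In Type~(a), $\alpha = SR(x \mapsto v)$ has a blocking black edge incident to $v$ and is then removed from all lists; since each $\alpha$ undergoes this at most once, Type~(a) contributes $O(n)$ in total. In Type~(b), the other associated blocking red edge $f$ exists and $\alpha$ is moved from $GOOD_e$ to $WAIT_e$. By Lemma~\ref{lem:properties}(4) and~(6), the initial placement of $\alpha$ into $GOOD_e$ by ADD\_REDUCTIONS happens only when neither associated blocking red edge (nor a blocking black edge) currently obstructs $\alpha$, i.e.\ only when $\alpha$ is good; hence every Type~(b) access must be preceded by a rename $BAD_e \to GOOD_e$ caused by a successful reduction that removed $e$.

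For the Type~(b) bound I would argue as follows: between two consecutive Type~(b) accesses on the same $\alpha$ via the same $GOOD_e$, the edge $e$ must be reintroduced (moving $\alpha$ from $WAIT_e$ into $BAD_e$ via the rename $WAIT_e,GOOD_e \to BAD_e$) and subsequently removed again (moving $\alpha$ back into $GOOD_e$), and both steps require successful reductions. Thus each Type~(b) access on $\alpha$ via $GOOD_e$ can be charged to a distinct $e$-removal event during $\alpha$'s lifetime. The main obstacle is that one $e$-removal can simultaneously unblock many reductions (Fig.~\ref{multired}), so a naive per-event count would give $O(n^2)$. I would resolve this with the potential $\Phi$ equal to the number of reductions that currently sit in some $GOOD$-list while being bad: each Type~(b) access decreases $\Phi$ by exactly~$1$, giving amortized cost~$0$, while each successful reduction increases $\Phi$ by only $O(1)$ amortized. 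The delicate step is the latter, where I would exploit that although a single rename $BAD_e \to GOOD_e$ may move a bulk of bad reductions into $GOOD_e$, the very next reduction that reintroduces $e$ renames $GOOD_e$ back to $BAD_e$ in one $O(1)$ list operation, flushing these bad entries out without any individual visit, and by Lemma~\ref{lem:fact}(2) analogous reasoning bounds the analogous bulk effect for $CR$-reductions. Telescoping over the computation then yields $\sum \Delta\Phi_{\mathrm{succ}} = O(n)$, whence $U_{(b)} = O(n)$; combined with the $O(n)$ bound on Type~(a), this gives the claimed $O(n)$ total of unsuccessful accesses.
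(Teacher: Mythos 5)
Your overall architecture matches the paper's: bound the number of reductions ever created by $O(n)$, dispose of black-edge failures by a ``removed once'' argument (the paper instead counts the at most $3n$ black edges that ever exist, a cosmetic difference), and observe that between two consecutive unsuccessful accesses of the same $\alpha$ via the same $GOOD_e$ the edge $e$ must be inserted and removed again, each by a successful reduction. Up to that point the proposal is sound and essentially reproduces the paper's proof.

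The gap is exactly where you flag it. Your potential argument needs the claim that each successful reduction increases $\Phi$ (the number of bad reductions sitting in $GOOD$-lists) by $O(1)$ amortized, and the ``flushing'' justification does not deliver it. When a successful reduction removes $e$ and renames $BAD_e$ to $GOOD_e$, up to $r=\Theta(n)$ reductions that are still bad (because of their other associated edge) land in $GOOD_e$ at once, so $\Phi$ jumps by $r$ at a single event. The later rename $GOOD_e\to BAD_e$ only disposes of entries the algorithm happened not to touch; since the superlist is processed nondeterministically (``stack, queue, or at random''), nothing prevents the algorithm from individually accessing each of those $r$ entries, at cost one unsuccessful access apiece, before $e$ is reintroduced. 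So the flush cannot be used to cap the number of touches, and telescoping $\sum\Delta\Phi_{\mathrm{succ}}$ gives only $O(n^2)$. The paper closes this step differently: it invokes the combinatorial fact that each red edge is an associated blocking edge of only two $SR$-reductions, so each removal of $e$ releases only a bounded number of entries whose subsequent access can fail, making the charge to successful reductions constant-to-one. Some bound of this kind on the per-edge fan-out (or an argument that the released entries cannot all still be blocked by their second edge) is the missing ingredient; without it your Type~(b) count is not established.
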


\begin{proof}
Clearly, there are at most $n$ successful reductions. First, there
are at most $3n$ unsuccessful accesses by blocking black edges,
since, in total, $G_1, \ldots, G_t$ have at most $3n$ black edges.
Graph $G$ has $2n-4$ black edges and each $SR$-reduction introduces
one black edge.

 Suppose, reduction $\alpha$ is accessed via
$GOOD_e$. Then $\alpha$ is not blocked by $e$ by Lemma
\ref{lem:properties}. If the access is unsuccessful by the other
associated blocking red edge, then $\alpha$ is moved to $WAIT_e$.
Suppose that $\alpha$ is accessed a second time via $GOOD_e$. Then
 $\alpha$ was moved from
$WAIT_e$ to $BAD_e$ when edge $e$ was inserted and from $BAD_e$ to
$GOOD_e$ when $e$ was removed by another reduction. Hence, there
were two successful reductions in between. As each edge may block
two $SR$-reductions, the number of unsuccessful reductions by
blocking red edges is bounded from above by the number of successful
reductions. In total, there are at most $4n$ unsuccessful
reductions.
 \qed
\end{proof}

In summary, we can state:

\begin{theorem}
  A graph $G$ is optimal 1-planar if and only if Algorithm \ref{alg:a}
  reduces $G$ to an extended wheel graph. If $G$ is optimal 1-planar, then
  a 1-planar embedding can be computed. The
  algorithm runs in linear time.
\end{theorem}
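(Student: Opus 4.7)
The plan is to separate the theorem into three claims---correctness of the if-and-only-if, existence of a computable 1-planar embedding, and the linear running time bound---and to build each on invariants already in place. The skeleton for correctness is induction on the number of reductions applied, anchored by Lemma~\ref{existence} (a good candidate always exists in a reducible graph) and by the list-invariants of Lemma~\ref{lem:properties} (a good reduction is exactly one that is stored in $GOOD_e$ and in $GOOD_f$ or $WAIT_f$ for its associated blocking red edges $e,f$ and has no blocking black edge).

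For the forward direction I would induct on $|V(G)|$. The base case is Lemma~\ref{XW-test}: when $G$ is already an extended wheel graph the algorithm verifies this in linear time. Otherwise, by Propositions~\ref{prop3} and~\ref{prop4} combined with Lemma~\ref{existence}, the current graph admits a good reduction $\alpha$. By Lemma~\ref{lem:properties}, $\alpha$ is currently stored in some $GOOD_e$ and is reachable via the superlist after pruning boundedly many stale entries by unsuccessful accesses. Applying $\alpha$ (Lemmas~\ref{apply-SR}, \ref{apply-CR}) yields a smaller optimal 1-planar graph, to which the induction hypothesis applies. For the converse, every step of the algorithm is the inverse of a $Q_v$-splitting or a $Q_4$-cycle addition, so by Proposition~\ref{prop4} each intermediate graph --- in particular $G$ --- is optimal 1-planar whenever the algorithm terminates at some $XW_{2k}$.

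For the embedding, the edge coloring is progressively committed by every reduction through Table~1; the terminal graph $XW_{2k}$ has a known 1-planar embedding (Proposition~\ref{prop:characterization}), and reversing the reduction sequence produces a 1-planar embedding of $G$. Ambiguities from partial colorings (the case $\overrightarrow{H(x)}=(3,4,5,5,5,6,6)$ of Lemma~\ref{localtest}, or the matched pair in a $CR$-reduction governed by Lemma~\ref{lem:uniqueCC}) are resolved the moment the ambiguous pair is incident to another reduction or, at the latest, when it is absorbed into the final extended wheel graph.

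For the running time I would decompose the cost as: preprocessing $O(n)$ (degree sum and parity check); initialization $O(n)$ since \textsc{ADD\_REDUCTIONS} inspects the constant-size neighborhood of each candidate and inserts at most six list entries; the main loop contributes $O(1)$ per successful reduction by Corollary~\ref{good-bad}, and there are at most $n$ such reductions because each removes at least one vertex; unsuccessful accesses total $O(n)$ by Lemma~\ref{lem:unsuccessful}; and the final extended-wheel test is $O(n)$ by Lemma~\ref{XW-test}. Crucially, list renaming is $O(1)$ because only two list-pointers change hands and no scan of contents takes place outside of an actual application.

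The hard part will be the running-time bookkeeping. One must verify that the global ($j{:}k$) many-to-many interaction between a blocking red edge and the reductions it blocks never forces a full scan, and that the charging argument of Lemma~\ref{lem:unsuccessful} is airtight under the subtle stale-entry phenomenon illustrated in Figs.~\ref{multi} and~\ref{multired}: each repeated unsuccessful access at the same reduction $\alpha$ via $GOOD_e$ must be separated from the previous one by at least two successful reductions (one inserting, one removing the other associated blocking red edge $f$), which amortizes repeats against progress. Once this amortization is confirmed and the coexistence of black and red blocking edges is checked against Lemma~\ref{lem:fact}, Lemmas~\ref{lem:properties} and~\ref{lem:unsuccessful} together deliver the $O(n)$ bound.
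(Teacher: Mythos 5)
Your proposal is correct and follows essentially the same route as the paper: correctness rests on the list invariants of Lemma~\ref{lem:properties} (whose first property already encodes the appeal to Propositions~\ref{prop3} and~\ref{prop4}), termination on the existence of a good candidate and its guaranteed accessibility through some $GOOD_e$ list, the embedding on the progressive commitment of the edge coloring culminating in the known embedding of $XW_{2k}$, and the running time on the $O(1)$ cost per reduction together with the $O(n)$ bound on unsuccessful accesses from Lemma~\ref{lem:unsuccessful} and the extended-wheel test of Lemma~\ref{XW-test}. The only difference is presentational: you make explicit the induction and the role of Lemma~\ref{existence}, which the paper leaves implicit in its one-line appeal to Lemma~\ref{lem:properties}.
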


\begin{proof}
  The correctness follows  from Lemma \ref{lem:properties}. If $G$
  is reducible, then every reduction adds a partial embedding, which
  ultimately results in the unique embedding of $G$, otherwise,
  there is an embedding of an extended wheel graph.
  Clearly, the
  preprocessing and initialization phases take linear time, since each
  candidate and its reductions can be checked in constant time.
  Each successful
  reduction decreases the size at least by one and  takes $O(1)$ time, and there
  are  $O(n)$ unsuccessful accesses  by Lemma \ref{lem:unsuccessful}.
  Considering the maximum degree $d$ of a  vertex, it takes $O(1)$ time to test
  that a graph is not an extended
  wheel graph, since $d=n-2$ must hold for an optimal 1-planar graph of size $n$ \cite{s-s1pg-86}.
  Finally, the test for   $XW_{2k}$ takes $O(k$) time by Lemma \ref{XW-test}.
  Hence, each phase of the algorithm runs in linear time. \qed
\end{proof}

There is an immediate speed-up of the algorithm. If a reduction is
accessed, then it is checked whether the vertex of the reduction is
good. Thereby one considers three possible $SR$-reductions at a
time. Secondly, $CR$-reductions are preferred over $SR$-reductions,
since they remove four vertices at a time and lead to larger
extended wheel graphs and a faster termination of the algorithm.
Moreover, one can simplify the algorithm and avoid the bookkeeping
in lists if the graph is $5$-connected. Then the $SR$-reduction is
necessary and sufficient \cite{s-s1pg-86} and all updates are local.
The situations illustrated in Figs.\ \ref{multi} and \ref{multired}
cannot occur.

\begin{lemma}\label{4-cycle}
  There is a separating 4-cycle or a blocking vertex if a reduction is
  blocked by a (black or red) blocking edge.
\end{lemma}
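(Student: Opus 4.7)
The plan is to split into cases on the type of blocking edge and on which reduction is blocked, exhibiting in each case a specific 4-cycle of $G$ whose role is either to separate the graph or to force a degree-six vertex to act as a blocking vertex.

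For a blocking black edge of an $SR$-reduction, say $SR(x\mapsto x_4)$ blocked by $(x_1,x_4)$ with the labels of Fig.~\ref{SR}, I would examine the 4-cycle $C=(x_1,x_2,x_3,x_4)$. Three of its edges are hexagon edges of $CS$ and hence black, so by Lemma~\ref{circles} the fourth edge $(x_1,x_4)$ is also black and $C$ lies entirely in the planar skeleton. I would then argue that $C$ cannot bound a face: otherwise, by Proposition~\ref{prop:characterization}, that face would carry a kite with crossing pair $(x_1,x_3),(x_2,x_4)$, but the quadrilateral face $(x,x_1,x_2,x_3)$ of the skeleton already pairs $(x_1,x_3)$ with $(x,x_2)$, which is forbidden since each red edge is crossed exactly once. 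Since the planar skeleton is a 3-connected quadrangulation (Proposition~\ref{prop:characterization}), every non-facial 4-cycle is separating, so $C$ is the desired separating 4-cycle.

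For a blocking red edge $e=(x_2,x_4)$ of $SR(x\mapsto x_4)$, let $w=x_3$, the unique common black neighbor of $x_2$ and $x_4$ in $H(x)$. My plan is to study the closed region $R$ bounded by the Jordan arc of $e$ together with the arcs $(x_2,x_3)$ and $(x_3,x_4)$; its interior avoids $x,x_1,x_5,x_6$. In the first subcase, if $R$ contains a vertex of $G$, then any path from that vertex to the outside must cross $\partial R$, so it passes through a vertex of $\{x,x_2,x_3,x_4\}$ and the 4-cycle $(x,x_2,x_3,x_4)$ is separating. In the second subcase, $R$ is empty of vertices; the unique crossing partner of $e$ still enters $R$, and by a Jordan-curve argument it must terminate at a boundary vertex; since it cannot share an endpoint with $e$, that endpoint is $x_3$. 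Any additional black neighbor of $x_3$ would have to lie inside a face of the planar skeleton, which is impossible, so $x_3$ gains exactly one extra neighbor and reaches degree six; being a common black neighbor of $x$ and $x_4$, it becomes a blocking vertex. The $CR$ case is symmetric: Lemma~\ref{circles} forbids a blocking black edge there, and for a blocking red diagonal $e=(v_1,v_3)$ the same dichotomy based on the region bounded by $e$ and the path $v_1$-$v_2$-$v_3$ gives either a separating 4-cycle or $v_2$ as a blocking vertex of degree six.

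The main obstacle I anticipate is the empty-region branch: I have to make two claims rigorous, namely, that via the Jordan curve theorem the crossing partner of $e$ cannot avoid the boundary of $R$, and that the forced new neighbor raises the relevant vertex's degree to exactly six (and not more). The latter rests on the fact that the quadrilateral faces of the planar skeleton are vertex-free, so no further black edges can attach at that vertex. A secondary technicality is to verify that the degree vectors of Lemma~\ref{localtest} admitting a partial coloring do not affect the conclusion; one checks that the two admissible relabelings of the ambiguous neighbors yield symmetric arguments.
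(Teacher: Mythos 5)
Your overall dichotomy is the right one and matches the paper's: for a blocking black edge the hexagon edges close an all-black non-facial $4$-cycle, and for a blocking red edge $e=(x_2,x_4)$ everything hinges on whether $x_3$ is a corner of the kite of $e$ (then $x_3$ has degree six and is a blocking vertex) or not (then one gets a separating $4$-cycle). Your black-edge case and your empty-region branch are essentially correct.

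However, the first subcase of the red-edge case has a genuine gap. You claim that a path from a vertex inside $R$ to the outside must pass through a vertex of $\{x,x_2,x_3,x_4\}$ because it crosses $\partial R$. But $\partial R$ contains the red edge $e$ itself, and $e$ is crossed exactly once by its kite partner $(u,v)$. When $R$ is nonempty, the kite corner $u$ on the $x_3$-side of $e$ cannot equal $x_3$ (otherwise the kite face $(x_2,x_3,x_4,v)$ would make $R$ empty), so $u$ lies \emph{strictly inside} $R$ while $v$ lies outside; the edge $(u,v)$ therefore crosses $\partial R$ without touching any of $x,x_2,x_3,x_4$. Consequently the cycle $(x,x_2,x_3,x_4)$ is \emph{not} separating: after deleting its vertices, the interior of $R$ remains attached to the rest of the graph through $(u,v)$. (Note also that $x$ does not even lie on $\partial R$, and the Jordan curve of $(x,x_2,x_3,x_4)$ bounds an empty region inside the hexagon rather than enclosing $R$.) The repair is exactly what the paper does: route the cycle through the kite corner, taking $(x_2,x_3,x_4,u)$. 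All four of its edges are black, hence uncrossed, so any escape from the region between the paths $x_2$--$x_3$--$x_4$ and $x_2$--$u$--$x_4$ must use one of its four vertices, and this region contains the further neighbors of $x_3$ (which exist precisely because $x_3$ is not blocking). The same correction is needed in your $CR$ case, where the paper likewise uses the black cycles $(v_1,u,v_3,v_4)$ and $(v_1,v,v_3,v_4)$ through the kite corners of the blocking diagonal rather than a cycle containing the crossed red edge region.
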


\begin{proof}
  If $SR(x \mapsto x_4)$ is blocked by the black edge $(x_1, x_4)$, then
  it closes the 4-cycles $(x_1, x_2, x_3, x_4)$ and $(x_1, x_6, x_5, x_4)$,
  and these are separating, since they isolate $x$ from the further black
  neighbors of $x_3$ and $x_5$, respectively,  see Fig.\
  \ref{fig:blockingedges}.
  Accordingly, if there is a red edge $(x_2, x_4)$ and $x_3$ is not
  blocking, then there exist two vertices $u$ and $v$ such that the edge
  $(u,v)$ crosses $(x_2, x_4)$. Then $(x_2, x_3, x_4, u)$ and $(x_2, x_3,
  x_4, v)$ are separating 4-cycles isolating the further neighbors of $x_3$.

  If   edge $(v_1,v_3)$ is blocking for $CR$ with outer cycle
  $(v_1,v_2,v_3,v_4)$, then $(v_1,v_3)$ is red, since $(v_1,v_2)$ and
  $(v_2,v_3)$ are black by Lemma \ref{good-bad}. There is an edge $(u,v)$
  crossing $(v_1,v_3)$ if $v_2$ and $v_4 $ are not blocking. Then
  $(v_1,v,v_3,v_4)$ and $(v_1,u,v_3,v_4)$ are separating 4-cycles. \qed
\end{proof}

Schumacher \cite{s-s1pg-86} has shown that every $5$-connected
optimal 1-planar graph can be reduced to an extended wheel  graph
using only $SR$-reductions. Conversely, a $CR$-reduction must be
used if there is a separating $4$-cycle.

\begin{corollary}
There is a linear-time algorithm to test whether a graph is a
$5$-connected optimal 1-planar graph.
\end{corollary}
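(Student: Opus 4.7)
The plan is to combine Algorithm~\ref{alg:a} with a linear-time check of 5-connectivity. First I run Algorithm~\ref{alg:a} on the input graph $G$ in $O(n)$ time; if it rejects, then $G$ is not optimal 1-planar and a fortiori not a 5-connected optimal 1-planar graph. Otherwise the algorithm produces a 1-planar embedding of $G$ and, by Proposition~\ref{prop:characterization}, its planar skeleton $G_P$, a 3-connected quadrangulation. Because $G$ has $4n-8=O(n)$ edges, the input is sparse, and it suffices to certify the absence of separating 4-cycles in $G_P$, which by Proposition~\ref{prop:characterization} is equivalent to 5-connectivity of $G$; this check can be carried out in $O(n)$ time by scanning the faces of the quadrangulation and the common neighborhoods of diagonal vertex pairs in the embedding.

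An alternative, explicitly suggested at the end of Section~\ref{sect:lineartime}, is to run a streamlined variant of Algorithm~\ref{alg:a} which uses only $SR$-reductions. By Schumacher's theorem (Proposition~\ref{prop:properties}), every 5-connected optimal 1-planar graph reduces to an extended wheel graph using $SR$-reductions alone, so $CR$ is never required. By Lemma~\ref{4-cycle}, blocking edges imply separating 4-cycles or blocking vertices; in the absence of separating 4-cycles, the multi-blocking phenomena of Figures~\ref{multi} and \ref{multired} cannot occur, and any remaining blocking is strictly local to the immediate neighborhood of a single candidate. Hence the bookkeeping with lists $GOOD_e$, $BAD_e$, $WAIT_e$ becomes unnecessary, and each $SR$-reduction, together with the status updates of the affected candidates, touches only vertices at distance at most two from its center. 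Since each such step takes $O(1)$ time and at most $n$ reductions are performed, the streamlined algorithm runs in linear time, and the existence of a good candidate at every intermediate stage is granted by Lemma~\ref{existence}.

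I expect the main obstacle to lie in the backward direction of correctness for the streamlined variant, namely that any optimal 1-planar graph which is not 5-connected must indeed be rejected. Here a separating 4-cycle in $G$ corresponds to a $CC$-matching whose four inner candidates mutually block each other for $SR$ by Lemma~\ref{lem:blackhole}, so $SR$ alone cannot dissolve this configuration. One must argue that the reduction process, which never introduces a new separating 4-cycle (since both $SR$ and, in the full algorithm, $CR$ preserve optimal 1-planarity), eventually isolates such a $CC$-block; at that stage the streamlined algorithm has no applicable $SR$ and halts short of any extended wheel graph, correctly reporting failure. Combining either route with a linear-time recognition of $XW_{2k}$ (Lemma~\ref{XW-test}) yields the desired linear-time test.
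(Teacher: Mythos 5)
Your second route is exactly the paper's proof: the paper disposes of this corollary in one sentence, ``restrict Algorithm 1 to use only $SR$-reductions,'' justified by Schumacher's result (Proposition~\ref{prop:properties}) for the forward direction and by the remark that a $CR$-reduction must be used whenever there is a separating $4$-cycle for the backward direction. Your elaboration of the backward direction via Lemma~\ref{lem:blackhole} (the inner cycle of a $CC$-configuration is a ``black hole'' that $SR$ alone can never dissolve, so the $SR$-only process stalls short of an extended wheel graph) is in fact more explicit than what the paper writes, and is the right way to close that gap. Your first route, however, is not the paper's and is the weaker of the two: the claim that separating $4$-cycles of the quadrangulation $G_P$ can be found in $O(n)$ time ``by scanning common neighborhoods of diagonal vertex pairs'' is not substantiated --- a planar bipartite graph can contain $\Theta(n^2)$ four-cycles (e.g.\ $K_{2,n-2}$), so a naive common-neighborhood scan is not obviously linear --- and the equivalence between $5$-connectivity of $G$ and the absence of separating $4$-cycles in $G_P$ is only alluded to, never proved, in the paper. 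So if you keep the first route you would need to supply both of those ingredients; the second route needs neither and is the intended argument.
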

\begin{proof}
We restrict Algorithm 1 to use only $SR$-reductions and it succeeds
if and only if the given graph is a $5$-connected optimal 1-planar
graph. \qed
\end{proof}

\section{Conclusion and Perspectives}

We have added optimal 1-planar graphs to a list of graphs that can
be recognized in linear time. The restriction to optimal graphs is
important, since 1-planarity is \NP-hard, in general.

The algorithm shows that graph $B_{17}$ in Fig.\ \ref{fig:B17}  is
not optimal 1-planar. The graph is obtained from graph $G_{17}$ in
Fig.\ \ref{fig:base} by exchanging edges $(p,s), (c,h)$ and
$(p,h),(c,s)$. Consider candidate $c$ in $B_{17}$. Then $H(c) =
(2,3,4,4,4,5,6)$ violates optimal 1-planarity.

 Combinatorial properties of the $SR$- and $CR$-reductions have
 been studied in
\cite{b-rso1p-16}, where we have shown that every reducible optimal
$1$-planar graph $G$ can be reduced to every extended wheel graph
$XW_{2k}$ for $s \leq k \leq t$, where $s=3$ if and only if $G$ has
a separating $4$-cycle and $s=4$ if and only if $G$ is $5$-connected
and some $t < n$ for graphs of size $n$. The reductions to the small
extended wheel graphs can  also be computed in linear time.

The recognition problem of beyond planar  graphs is \NP-hard, in
general. It is open, whether there are other classes of optimal
graphs with a linear time
  recognition, e.g., optimal IC planar graphs with $\frac{13}{4}n - 6$ edges
  where each vertex is incident to at most one crossing edge \cite{bdeklm-IC-15} or optimal
  2-planar graphs with $5n-10$ edges, where kites from optimal 1-planar
  graphs are replaced by pentagons of $K_5$'s \cite{pt-gdfce-97}.

\subsubsection{Acknowledgements}

I would like to thank Christian Bachmaier and Josef Reislhuber for many
inspiring discussions and their support and the reviewers for their valuable
suggestions.


\bibliographystyle{abbrv}
\bibliography{brandybib}

\begin{thebibliography}{10}

\bibitem{abs-RACNP-12}
E.~N. Argyriou, M.~A. Bekos, and A.~Symvonis.
\newblock The straight-line {RAC} drawing problem is {NP}-hard.
\newblock {\em J. Graph Algorithms Appl.}, 16(2):569--597, 2012.

\bibitem{abbghnr-o1p-15}
C.~Auer, C.~Bachmaier, F.~J. Brandenburg, A.~Glei\ss{}ner, K.~Hanauer,
  D.~Neuwirth, and J.~Reislhuber.
\newblock Outer 1-planar graphs.
\newblock {\em Algorithmica}, published online, 2015.

\bibitem{abgr-1pgrs-15}
C.~Auer, F.~J. Brandenburg, A.~Glei{\ss}ner, and J.~Reislhuber.
\newblock 1-planarity of graphs with a rotation system.
\newblock {\em J. Graph Algorithms Appl.}, 19(1):67--86, 2015.

\bibitem{bce-pc1p-13}
M.~J. Bannister, S.~Cabello, and D.~Eppstein.
\newblock Parameterized complexity of 1-planarity.
\newblock In F.~Dehne, R.~Solis{-}Oba, and J.~Sack, editors, {\em {WADS} 2013},
  volume 8037 of {\em {LNCS}}, pages 97--108, 2013.

\bibitem{bcghk-rfpg-14}
M.~A. Bekos, S.~Cornelsen, L.~Grilli, S.~Hong, and M.~Kaufmann.
\newblock On the recognition of fan-planar and maximal outer-fan-planar graphs.
\newblock In C.~A. Duncan and A.~Symvonis, editors, {\em {GD} 2014}, pages
  198--209, 2014.

\bibitem{BGDMPT-fan-14}
C.~Binucci, E.~D. Giacomo, W.~Didimo, F.~Montecchiani, M.~Patrignani, and I.~G.
  Tollis.
\newblock Fan-planar graphs: Combinatorial properties and complexity results.
\newblock In C.~A. Duncan and A.~Symvonis, editors, {\em {GD} 2014}, pages
  186--197, 2014.

\bibitem{bsw-bs-83}
R.~Bodendiek, H.~Schumacher, and K.~Wagner.
\newblock {B}emerkungen zu einem {S}echsfarbenproblem von {G}. {R}ingel.
\newblock {\em Abh. aus dem Math. Seminar der Univ. Hamburg}, 53:41--52, 1983.

\bibitem{bsw-1og-84}
R.~Bodendiek, H.~Schumacher, and K.~Wagner.
\newblock {\"U}ber 1-optimale {G}raphen.
\newblock {\em Mathematische Nachrichten}, 117:323--339, 1984.

\bibitem{b-ccgg-83}
F.~J. Brandenburg.
\newblock The computational complexity of certain graph grammars.
\newblock In A.~B. Cremers and H.~Kriegel, editors, {\em 6th GI-Conference
  Theoretical Computer Science}, volume 145 of {\em {LNCS}}, pages 91--99.
  Springer, 1983.

\bibitem{b-4m1pg-15}
F.~J. Brandenburg.
\newblock On 4-map graphs and 1-planar graphs and their recognition problem.
\newblock Technical Report {abs/1509.03447 [cs.CG]}, Computing Research
  Repository ({CoRR}), February 2015.
\newblock \url{http://arxiv.org/abs/1509.03447}.

\bibitem{b-rso1p-16}
F.~J. Brandenburg.
\newblock A reduction system for optimal 1-planar graphs.
\newblock Technical Report {abs/1602.06407 [cs.CG]}, Computing Research
  Repository ({CoRR}), February 2016.
\newblock \url{http://arxiv.org/abs/1602.06407}.

\bibitem{bdeklm-IC-15}
F.~J. Brandenburg, W.~Didimo, W.~S. Evans, P.~Kindermann, G.~Liotta, and
  F.~Montecchiani.
\newblock Recognizing and drawing {IC}-planar graphs.
\newblock In E.~Di~Giacomo and A.~Lubiw, editors, {\em {GD} 2015}, volume 9411
  of {\em {LNCS}}, pages 295--308, 2016.

\bibitem{begghr-odm1p-13}
F.~J. Brandenburg, D.~Eppstein, A.~Glei{\ss}ner, M.~T. Goodrich, K.~Hanauer,
  and J.~Reislhuber.
\newblock On the density of maximal 1-planar graphs.
\newblock In M.~van Kreveld and B.~Speckmann, editors, {\em {GD} 2012}, volume
  7704 of {\em {LNCS}}, pages 327--338. Springer, 2013.

\bibitem{bggmtw-gsqs-05}
G.~Brinkmann, S.~Greenberg, C.~Greenhill, B.~D. McKay, R.~Thomas, and
  P.~Wollan.
\newblock Generation of simple quadrangulations of the sphere.
\newblock {\em Discrete Math.}, 305:33--54, 2005.

\bibitem{cm-aoepl-13}
S.~Cabello and B.~Mohar.
\newblock Adding one edge to planar graphs makes crossing number and
  1-planarity hard.
\newblock {\em SIAM J. Comput.}, 42(5):1803--1829, 2013.

\bibitem{cgp-mg-02}
Z.~Chen, M.~Grigni, and C.~H. Papadimitriou.
\newblock Map graphs.
\newblock {\em J. {ACM}}, 49(2):127--138, 2002.

\bibitem{cgp-rh4mg-06}
Z.~Chen, M.~Grigni, and C.~H. Papadimitriou.
\newblock Recognizing hole-free 4-map graphs in cubic time.
\newblock {\em Algorithmica}, 45(2):227--262, 2006.

\bibitem{cmrehl-97}
A.~Corradini, U.~Montanari, F.~Rossi, H.~Ehrig, R.~Heckel, and M.~L{\"o}we.
\newblock Algebraic approaches to graph transformations.
\newblock In G.~Rozenberg, editor, {\em {Handbook of Graph Grammars and
  Computing by Graph Transformation}}, pages 163--245. World Scientific, 1997.

\bibitem{dt-olpt-96}
G.~Di~Battista and R.~Tamassia.
\newblock On-line planarity testing.
\newblock {\em {SIAM} J. Comput.}, 25(5):956--997, 1996.

\bibitem{d-ds1pgd-13}
W.~Didimo.
\newblock Density of straight-line 1-planar graph drawings.
\newblock {\em Inform. Process. Lett.}, 113(7):236--240, 2013.

\bibitem{ehklss-tm1pg-13b}
P.~Eades, S.-H. Hong, N.~Katoh, G.~Liotta, P.~Schweitzer, and Y.~Suzuki.
\newblock A linear time algorithm for testing maximal 1-planarity of graphs
  with a rotation system.
\newblock {\em Theor. Comput. Sci.}, 513:65--76, 2013.

\bibitem{el-racg1p-13}
P.~Eades and G.~Liotta.
\newblock Right angle crossing graphs and 1-planarity.
\newblock {\em Discrete Applied Mathematics}, 161(7-8):961--969, 2013.

\bibitem{e-rdg-86}
R.~B. Eggleton.
\newblock Rectilinear drawings of graphs.
\newblock {\em Utilitas Math.}, 29:149--172, 1986.

\bibitem{GB-AGEFCE-07}
A.~Grigoriev and H.~L. Bodlaender.
\newblock Algorithms for graphs embeddable with few crossings per edge.
\newblock {\em Algorithmica}, 49(1):1--11, 2007.

\bibitem{gm-ltist-01}
C.~Gutwenger and P.~Mutzel.
\newblock A linear time implementation of {SPQR}-trees.
\newblock In J.~Marks, editor, {\em {GD} 2000}, volume 1984 of {\em {LNCS}},
  pages 77--90. Springer, 2001.

\bibitem{heklss-ltao1p-15}
S.~Hong, P.~Eades, N.~Katoh, G.~Liotta, P.~Schweitzer, and Y.~Suzuki.
\newblock A linear-time algorithm for testing outer-1-planarity.
\newblock {\em Algorithmica}, 72(4):1033--1054, 2015.

\bibitem{help-ft1pg-12}
S.-H. Hong, P.~Eades, G.~Liotta, and S.-H. Poon.
\newblock F{\'a}ry's theorem for 1-planar graphs.
\newblock In J.~Gudmundsson, J.~Mestre, and T.~Viglas, editors, {\em {COCOON}
  2012}, volume 7434 of {\em {LNCS}}, pages 335--346. Springer, 2012.

\bibitem{km-mo1ih-13}
V.~P. Korzhik and B.~Mohar.
\newblock Minimal obstructions for 1-immersion and hardness of 1-planarity
  testing.
\newblock {\em J. Graph Theor.}, 72:30--71, 2013.

\bibitem{Kyncl-09}
J.~Kyncl.
\newblock Enumeration of simple complete topological graphs.
\newblock {\em Eur. J. Comb.}, 30(7):1676--1685, 2009.

\bibitem{pt-gdfce-97}
J.~Pach and G.~T{\'o}th.
\newblock Graphs drawn with a few crossings per edge.
\newblock {\em Combinatorica}, 17:427--439, 1997.

\bibitem{p-pte-13}
M.~Patrignani.
\newblock Planarity testing and embedding.
\newblock In R.~Tamassia, editor, {\em Handbook of Graph Drawing and
  Visualization}. CRC Press, 2013.

\bibitem{ringel-65}
G.~Ringel.
\newblock Ein {S}echsfarbenproblem auf der {K}ugel.
\newblock {\em Abh. aus dem Math. Seminar der Univ. Hamburg}, 29:107--117,
  1965.

\bibitem{r-hgg-97}
G.~Rozenberg.
\newblock {\em {H}andbook of {G}raph {G}rammars and {C}omputing by {G}raph
  {T}ransformation}.
\newblock World Scientific, Singapore, 1997.

\bibitem{s-s1pg-86}
H.~Schumacher.
\newblock Zur {S}truktur 1-planarer {G}raphen.
\newblock {\em Mathematische Nachrichten}, 125:291--300, 1986.

\bibitem{s-rm1pg-10}
Y.~Suzuki.
\newblock Re-embeddings of maximum 1-planar graphs.
\newblock {\em {SIAM} J. Discr. Math.}, 24(4):1527--1540, 2010.

\bibitem{w-pg-33}
H.~Whitney.
\newblock Planar graphs.
\newblock {\em Fund. Math.}, 21:73--84, 1933.

\end{thebibliography}

\end{document}